\newtheorem{theorem}{Theorem}
\newtheorem{corollary}{Corollary}
\newtheorem{proposition}{Proposition}
\newtheorem{definition}{Definition}
\newtheorem{assumption}{Assumption}
\newcommand{\NPHard}{$\mathsf{NP}$-hard}
\newcommand{\NP}{$\mathsf{NP}$}
\newcommand{\Poly}{$\mathsf{P}$}
\newcommand{\APX}{$\mathsf{APX}$}
\newcommand{\supp}{\textnormal{supp}}
\newcommand{\poly}{\textnormal{poly}}
\newcommand{\SUP}{\textsf{SUP}}
\newcommand{\defeq}{\vcentcolon=}
\newcommand{\pa}{\mathcal{P}}
\newcommand{\avec}{{\boldsymbol{a}}}
\DeclareMathOperator*{\argmax}{arg\,max}
\title{Designing Menus of Contracts Efficiently:\\ The Power of Randomization \thanks{In the previous version of the paper~\cite{CastiglioniEC22}, we incorrectly claim that the problem of finding an optimal menu of randomized contracts admits a maximum and that an optimal menu can be computed in polynomial time. In this version, we show that the problem does not admit a maximum but only a supremum, and we provide a polynomial-time algorithm that provides a solution whose value is arbitrarily close to that provided by the supremum.}}
\author{
	Matteo Castiglioni\\
	Politecnico di Milano\\
	\texttt{matteo.castiglioni@polimi.it}
	\And
	Alberto Marchesi\\
	Politecnico di Milano\\
	\texttt{alberto.marchesi@polimi.it}
	\And
	Nicola Gatti\\
	Politecnico di Milano\\
	\texttt{nicola.gatti@polimi.it}
}
\begin{document}


\maketitle

We study \emph{hidden-action principal-agent problems} in which a principal commits to an outcome-dependent payment scheme (called \emph{contract}) so as to incentivize the agent to take a costly, unobservable action leading to favorable outcomes.
In particular, we focus on \emph{Bayesian} settings where the agent has \emph{private information}.
This is collectively encoded by the agent's \emph{type}, which is unknown to the principal, but randomly drawn according to a finitely-supported, commonly-known probability distribution.
In our model, the agent's type determines both the probability distribution over outcomes and the cost associated with each agent's action.
In Bayesian principal-agent problems, the principal may be better off by committing to a \emph{menu of contracts} specifying a contract for each agent's type, rather than committing to a single contract.
This induces a two-stage process that resembles interactions studied in classical \emph{mechanism design}: after the principal has committed to a menu, the agent first \emph{reports a type} to the principal, and, then, the latter puts in place the contract in the menu that corresponds to the reported type.
Thus, the principal's computational problem boils down to designing a menu of contracts that incentivizes the agent to report their true type and maximizes expected utility.

Previous works showed that, in Bayesian principal-agent problems, computing an optimal menu of contracts or an optimal (single) contract is \APX-hard, which is in sharp contrast from what happens in non-Bayesian settings, where an optimal contract can be computed efficiently.
Crucially, previous works focus on menus of \emph{deterministic} contracts.
Surprisingly, in this paper we show that, if one instead considers menus of \emph{randomized} contracts defined as probability distributions over payment vectors, then an ``almost-optimal'' menu can be computed in polynomial time.
Indeed, the problem of computing a principal-optimal menu of \emph{randomized} contracts may \emph{not} admit a maximum, but only a supremum.
Nevertheless, we show how to design a polynomial-time algorithm that guarantees the principal with an expected utility arbitrarily close to the supremum.
Besides this main result, we also close several gaps in the computational complexity analysis of the problem of computing menus of deterministic contracts.
In particular, we prove that the problem cannot be approximated up to within any multiplicative factor and it does \emph{not} admit an additive FPTAS unless \Poly~$=$~\NP, even in basic instances with a constant number of actions and only four outcomes.
This considerably extends previously-known negative results.
Then, we show that our hardness result is tight, by providing an additive PTAS that works in instances with a constant number of outcomes.
We complete our analysis by showing that an optimal menu of deterministic contracts can be computed in polynomial time when either there are only two outcomes or there is a constant number of types.


\section{Introduction}

\emph{Principal-agent problems} have recently received a growing attention from the economics and computation community.
These problems model the interaction between a \emph{principal} and an \emph{agent}, where the latter chooses an action determining some externalities on the former.
In this work, we focus on \emph{hidden-action} problems---also known as models with \emph{moral hazard}---in which the principal cannot observe the action adopted by the agent, but only an outcome that is stochastically determined as an effect of such an action.
The agent incurs in a cost for performing the action, while the principal perceives a reward associated with the realized outcome.
Thus, the goal of the principal is to incentivize the agent to take an action resulting in favorable outcomes.
This is accomplished by the principal by committing to a \emph{contract}, which is an outcome-dependent payment scheme defining a payment from the principal to the agent for every possible outcome.

A classical example of principal-agent problem is that of a salesperson (agent) working for a company (principal).
The former has to decide on the level of effort (action) to put in selling some products on behalf of the company.
The latter can observe the total number of products sold by the salesperson (outcome), but it has no information about the actual level of effort undertaken by the salesperson.
This naturally fits hidden-action models.
Moreover, the company usually incentivizes an high level of effort by paying the salesperson on the basis of the number of products that have been actually sold, as it is the case in the classical definition of contract.
Besides this simple example, nowadays principal-agent problems are ubiquitous in digital economies, as they find application in several real-world scenarios, such as, \emph{e.g.}, crowdsourcing platforms~\citep{ho2016adaptive}, blockchain-based smart contracts~\citep{cong2019blockchain}, and healthcare~\citep{bastani2016analysis}.

Most of the computational works on principal-agent problems have focused on the basic setting in which the principal knows everything about the agent, \emph{i.e.}, they know both the probability distribution over outcomes and the cost associated with each agent's action.
Very recently, three concurrent works~\citep{guruganesh2020contracts,castiglioni2021bayesian,alon2021contracts} started the study of much more realistic \emph{Bayesian} principal-agent problems---also known as models with \emph{adversarial selection}---in which the agent has some \emph{private information} that is unknown to the principal.
As it is common in Bayesian models, these works assume that the agent's private information is collectively encoded by an unknown agent's \emph{type}, and that the latter is drawn from a probability distribution over a set of possible types, which is known to the principal.
For instance, in the company-salesperson example described above, the salesperson may have some private features (such as, \emph{e.g.}, experience gained with past works and/or advanced training courses) that determine how effectively the undertaken level of effort converts into sales.

The addition of private information in principal-agent problems establishes an intimate connection with \emph{mechanism design}.
While the latter has received a lot of attention on its own by computational economics---thanks to its widespread application in auction settings---, only the very recent works by~\citet{guruganesh2020contracts}~and~\citet{alon2021contracts} addressed the computational aspects of problems at the interface of the two fields.
Indeed, in Bayesian principal-agent problems, it may be the case that the principal is better off committing to a \emph{menu of contracts} specifying a contract for every possible agent's type, rather than committing to a single contract.
This induces a two-stage process that resembles interactions usually studied in mechanism design.
In particular, as a first stage after the principal's commitment, the agent \emph{reports a type} to the principal, possibly different from their true type.
Then, the interaction goes on as in a non-Bayesian principal-agent problem, with the principal selecting the contract in the menu that corresponds to the reported type.
The principal's goal is to commit to menus of contracts that incentivize the agent to report their true type, choosing an expected-utility-maximizing menu among them.
Notice that proposing menus of contracts is natural in many practical applications.
For instance, in the company-salesperson example, one may imagine the company proposing a portfolio of different payment regimes to the agent, with the latter selecting the preferred one based on their private information.
Intuitively, this could considerably boost revenues of the company with respect to proposing a single contract.

\subsection{Original Contributions}

In this paper, we investigate the computational complexity of finding an optimal menu of contracts for the principal, \emph{i.e.}, one maximizing the principal's expected utility among those that incentivize the agent to truthfully report their type.
In particular, we study general Bayesian settings where the agent's private information determines both the probability distributions and the costs of actions.

In Bayesian principal-agent settings, designing an optimal (single) contract is largely computationally intractable, with the exception of some specific cases~\citep{guruganesh2020contracts,castiglioni2021bayesian}.
\citet{guruganesh2020contracts} unsuccessfully tried to circumvent this issue with menus of contracts, showing that the problem of computing an optimal menu is \APX-hard even in instances with a constant number of actions.
These results are in sharp contrast with what happens in non-Bayesian settings, where an optimal contract can be designed efficiently~\citep{dutting2019simple}.
Crucially, the work by~\citet{guruganesh2020contracts} focuses on menus of \emph{deterministic} contracts, in which no randomization is involved.
The main result of our work is that, if one considers menus of \emph{randomized} contracts, then an ``almost-optimal'' one can indeed be computed in polynomial time in arbitrary Bayesian principal-agent problem instances.
Indeed, the problem of computing a principal-optimal menu of \emph{randomized} contracts may \emph{not} admit a maximum, but only a supremum.
Nevertheless, we show how to design a polynomial-time algorithm that guarantees the principal with an expected utility arbitrarily close to the supremum.
This is surprising, since randomized contracts generalize classical, deterministic ones by specifying probability distributions over payment vectors, so that, after the type-reporting stage, the distribution corresponding to the reported agent's type is employed by the principal to draw a contract that is communicated to the agent.

After introducing all the preliminary definitions that we need in Section~\ref{sec:preliminaries}, we start, in Section~\ref{sec:hardness}, by providing a strong negative result for the problem of computing an optimal menu of deterministic contracts.
This considerably generalizes the negative result by~\citet{guruganesh2020contracts}, as we prove that the problem cannot be approximated up to within any constant multiplicative factor and it does \emph{not} admit an additive FPTAS unless \Poly\ $=$ \NP, even in instances with a constant number of actions and only four outcomes.
We prove our result by resorting to a non-trivial and non-standard reduction from a suitably-defined promise problem related to finding maximal independent sets in undirected graphs with bounded degree.
Let us remark that our negative result is surprising, since an optimal (single) contract can be computed in polynomial time in Bayesian settings with a constant number of outcomes~\citep{guruganesh2020contracts,castiglioni2021bayesian}.

In Section~\ref{sec:ptas}, we close the gaps in the computational analysis of menus of deterministic contracts by showing that our hardness results are indeed tight.
In particular, we provide an additive PTAS for the problem of computing an optimal menu in instances with a constant number of outcomes.
Our approximation scheme works by finding an approximately-incentive-compatible menu of deterministic contracts (\emph{i.e.}, one that does \emph{not} perfectly incentivize the agent to report their true type, according to a suitable definition of approximation introduced for our purposes), which can be shown to provide a good additive approximation of the optimal principal's expected utility.
We prove that such an approximate menu of deterministic contracts can be found in polynomial time by restricting the attention to menus that only employ a ``small'' number of different contracts.
%
Finally, starting from the approximate menu, we show how to recover in polynomial time a menu of deterministic contracts that correctly incentivizes the agent to report their true type, only incurring in a small additional loss in terms of principal's expected utility.

Next, in Section~\ref{sec:simple}, we provide two additional positive results that complete our computational analysis.
In particular, we show that the problem of finding an optimal menu of deterministic contracts can be solved in polynomial time when either there are only two outcomes or there is a constant number of agent's types (and outcomes and actions can be an arbitrary number).

Finally, we conclude with Section~\ref{sec:randomized}, which provides our main result on menus of randomized contracts.
As a first step, we show that the problem of computing a principal-optimal menu of randomized contracts may \emph{not} admit a maximum, but only a supremum.
%
%
%
Then, we show how to design in polynomial time a menu of randomized contracts with principal's expected utility greater than or equal to the value of the supremum minus $\epsilon$, for any given $\epsilon<0$.
To do so, we first show that, for every $\epsilon>0$, there always exists a menu of randomized contracts that achieves principal's expected utility at most $\epsilon$ less than the supremum by using ``small'' payment values, which can be bounded above by $1/\epsilon$ and a suitably-defined exponential function of the instance size.
This is crucial to show that, in order to find the desired menu, we can restrict the attention to randomized contracts placing positive probability on a specific finite set of deterministic contracts, whose size is exponential in the instance size.
Given such a set, we can formulate the problem as a linear program with exponentially-many variables and polynomially-many constraints, whose dual can be solved in polynomial time by means of the \emph{ellipsoid algorithm} provided that a suitable separation oracle can be implemented in polynomial time.
Such an oracle can be formulated as an optimization problem over the finite set of deterministic contracts defined above, which we show that can be solved in time polynomial in the instance size and in $\log(1/\epsilon)$, proving our main result and concluding the paper.

All the proofs omitted from the main body of paper are in the Appendix.

\subsection{Related Works}

Hidden-action principal-agent problems have received considerable attention in the economic literature, as part of a broader subject called \emph{contract theory}~\citep{shavell1979risk,grossman1983analysis,rogerson1985repeated,holmstrom1991multitask} (see the books by~\citet{mas1995microeconomic},~\citet{bolton2005contract},~and~\citet{laffont2009theory} for a detailed treatment of the subject).

Interest on the computational aspects of contract theory have emerged only recently.
In the following, we survey the major computational works on hidden-action principal-agent problems.

\paragraph{Works on non-Bayesian Settings.}
Most of the computational works on principal-agent problems focus on non-Bayesian settings in which the principal knows everything about the agent.
Since in a classical non-Bayesian setting the principal's computational problem can be solved straightforwardly in polynomial time by means of linear programming, all the works on the topic introduced more complicated models.
\citet{babaioff2006combinatorial} study a model with multiple agents (see also its extended version~\citep{babaioff2012combinatorial} and its follow-ups~\citep{babaioff2009free,babaioff2010mixed}), focusing on how complex combinations of agents' actions influence the resulting outcome in presence of inter-agent externalities.
%
%
\citet{dutting2020complexity}~and~\citet{duetting2021combinatorial} study other non-Bayesian principal-agent models, whose underlying structure is combinatorial.
In particular, the former study the case in which the outcome space is defined implicitly through a suitably-defined succinct representation, while the latter address settings in which the agent can select a subset of actions (rather than a single one) out of a set of available actions.
Other works worth citing are~\citep{babaioff2014contract}, which studies contract complexity in terms of the number of different payments specified by contracts, and~\citep{ho2016adaptive}, which proposes an online learning model and solves it by means of multi-armed bandit techniques.
Another important line of work is that initiated by~\citet{dutting2019simple} and aimed at using the computational lens for the efficiency analysis (in terms of principal's expected utility) of \emph{linear contracts} with respect to general ones, where the former are simple, pure-commission contracts that pay the agent a given fraction of the principal's reward associated with the obtained outcome.
In particular,~\citet{dutting2019simple} show that, in non-Bayesian principal-agent settings, linear contracts perform well under reasonable assumptions.

\paragraph{Works on Bayesian Settings.}
There are three works that are arguably the most related to ours, namely~\citep{guruganesh2020contracts,alon2021contracts,castiglioni2021bayesian}.
These works concurrently introduced similar Bayesian principal-agent models, in order to study their computational properties.
In particular,~\citet{guruganesh2020contracts} focus on a model in which the unknown agent's type determines the probability distributions associated to agent's actions.
They analyze linear contracts by extending the work of~\citet{dutting2019simple} from non-Bayesian to Bayesian settings, showing how their efficiency is affected by problem parameters.
\citet{guruganesh2020contracts} also investigate the computational complexity of computing an optimal (single) contract and an optimal menu of (deterministic) contracts, showing that both problems are \APX-hard even in instances with a constant number of actions.
\citet{castiglioni2021bayesian} take a more computational-oriented approach than that of~\citep{guruganesh2020contracts}, by analyzing the efficiency of linear contracts in Bayesian settings with respect to the much more reasonable benchmark defined as the best among tractable contracts, \emph{i.e.}, those computable in polynomial time. 
Furthermore,~\citet{guruganesh2020contracts} only compare contracts in multiplicative terms, while~\citet{castiglioni2021bayesian} investigate bi-approximation (\emph{i.e.}, both multiplicative and additive) guarantees. 
Finally,~\citet{alon2021contracts} study a specific Bayesian principal-agent setting in which the agent's type is single-dimensional.
In particular, they show that, in their setting, an optimal menu of (deterministic) contracts can be computed in polynomial time when the number of actions is constant.
Moreover, in a following preprint~\citep{alon2021contractsarvix}, the same authors introduce menus of randomized contracts in their setting, showing an example in which randomization makes the principal better off by increasing their expected utility with respect to menus of deterministic contracts.

\section{Preliminaries}\label{sec:preliminaries}

In this section, we introduce all the elements needed in the rest of the paper.
Section~\ref{sec:preliminaries_problem} formally describes standard Bayesian principal-agent problems, wile Section~\ref{sec:preliminaries_contracts} introduces our setting in which the principal proposes to the agent a menu of randomized contracts to choose from.

\subsection{The Bayesian Principal-Agent Problem}\label{sec:preliminaries_problem}

An instance of the \emph{Bayesian principal-agent problem} is defined by a tuple $(\Theta,A,\Omega)$, where: $\Theta$ is a finite set of $\ell \coloneqq |\Theta|$ agent's types; $A$ is a finite set of $n \coloneqq |A|$ agents' actions; and $\Omega$ is a finite set of $m \coloneqq |\Omega|$ possible outcomes.\footnote{For the ease of presentation, we assume that all the agent's types share the same action set. All the results in this paper can be easily extended to the case in which each agent's type $\theta \in \Theta$ has their own action set $A_\theta$.}
The agent's type is drawn according to a fixed probability distribution known to the principal. 
We let $\mu \in \Delta_{\Theta}$ be such a distribution, with $\mu_\theta$ denoting the probability of type $\theta \in \Theta$ being selected.\footnote{Given a finite set $X$, we denote with $\Delta_X$ the set of all the probability distributions defined over $X$.}
For every type $\theta \in \Theta$, we denote by $F_{\theta, a} \in \Delta_\Omega$ the probability distribution over outcomes $ \Omega$ when an agent of type $\theta$ selects action $a \in A$, while $c_{\theta, a} \in [0,1]$ is the agent's cost for that action.\footnote{In the rest of this work, we assume that rewards and costs are in $[0,1]$. All the results in this paper can be easily generalized to the case of an arbitrary range of positive numbers, by applying a suitable normalization.}
For the ease of notation, we let $F_{\theta, a, \omega} $ be the probability that $F_{\theta, a}$ assigns to outcome $\omega \in \Omega$, so that $\sum_{\omega \in \Omega} F_{\theta, a, \omega}  =1$.
Each outcome $\omega \in \Omega$ has a reward $r_\omega \in [0,1]$ for the principal.
As a result, when an agent of type $\theta \in \Theta$ selects an action $a \in A$, then the principal achieves an expected reward of $ \sum_{\omega \in \Omega} F_{\theta, a, \omega} \, r_\omega$.

In the standard model, the principal commits to a contract maximizing their expected utility.
A \emph{contract} specifies payments from the principal to the agent, which are contingent on the actual outcome achieved with the agent's action.
We formally define a contract by a vector $p\in \mathbb{R}_+^m$, whose components $p_\omega$ represent payments associated to outcomes $\omega \in\Omega$.
The assumption that payments are non-negative (\emph{i.e.}, they can only be from the principal to the agent, and \emph{not} the other way around) is known as \emph{limited liability} and it is common in contract theory~\citep{carroll2015robustness}.
When an agent of type $\theta \in \Theta$ selects an action $a \in A$, then the expected payment to the agent is $\sum_{\omega \in \Omega} F_{\theta, a, \omega} \, p_\omega$, while their utility is $\sum_{\omega \in \Omega} F_{\theta, a, \omega} \, p_\omega- c_{\theta, a}$.
On the other hand, the principal's expected utility in that case is $\sum_{\omega \in \Omega} F_{\theta, a, \omega} \, r_\omega -\sum_{\omega \in \Omega} F_{\theta, a, \omega} \, p_\omega$.

Given a contract $p\in \mathbb{R}_+^m$, an agent of type $\theta \in \Theta$ plays a \emph{best response}, which is an action that is:
\begin{enumerate}
	\item \emph{incentive compatible} (IC), \emph{i.e.}, it maximizes their expected utility over actions in $A$; and
	\item \emph{individually rational} (IR), \emph{i.e.}, it has non-negative expected utility (if there is no IR action, then the agent abstains from playing so as to preserve the \emph{status quo}).
\end{enumerate}

In the rest of this work, we make the following w.l.o.g. common assumption guaranteeing that IR is always enforced~\citep{dutting2019simple}.
This allows us to focus on IC only.
Intuitively, the following assumption ensures that each agent's type has always an action providing them with a non-negative utility, thus ensuring IR of any IC action.
\begin{assumption}\label{ass:ir}
	There exists an action $a \in A$ such that $c_{\theta, a} = 0$ for all $\theta \in \Theta$.
\end{assumption}

Formally, we denote by $\mathcal{B}_p^\theta \coloneqq \argmax_{a \in A} \left\{ \sum_{\omega \in \Omega} F_{\theta, a, \omega } p_\omega - c_{\theta, a} \right\}$ the set of \emph{best responses} of an agent of type $\theta \in \Theta$ under a contract $p\in \mathbb{R}_+^m$, \emph{i.e.}, given Assumption~\ref{ass:ir}, the set of all the actions that are IC for an agent of type $\theta$ under contract $p$.
As it is common in the literature (see, \emph{e.g.},~\citep{dutting2019simple}), we assume that the agent breaks ties in favor of the principal, selecting a best response that maximizes the principal's expected utility.
In the following, we let $b^\theta: \mathbb{R}_+^m \to A$ be a function returning the best responses played by an agent of type $\theta \in \Theta$, where, for any contract $p\in \mathbb{R}_+^m$, we define $b^\theta(p) \in \argmax_{a \in \mathcal{B}_p^\theta} \left\{ \sum_{\omega \in \Omega} F_{\theta, a, \omega} r_\omega - \sum_{\omega \in \Omega} F_{\theta, a, \omega} \, p_\omega \right\}$.

\subsection{Menus of Randomized Contracts}\label{sec:preliminaries_contracts}

We study Bayesian principal-agent problems in which there is an additional \emph{type-reporting} stage in which the principal proposes to the agent a menu of randomized contracts to choose from.

A \emph{randomized contract} is a probability distribution $\gamma$ over $\mathbb{R}^m_+$, \emph{i.e.}, over the set of vectors $p \in \mathbb{R}^m_+$ representing all the possible contracts.
%
We use $p \sim \gamma$ to denote that the (random) contract $p$ is distributed according to $\gamma$, and write $\mathbb{E}_{p \sim \gamma} [\cdot]$ to indicate the expectation taken with respect to the randomness of $p$.
We denote by $\supp (\gamma)$ the support of $\gamma$.
When $\gamma$ has a finite support, \emph{i.e.}, $|\supp (\gamma)| < \infty$, we let $\gamma_p $ be the probability that $\gamma$ assigns to contract $p \in \mathbb{R}^m_+$.

A \emph{menu of randomized contracts} is defined by a tuple $\Gamma = \left( \gamma^\theta \right)_{\theta \in \Theta}$ specifying a probability distribution $\gamma^\theta$ over $\mathbb{R}^m_+$ for each agent's type $\theta \in \Theta$.

The interaction between the principal and an agent of type $\theta \in \Theta$ goes as follows:
\begin{itemize}
	\item[(i)] the principal publicly commits to a menu $\Gamma = \left( \gamma^\theta \right)_{\theta \in \Theta}$ of randomized contracts;
	\item[(ii)] the agent reports a type $\hat \theta \in \Theta$ to the principal, possibly different from the true type $\theta$;
	%
	\item[(iii)] the principal draws a contract $p \sim \gamma^{\hat \theta}$ and communicates it to the agent;
	\item[(iv)] the agent plays the best-response action $b^\theta(p)$.
\end{itemize}

The goal of the principal is to commit to a utility-maximizing menu of randomized contracts, selecting among those that are \emph{dominant-strategy incentive compatible} (DSIC).\footnote{Notice that, by a revelation-principle-style argument (see the book by~\citet{shoham2008multiagent} for some examples of these kind of arguments), focusing on DSIC menus of contracts is w.l.o.g. when looking for a principal-optimal menu.}
Formally, a menu $\Gamma = \left( \gamma^\theta \right)_{\theta \in \Theta}$ of randomized contracts is DSIC if the following holds:
\begin{align}\label{eq:dsic}
	\mathbb{E}_{p \sim \gamma^\theta} \left[  \sum_{\omega \in \Omega} F_{\theta, b^\theta(p), \omega} \, p_\omega - c_{\theta, b^\theta(p)} \right]  \geq \mathbb{E}_{p \sim \gamma^{\hat \theta}} \left[  \sum_{\omega \in \Omega} F_{\theta, b^\theta(p),\omega} \, p_\omega - c_{\theta, b^\theta(p)}  \right]  \nonumber \\
	 \forall \hat \theta \neq \theta \in \Theta.
\end{align}
Intuitively, the conditions above guarantee that the expected utility of an agent of type $\theta \in \Theta$ under the randomized contract $\gamma^\theta$ is greater than or equal to that obtained under $\gamma^{\hat \theta}$, for any $\hat \theta \neq \theta \in \Theta$.
This ensures that the agent is always better off reporting their true type to the principal.
Then, the principal's goal is to find a menu $\Gamma = \left( \gamma^\theta \right)_{\theta \in \Theta}$ that is optimal for the following problem:
\begin{align}\label{eq:problem}
	\max_{\Gamma = \left( \gamma^\theta \right)_{\theta \in \Theta}} & \quad \sum_{\theta \in \Theta} \mu_\theta \, \mathbb{E}_{p \sim \gamma^\theta} \left[  \sum_{\omega \in \Omega} F_{\theta, b^\theta(p), \omega} \, r_\omega -\sum_{\omega \in \Omega} F_{\theta, b^\theta(p), \omega} \, p_\omega \right] \quad \textnormal{s.t.} \\
	 & \textnormal{Equation~\eqref{eq:dsic}}, \nonumber
\end{align}
whose objective is the principal's expected utility for a DSIC menu of randomized contracts.

We also consider the case, already investigated by~\citet{guruganesh2020contracts}, in which the menu of contracts is made by non-randomized contracts.
Formally, we denote a \emph{menu of deterministic contracts} by a tuple $P = \left(  p^\theta \right)_{\theta \in \Theta}$, containing a contract $p^\theta \in \mathbb{R}^m_+$ for each agent's type $\theta \in \Theta$.
In this case, the principal does \emph{not} draw a contract in step (iii) of the interaction, but they simply adopt contract $p^{\hat \theta}$, where $\hat \theta \in \Theta$ is the type reported by the agent.
Thus, all the definitions above can be specialized to the case of menus of deterministic contracts by dropping expectations.

\section{A Striking Negative Result for Menus of Deterministic Contracts}\label{sec:hardness}

We begin by providing a negative result for the principal's computational problem of finding an optimal menu of deterministic contracts.
%
%
In particular, we prove that it is \NPHard\ to provide any constant multiplicative approximation to the optimal principal's expected utility, even if both the number of outcomes and that of agent's actions are constants.
This is in contrast to what happens with the problem of finding an optimal (single) contract in Bayesian instances, which can be solved in polynomial time when the number of outcomes is constant~\citep{guruganesh2020contracts,castiglioni2021bayesian}.
Moreover, our reduction also shows that the problem does \emph{not} admit an additive FPTAS, unless \Poly\ $=$ \NP.
Let us remark that our reduction considerably strengthens an already-known hardness result for menus of deterministic contracts, which is that by~\citet{guruganesh2020contracts}, who show that the problem is \APX-hard even with a constant number of actions.
Indeed, our result shows that the problem is \emph{not} in \APX, and that this holds even when both the number of outcomes and that of actions are constants.

Our reduction is from the following \emph{promise problem}, which is related to the $\textsf{INDEPENDENT-SET}$ problem on undirected graphs with bounded-degree nodes. 
\begin{definition}[$\textsf{GAP-BOUNDED-IS}_{\alpha,k}$]
	For every $\alpha \in [0,1]$ and $k \in \mathbb{N_+}$, we define \emph{$\textsf{GAP-BOUNDED-IS}_{\alpha,k}$} as the following promise problem:
	\begin{itemize}
		\item \textnormal{\texttt{Input:}} An undirected graph $G=(V,E)$ in which each vertex has degree at most $k$ and a constant $\eta \in \left[ \frac{1}{k},1 \right]$ such that either one of the following is true:\footnote{For $\eta<\frac{1}{k}$, the problem can be trivially solved since a maximal independent set has size at least $\frac{1}{k}$.}
		\begin{itemize}
			\item there exists an independent set (i.e., a subset of vertices such that there is no edge connecting two of them) of size at least $\eta|V|$;
			\item all the independent sets have size at most $\alpha \eta|V|$.
		\end{itemize}
		\item \textnormal{\texttt{Output:}} Determine which of the above two cases hold.
	\end{itemize}
\end{definition}

Notice that, for every constant $\alpha>0$, there exists a constant $k=k(\alpha)$ depending on $\alpha$ such that $\textsf{GAP-BOUNDED-IS}_{\alpha,k}$ is \NPHard~\citep{Alon1995IndependentSet,Trevisan2001Independent}.
This observation is exploited in order to prove the following theorem:

\begin{theorem}\label{thm:inaprox}
	In Bayesian principal-agent problems, the following two results hold even when there are only four outcomes and a constant number of actions.
	\begin{itemize}
		\item For every constant $\alpha>0$, it is \NPHard\ to approximate the principal's expected utility in an optimal DSIC menu of deterministic contracts up to within an $\alpha$ multiplicative factor.
		\item There is no additive \emph{FPTAS} for the problem of finding an DSIC optimal menu of deterministic contracts, unless \Poly\ $=$ \NP.
		\end{itemize}
\end{theorem}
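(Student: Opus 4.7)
The plan is to prove Theorem \ref{thm:inaprox} via a gap-preserving reduction from $\textsf{GAP-BOUNDED-IS}_{\alpha,k}$, exploiting the fact that this promise problem is \NPHard\ for every constant $\alpha > 0$ with a suitable constant $k = k(\alpha)$. Given an instance with graph $G=(V,E)$ of maximum degree $k$ and threshold $\eta \in [1/k, 1]$, I would build a Bayesian principal-agent instance whose type set is $\Theta \cong V$, with uniform prior $\mu_v = 1/|V|$, exactly four outcomes $\Omega = \{\omega_0,\omega_1,\omega_2,\omega_3\}$, and $O(k)$ actions per type (a constant, since $k$ depends only on $\alpha$; the per-type action sets are admissible by the footnote to Section \ref{sec:preliminaries_problem}). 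The rewards $r_\omega$ and transition kernels will be chosen so that any DSIC menu of deterministic contracts $P=(p^v)_{v\in V}$ canonically determines an independent set $S \subseteq V$ and, conversely, every independent set in $G$ is realized by some DSIC menu, with the principal's expected utility proportional to $|S|/|V|$.

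The key gadget is a three-layer action set for each vertex $v$: a null action $a_0$ with cost zero for every type, playing the role required by Assumption \ref{ass:ir}; an \emph{activation} action $a_v$ that rewards the principal only when $p^v$ meets a threshold payment on the high-reward outcome $\omega_1$; and, for every neighbor $u \in N(v)$, a \emph{mimicking} action $a^{v}_{u}$ engineered so that, under the neighbor's contract $p^u$, type $v$'s expected utility from $a^{v}_{u}$ matches what type $u$ would extract from $a_u$ under $p^u$. The two auxiliary outcomes $\omega_2,\omega_3$ play the role of a two-bit signaling channel that distinguishes ``my activation'' from ``my neighbor's activation'' through the type-dependent distributions $F_{v,a}$; this is what forces the construction to use four outcomes rather than fewer.

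The analysis then splits into two symmetric claims. In the YES direction, from any independent set $S$ of size $\geq \eta|V|$ I would explicitly construct a DSIC menu that activates each $v \in S$ and offers a zero contract to the remaining types; because $S$ is independent, no mimicking action is triggered across the menu, so (\ref{eq:dsic}) holds and the principal's utility is at least $\rho\eta$ for a tuneable parameter $\rho$. In the NO direction, I would show that in any DSIC menu the set $S^* \coloneqq \{v : b^v(p^v) = a_v\}$ of activated types must be independent in $G$: if two adjacent activated types $u,v$ coexisted, then $a^{v}_{u}$ would give $v$ strictly more utility under $p^u$ than $a_v$ does under $p^v$, contradicting (\ref{eq:dsic}). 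Hence the principal's utility is at most $\rho\alpha\eta$. For any constant multiplicative factor, the ratio between the two regimes is exactly $\alpha$, so picking $\alpha$ arbitrarily small in the source promise problem yields the first bullet of the theorem. For the second bullet, by choosing $\rho$ so that the absolute gap $\rho\eta(1-\alpha)$ lies in $1/\poly(|V|)$, an additive FPTAS run with $\epsilon$ equal to half this gap would decide the promise problem in $\poly(|V|)$ time, contradicting \Poly\ $=$ \NP; the chosen scaling is consistent with the additive PTAS of Section \ref{sec:ptas} because the latter requires time super-polynomial in $1/\epsilon$ when $\epsilon = 1/\poly(|V|)$.

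The main obstacle I foresee is realizing the mimicking gadget with \emph{only four outcomes}: I need $F_{v,a_v}$, $F_{v,a^{v}_{u}}$, and $F_{u,a_u}$ to be interlinked so that payment vectors $p^u,p^v$ of the activation shape simultaneously (i) make $a_v$ strictly IC and optimal for $v$ under $p^v$, (ii) make $a^{v}_{u}$ strictly IC for $v$ under $p^u$ precisely when $u \in N(v)$, and (iii) keep the principal's tie-breaking consistent across types. Fitting all of these linear constraints into a four-dimensional simplex and checking that a feasible solution exists for every bounded-degree $G$ is the delicate part; once the gadget is validated, the remaining bookkeeping---tuning $\rho$, verifying Assumption \ref{ass:ir}, and combining with the \NPHard ness of $\textsf{GAP-BOUNDED-IS}_{\alpha,k(\alpha)}$---is routine.
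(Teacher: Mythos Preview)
Your high-level plan---reduce from $\textsf{GAP-BOUNDED-IS}_{\alpha,k}$, identify types with vertices, and use per-neighbor ``mimicking'' actions---matches the paper's approach, and your treatment of the two bullets (multiplicative gap from the promise, additive gap by scaling $\rho$ to be inverse-polynomial) is correct. However, your soundness argument has a genuine gap, and the gadget you sketch cannot close it.

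You claim that in any DSIC menu the set $S^* = \{v : b^v(p^v) = a_v\}$ must be independent, because for adjacent $u,v$ the mimicking action $a^{v}_{u}$ would give $v$ more under $p^u$ than $a_v$ gives under $p^v$. But the principal chooses $p^v$ and $p^u$ freely. Nothing prevents the principal from setting $p^v$ so generously that $v$'s utility from $a_v$ under $p^v$ dominates anything $v$ could get under $p^u$; DSIC then holds trivially. The cost is that the principal's own utility from type $v$ becomes small or negative---but that is not a DSIC violation. So you cannot hope to show $S^*$ is independent; what you must show instead is that the set of types from which the principal extracts utility \emph{above a threshold} is independent. This is exactly what the paper proves, and it requires substantially more machinery than a single mimicking action per neighbor.

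Concretely, the paper's construction differs from yours in two essential ways. First, for each neighbor $u$ of $v$ there is not one mimicking action but a \emph{chain} $a_{u1},\dots,a_{u,l-3}$ (with $l = \lceil k/\alpha \rceil$), whose outcome distributions scale geometrically by $2^{-i}$; the IC constraint between $a_{ui}$ and $a_{u,i+1}$ is what forces payments to be large whenever one of these actions is played, capping the principal's utility from non-$a_v$ actions. Second, vertices are encoded on a quarter-circle via $(\cos(\pi v/2s),\sin(\pi v/2s))$ in the $(\omega_1,\omega_2)$-plane; this geometric encoding is what makes a contract tuned to vertex $v$ give strictly less expected payment to the direction of any $u\neq v$, which is needed both for completeness (the menu is DSIC) and for soundness (if adjacent $u,v$ both play $a_u,a_v$, summing the IC constraints $a_v \succ a_{u1}$ and $a_u \succ a_{v1}$ forces one of the two payments to be nearly the full reward). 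Your $O(k)$ action count signals that you are missing the chain, and your informal ``two-bit signaling channel'' on $\omega_2,\omega_3$ does not supply the continuous separation that the circular embedding provides. The remaining bookkeeping you list is indeed routine, but only after these two ingredients are in place.
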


\begin{proof}
	We reduce from $\textsf{GAP-BOUNDED-IS}_{\alpha,k}$.
	In the rest of the proof, given an undirected graph $G = (V,E)$, for the ease of presentation we label vertices with natural numbers, so that we can write $V=\{1,\dots, s\}$, where $s \coloneqq |V|$.
	Given an instance of $\textsf{GAP-BOUNDED-IS}_{\alpha,k}$, letting $l \coloneqq \left\lceil\frac{k}{\alpha}\right\rceil$ and $\rho \coloneqq s^{-3}$, we build an instance of Bayesian principal-agent problem such that:
	\begin{itemize}
		\item \emph{Completeness}: If there is an independent set of $G$ with size at least $\eta s$, than there is a DSIC menu of deterministic contracts with principal's expected utility at least $\frac{1}{2} \eta \rho l 2^{-l}$.
		\item \emph{Soundness}: If all the independent sets of $G$ have size at most $\alpha \eta s$, then the principal's expected utility in any DSIC menu of deterministic contracts is at most $ 4 \alpha \eta  \rho l 2^{-l}$
	\end{itemize}
	Since, for any $\alpha > 0$, there exists $k = k(\alpha)$ such that $\textsf{GAP-BOUNDED-IS}_{\alpha,k}$ is \NPHard, the two points above immediately provide our hardness result for multiplicative approximations. 
	Moreover, by noticing that the difference between principal's expected utilities in the completeness and the soundness cases depends polynomially on the instance size, we also have that there is no additive FPTAS for the problem of finding an optimal menu of deterministic contracts, unless \Poly\ $=$ \NP.

	\paragraph{Construction.}
	%
	%
	There are four outcomes, namely $\Omega = \{  \omega_{1},\omega_{2},\omega_{3},\omega_{4} \}$, with $r_{\omega_{3}}=1$ and all the other rewards equal to zero.
	There is an agent's type $\theta_v\in \Theta$ for each vertex $v \in V$ of $G$, so that $\ell = s$.
	All types are equally probable, and, thus, $\mu_{\theta_v}=\frac{1}{s}$ for every $\theta_v \in \Theta$.
	Each agent's type $\theta_v \in \Theta$ has an action $a_v \in A_{\theta_v}$, and an additional action $a_{ui} \in A_{\theta_v}$ for every vertex $u \in V$ adjacent to $v$, \emph{i.e.}, $(v,u) \in E$, and index $i \in [l-3]$, where $[x] \coloneqq \{1,\dots,x\}$.\footnote{For the ease of presentation, in the instances that we define in the reduction proving Theorem~\ref{thm:inaprox} each agent's type $\theta \in \Theta$ has a different set of actions $A_\theta$. It is easy to modify such instances by adding dummy actions so as to recover ``equivalent'' instances in which each type has the same set of actions available.}
	For every $\theta_v \in \Theta$, we let	
	\begin{itemize}
	\item $F_{\theta_v,a_v,\omega_1}=\frac{1}{4}\cos\left(\pi \frac{v}{2s}\right)$,
	\item $F_{\theta_v,a_v,\omega_2}=\frac{1}{4}\sin\left(\pi \frac{v}{2s}\right)$, 
	\item $F_{\theta_v,a_v,\omega_3}=\frac{1}{4}$, and
	\item $F_{\theta_v,a_v,\omega_4}=1-F_{\theta_v,a_v,\omega_1}-F_{\theta_v,a_v,\omega_2}-F_{\theta_v,a_v,\omega_3}$.
	\end{itemize}
	Moreover, for every pair of adjacent vertices $v \in V, u \in V$ such that $(v,u)\in E$ and index $i \in [l-3]$, an agent of type $\theta_v$ has the following probabilities associated to action $a_{ui}$:
	\begin{itemize}
		\item $F_{\theta_v,a_{ui},\omega_1}=\cos\left(\pi \frac{u}{2s}\right)2^{-i-2}$, 
		\item $F_{\theta_v,a_{ui},\omega_2}=\sin\left(\pi \frac{u}{2s}\right)2^{-i-2}$,
		\item $F_{\theta_v,a_{ui},\omega_3}=2^{-i-2}$, and
		\item $F_{\theta_v,a_{ui},\omega_4}=1-F_{\theta_v,a_{ui},\omega_1}-F_{\theta_v,a_{ui},\omega_2}-F_{\theta_v,a_{ui},\omega_3}$.
	\end{itemize}
	Finally, each agent's type $\theta_v \in \Theta$ has an additional action $\bar a \in A_{\theta_v}$ such that $F_{\theta_v,\bar a,\omega_4}=1$ and $F_{\theta_v,\bar a,\omega}=0$ for all $\omega \neq \omega_4 \in \Omega$.
	Notice that $|A_{\theta_v}| = O( k^2 / \alpha )$ for every $\theta_v \in \Theta$.
	In order to conclude the construction, we need to define action costs.
	In particular, for every agent's type $\theta_v \in \Theta$, we set $c_{\theta_v,a_v}=\frac{1}{4}-\rho l2^{-l}$ and $c_{\theta_v,a_{ui}}=2^{-i-2}-\rho (l-i)2^{-l}$ for every vertex $u \in V$ such that $(v,u) \in E$ and index $i \in [l-3]$.
	Furthermore, the cost of action $\bar a$ is $c_{\theta_v,\bar a}=0$ for every agent's type $\theta_v \in \Theta$.

	Let us remark that we use the $\sin$ and $\cos$ trigonometric functions in order to easily represent the vectors $F_{\theta_v, a}$ in the $\omega_1$--$\omega_2 $ Cartesian coordinate system (see Figure~\ref{fig:circles}).
	One of the main properties of our construction is that, for a type $\theta_v$, no vector  representing an action $a_v$ or $a_{ui}$ is dominated by a convex combination of vectors representing actions with the same costs.
	This is necessary, since an action represented by a dominated vector is never preferred over another action represented by an un-dominated vector, with their costs being the same.
	Notice that we use the functions $\sin$ and $\cos$ for the ease of exposition.
	It is easy to change our instances so that our results continue to hold even if we replace such functions with sufficiently good approximations.\footnote{Indeed, it is sufficient to use approximations that can be represented in memory with a number of bits that is polynomial in the instance size, and, thus, that can be computed in polynomial time.}

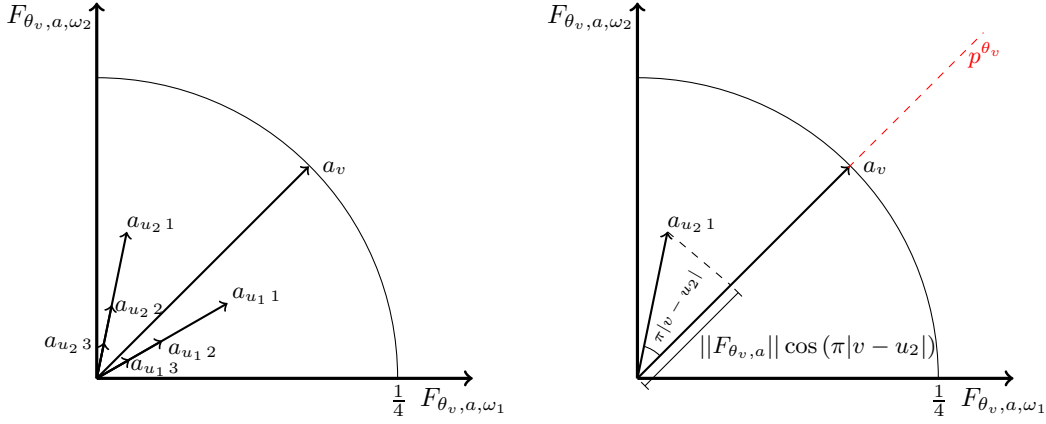
\begin{figure}
	\begin{minipage}{0.43\textwidth}
		\centering
		\begin{tikzpicture}[decoration={markings,
			mark=at position 0.5cm with {\arrow[line width=1pt]{>}},
			mark=at position 2cm with {\arrow[line width=1pt]{>}},
			mark=at position 7.85cm with {\arrow[line width=1pt]{>}},
			mark=at position 9cm with {\arrow[line width=1pt]{>}}}]
			\draw[->, very thick] (0,0) -- (5,0) coordinate (xaxis);
			\draw[->, very thick] (0,0) -- (0,5) coordinate (yaxis);
			\node[text width=0cm] at (4.3,-0.3) {$F_{\theta_v,a,\omega_1}$};
			\node[text width=0cm] at (-1.2,4.8) {$F_{\theta_v,a,\omega_2}$};
			\draw (4,0) arc[start angle=0, end angle=90, radius=4];
			\draw[->,thick] (0,0) -- (2.82,2.82) coordinate (xaxis);
			\node[text width=0cm] at (3.0,2.8) {\footnotesize$a_{v}$};
			\draw[->, thick] (0,0) -- (1.74,1) coordinate (xaxis);
			\node[text width=0cm] at (1.8,1.1) {\footnotesize$a_{u_1 \,1}$};
			\draw[ ->,thick] (0,0) -- (0.4,1.95) coordinate (xaxis);
			\node[text width=0cm] at (0.4,2.05) {\footnotesize $a_{u_2 \, 1}$};
			\draw[->, thick] (0,0) -- (0.87,0.5) coordinate (xaxis);
			\node[text width=0cm] at (0.95,0.35) {\footnotesize$a_{u_1 \,2}$};
			\draw[ ->,thick] (0,0) -- (0.2,0.975) coordinate (xaxis);
			\node[text width=0cm] at (0.24,0.925) {\footnotesize $a_{u_2 \, 2}$};
			\draw[->, thick] (0,0) -- (0.44,0.25) coordinate (xaxis);
			\node[text width=0cm] at (0.45,0.14) {\footnotesize$a_{u_1 \,3}$};
			\draw[ ->,thick] (0,0) -- (0.1,0.48) coordinate (xaxis);
			\node[text width=0cm] at (-0.7,0.45) {\footnotesize $a_{u_2 \, 3}$};
			\node[text width=0cm] at (3.9,-0.3) {$\frac{1}{4}$};
		\end{tikzpicture}
	\end{minipage}
	\begin{minipage}{0.43\textwidth}
		\centering
		\begin{tikzpicture}[decoration={markings,
			mark=at position 0.5cm with {\arrow[line width=1pt]{>}},
			mark=at position 2cm with {\arrow[line width=1pt]{>}},
			mark=at position 7.85cm with {\arrow[line width=1pt]{>}},
			mark=at position 9cm with {\arrow[line width=1pt]{>}}}]
			\draw[->, very thick] (0,0) -- (5,0) coordinate (xaxis);
			\draw[->, very thick] (0,0) -- (0,5) coordinate (yaxis);
			\node[text width=0cm] at (4.3,-0.3) {$F_{\theta_v,a,\omega_1}$};
			\node[text width=0cm] at (-1.2,4.8) {$F_{\theta_v,a,\omega_2}$};
			\draw (4,0) arc[start angle=0, end angle=90, radius=4];
			\draw[->,thick] (0,0) -- (2.82,2.82) coordinate (xaxis);
			\node[text width=0cm] at (3,2.8) {\footnotesize$a_{v}$};
			\draw[->,thick] (0,0) -- (0.4,1.95) coordinate (xaxis);
			\node[text width=0cm] at (0.4,2.05) {\footnotesize $a_{u_2 \, 1}$};
			\draw[color=red, dashed ] (2.8,2.8) -- (4.6,4.6) coordinate (xaxis);
			\node[text width=0cm] at (4.4,4.3) {\textcolor{red}{\footnotesize$p^{\theta_ v}$}};
			\draw (0.3,0.3) arc[start angle=45, end angle=75, radius=0.5];
			\node[text width=2cm,rotate=60] at (0.75,1.35) {\tiny$\pi|v-u_2|$};
			\draw[dashed] (0.4,1.95) -- (1.24,1.24) coordinate (xaxis);
			\node[text width=0cm] at (3.9,-0.3) {$\frac{1}{4}$};
			\draw[] (0.1,-0.1) -- (1.34,1.14) coordinate (xaxis);
			\draw[] (0.05,-0.05) -- (0.15,-0.15) coordinate (xaxis);
			\draw[] (1.29,1.19) -- (1.39,1.09) coordinate (xaxis);
			\node[text width=0cm] at (0.82,0.42) {\footnotesize $||F_{\theta_v,a}|| \cos \left( \pi |v-u_2|\right)$};
		\end{tikzpicture}
	\end{minipage}
	\caption{\textbf{Left:} Graphical representation of the probability vectors $F_{\theta_v, a}$ used in the proof of Theorem~\ref{thm:inaprox} in the $\omega_1$--$\omega_2 $ Cartesian coordinate system. In particular, node $v \in V$ has $2$ adjacent nodes $u_1, u_2 \in V$, and $l=6$. \textbf{Right:} Expected payments of contract $p^{\theta_v}$ for a vertex $v \in V^*$ in the completeness part of the proof of Theorem~\ref{thm:inaprox}.
	The dashed red line represents the direction of the vector encoding $p^{\theta_v}$. The expected payments of actions $a_v$ and $a_{u_2\,1}$ are given by the projections of their probability vectors on $p^{\theta_v}$, times $ ||p^{\theta_v}||$.}
	\label{fig:circles}
\end{figure}

	\paragraph{Completeness.}
	Suppose that there exists an independent set $V^\ast \subseteq V$ of $G$ with size at least $\eta s$.
	From that, we build a DSIC menu of deterministic contracts $P = \left( p^{\theta_v} \right)_{\theta_v \in \Theta}$ with principal's expected utility at least $\frac{1}{2} \eta \rho l 2^{-l}$.
	%
	For every agent's type $\theta_v \in \Theta$ such that $v \in V^*$, we define the contract $p^{\theta_v} \in \mathbb{R}_+^m$ so that $p^{\theta_v}_{\omega_1}=\cos\left(\pi\frac{v}{2s}\right) \left(1-\rho l2^{-l+1}\right)$, $p^{\theta_v}_{\omega_2}=\sin\left(\pi\frac{v}{2s}\right)\left(1- \rho l2^{-l+1}\right)$, and $p^{\theta_v}_{\omega_3} = p^{\theta_v}_{\omega_4} = 0$.
	Moreover, for every agent's type $\theta_v \in \Theta$ such that $v \notin V^*$, we define $p^{\theta_v} = p^{\theta_u}$ for some $u \in V^*$ such that \[\theta_u \in  \argmax_{\theta_u \in \Theta:u \in V^*} \left\{  \max_{a \in A_{\theta_v}} \sum_{\omega \in \Omega} F_{\theta_v,a,\omega} \, p^{\theta_u}_\omega-c_{\theta_v,a} \right\},\] where the term inside the $\argmax$ operator is the expected utility that a utility-maximizing action provides to an agent of type $\theta_v$ under a contract $p^{\theta_u}$ already defined above, since $u \in V^*$.
	Intuitively, contract $p^{\theta_v}$ represents a vector on the $\omega_{1}$--$\omega_{2}$ plane. An action $a_{ui}$ has a large expected payment if the angle between vector $p^{\theta_v}$ and the projection of vector $F_{\theta_v,a_{ui}}$ on the $\omega_{1}$--$\omega_{2}$ plane is small. In particular, the expected payment is $||p^{\theta_v}||_2 ||(F_{\theta_v,a_{ui},\omega_1}, F_{\theta_v,a_{ui},\omega_2})||_2 \cos(\beta)$, where $\beta$ is the angle between the two vectors.
	%
	
	Next, we show that the menu of contracts defined above is DSIC.
	First, we prove that an agent of type $\theta_v \in \Theta$ with $v \in V^*$ gets an expected utility of $\rho \frac{l}{2}2^{-l}$ under contract $p^{\theta_v}$, since the best response of the agent under contract $p^{\theta_v}$ is $a_v$, namely $a_v=b^{\theta_v}\left( p^{\theta_v} \right)$.
	Indeed, the agent's expected utility by selecting action $a_v$ under contract $p^{\theta_v}$ is
	\[
		\frac{1}{4} \cos^2 \hspace{-0.1cm} \left( \pi\frac{v}{2s} \right) \hspace{-0.1cm} \left(1-\rho l2^{-l+1}\right)+ \frac{1}{4} \sin^2 \hspace{-0.1cm} \left( \pi\frac{v}{2s} \right) \hspace{-0.1cm} \left( 1-\rho l2^{-l+1} \right)-\left(\frac{1}{4}-\rho l2^{-l} \right)=\rho \frac{l}{2}2^{-l},
	\]
	where we used $\cos^2 x+ \sin^2 x=1$.
	All the other actions $a_{ui}$, for any vertex $u \in V$ such that $(v,u) \in E$ and index $i \in [l-3]$, provide the agent with an expected utility of
	\begingroup
	\allowdisplaybreaks
		\begin{align*}
			\cos\left( \pi\frac{u}{2s} \right) & 2^{-i-2} \cos\left( \pi\frac{v}{2s} \right) \left(1-\rho l2^{-l+1}\right) \\
			& \hspace{-1cm}+\sin\left(\pi\frac{u}{2s}\right) 2^{-i-2}  \sin\left(\pi\frac{v}{2s}\right) \left(1-\rho l2^{-l+1}\right) - \left( 2^{-i-2}-\rho(l-i)2^{-l} \right)\\
			 & \hspace{-1cm} = \cos\left( \pi \frac{|v-u|}{2s} \right) 2^{-i-2} \left(1-\rho l2^{-l+1}\right)-\left(2^{-i-2}-\rho (l-i)2^{-l}\right) \\
			 & \hspace{-1cm} \le \left( 1-\frac{1}{\pi}\left( \pi \frac{|v-u|}{2s} \right)^2 \right) \left( 1-\rho l2^{-l+1} \right) 2^{-i-2} - \left( 2^{-i-2}-\rho (l-i)2^{-l} \right)\\
			 &\hspace{-1cm} \le \left( 1-\left(  \frac{|v-u|}{2s} \right)^2 \right) \left( 1-\rho l2^{-l+1} \right) 2^{-i-2} - \left( 2^{-i-2}-\rho (l-i)2^{-l} \right)\\
			 & \hspace{-1cm} \le 2^{-i-2}- \rho l 2^{-i-l-1} -  \frac{1}{4s^2} 2^{-i-2} + \frac{1}{4s^2} \rho l 2^{-i-l-1} -\left( 2^{-i-2}-\rho(l-i)2^{-l} \right)\\
			 & \hspace{-1.0cm} =- \rho l2^{-i-l-1} -  \frac{1}{4s^2} 2^{-i-2} + \frac{1}{4s^2} \rho l2^{-i-l-1}+\rho(l-i)2^{-l} \\
			&\hspace{-1.0cm}  =2^{-l} \left(- \rho l2^{-i-1} -  \frac{1}{4s^2} 2^{l-i-2} + \frac{1}{4s^2} \rho l2^{-i-1}+\rho(l-i) \right),
	\end{align*}
\endgroup
	where in the first equality we used $\sin x \sin y + \cos x \cos y\ = \cos|x-y|$, in the first inequality the fact that $\cos x\le 1-\frac{x^2}{\pi}$ for $x\in \left[ 0,\frac{\pi}{2} \right]$, and in the third inequality $|v-u|\ge 1$.
	For $s$ large enough, it holds that $\frac{1}{4s^2} 2^{l-i-2} \ge  \frac{1}{4s^2} \rho l2^{-i-1}+\rho(l-i)$ (recall that $\rho=s^{-3}$ and $l$ is a constant depending only on $k$ and $\alpha$), and, thus, the agent's expected utility by playing any action $a_{ui}$ is smaller than or equal to zero, proving that $a_v=b^{\theta_v}\left( p^{\theta_v} \right)$.
	%
	
	In order to prove that the menu is DSIC, it remains to show that each agent's type $\theta_v \in \Theta$ is \emph{not} better off by reporting a type different from their true type.
	Let us consider any contract $p^{\theta_{u}}$ with $\theta_u \in \Theta$ such that $u \in V^*$ (for contracts $p^{\theta_{u}}$ with $\theta_u \in \Theta$ such that $u \notin V^*$ the reasoning is analogous given how they are defined).
	Since $V^*$ is an independent set of $G$, an agent of type $\theta_v \in \Theta$ has no action $a_{ui}$ available, while, by selecting any action $a_{u'i}$ corresponding to a node $u'\neq u \in V$ and an index $i \in [l-3]$, under contract $p^{\theta_u}$ their expected utility is 
		\begin{align*}
		\cos\left( \pi\frac{u'}{2s} \right) & 2^{-i-2}  \cos\left( \pi\frac{u}{2s} \right) \left( 1-\rho l2^{-l+1} \right)\\
		&+\sin\left( \pi\frac{u'}{2s} \right) 2^{-i-2}  \sin\left( \pi\frac{u}{2s} \right) \left( 1-\rho l2^{-l+1} \right) 
		- \left( 2^{-i-2}-\rho (l-i)2^{-l} \right),
	  \end{align*}
   which is less than or equal to zero by arguments similar to those employed above.
   Finally, by playing action $a_v$ under contract $p^{\theta_{u}}$, the expected utility of an agent of type $\theta_v$ is 
   \begingroup
   \allowdisplaybreaks
		\begin{align*}
			\frac{1}{4}\cos\left( \pi\frac{v}{2s} \right) & \cos\left( \pi\frac{u}{2s} \right) \left(1-\rho l2^{-l+1}\right)\\
			&+ \frac{1}{4} \sin \left( \pi\frac{v}{2s} \right) \sin \left( \pi\frac{u}{2s} \right) \left( 1-\rho l2^{-l+1} \right)-\left( \frac{1}{4}-\rho l2^{-l} \right) \\
			&  = \frac{1}{4} \cos\left( \pi \frac{|v-u|}{2s} \right)  \left( 1-\rho l2^{-l+1} \right) -\left( \frac{1}{4}-\rho l2^{-l} \right) \\
			& \le \frac{1}{4}  \left( 1-\frac{1}{\pi}\left( \pi\frac{|v-u|}{2s}\right)^2 \right) \left(1-\rho l2^{-l+1} \right)-\left(\frac{1}{4} -\rho l2^{-l}\right)\\
			& \leq 2^{-l} \left(- \frac{l}{2} \rho -\frac{1}{16s^2}2^l+\frac{1}{8s^2} \rho l + \rho l \right),
		\end{align*}
	\endgroup
	which, as it is easy to check, is smaller than or equal to zero for $s$ large enough (recall that $\rho=s^{-3}$ and $l$ is a constant depending only on $k$ and $\alpha$).

	Now, we show that the principal's expected utility provided by the menu of deterministic contracts defined above is as desired.
	We do that by first proving that the contribution to the principal's expected utility due to each agent's type $\theta_v \in \Theta$ such that $v \in V^*$ is equal to $\frac{1}{2}\rho l 2^{-l}$.
	Since each agent reports their true type $\theta_v$ and plays action $a_v$, the principal's expected utility is
	\begin{align*}
		F_{\theta_v,a_v, \omega_3} r_{\omega_3} & -  F_{\theta_v,a_v, \omega_1} \, p^{\theta_v}_{\omega_1} - F_{\theta_v,a_v, \omega_2} \, p^{\theta_v}_{\omega_2}\\
		& =\frac{1}{4}-\frac{1}{4} \cos^2\left( \pi\frac{v}{2s} \right) \left( 1-\rho l2^{-l+1} \right)-\frac{1}{4}\sin^2 \left( \pi\frac{v}{2s} \right) \left( 1-\rho l2^{-l+1} \right)\\
		& =\frac{1}{4} -\frac{1}{4} \left( 1-\rho l2^{-l+1} \right) =\frac{1}{4} \rho l2^{-l+1}= \frac{1}{2}\rho l 2^{-l}.
	\end{align*}
	%
	Moreover, we show that the principal's expected utility resulting from each agent's type $\theta_v \in \Theta$ such that $v \notin V^*$ is at least zero.
	For each such a type $\theta_v$, let $u \in V^*$ be a vertex such that $p^{\theta_v}=p^{\theta_u}$.
	We need to consider four different cases.
	First, suppose that $b^{\theta_v}\left( p^{\theta_v} \right)=a_v$.
	Then, the principal's expected utility is equal to
	%
		\begin{align*}
			&\frac{1}{4}- \frac{1}{4} \cos\left( \pi\frac{v}{2s}\right) \cos \left( \pi\frac{u}{2s}\right) \left( 1-\rho l2^{-l+1} \right)\\
			&\hspace{1.8cm}-\frac{1}{4} \sin\left( \pi\frac{v}{2s} \right) \sin\left( \pi\frac{u}{2s} \right)  \left( 1-\rho l2^{-l+1} \right)  \ge \frac{1}{4}- \frac{1}{4}\left( 1-\rho l2^{-l+1} \right)\ge 0,
		\end{align*}
	where in the inequality we use $\sin x \sin y + \cos x \cos y\ \le 1 $.
	
	Second, assume that $b^{\theta_v}\left( p^{\theta_v} \right)=a_{ui}$ for some index $i \in [l-3]$.
	Then, the principal's utility is 
		\begin{align*}
			2^{i-2}-  \cos^2\left( \pi\frac{u}{2s} \right)& 2^{-i-2} \left( 1-\rho l2^{-l+1} \right)- \sin^2 \left( \pi\frac{u}{2s} \right) 2^{-i-2} \left( 1-\rho l2^{-l+1} \right) \\ & \ge2^{i-2}-\cos^2 \left( \pi\frac{u}{2s} \right) 2^{-i-2}- \sin^2 \left( \pi\frac{u}{2s} \right) 2^{-i-2} = 0.
	   \end{align*}
	%
	Third, suppose that $b^{\theta_v}\left( p^{\theta_v} \right)=a_{u'i}$ for some vertex $u' \neq u \in V$ and index $i \in [l-3]$, then the principal's expected utility is
		\begin{align*}
			2^{i-2}-\cos\left(\pi\frac{u}{2s}\right) 2^{-i-2} & \cos\left(\pi\frac{u'}{2s}\right) \left(1-\rho l2^{-l+1}\right)\\
			& - \sin\left( \pi\frac{u}{2s} \right) 2^{-i-2}  \sin\left(\pi\frac{u'}{2s}\right) \left(1-\rho l2^{-l+1}\right) \\
			&\hspace{-2cm} \ge2^{i-2} - \cos\left( \pi\frac{u}{2s} \right) 2^{-i-2} \cos\left( \pi\frac{u'}{2s} \right)\\
			&-\sin \left( \pi\frac{u}{2s} \right) 2^{-i-2}  \sin\left( \pi\frac{u'}{2s} \right)\\
			&\hspace{-2cm}  \ge 0.
		\end{align*}
	%
	Finally, whenever $b^{\theta_v}\left( p^{\theta_v} \right)=\bar a$, expected rewards and payments are zero.
	Hence, we can conclude that the principal's expected utility when the agent's type is $\theta_v \in \Theta$ with $v \notin V^*$ is at least zero.
	Thus, the overall principal's expected utility is at least 
	\[
		\sum_{\theta_{v} \in \Theta : v \in V^* } \mu_{\theta_v} \left(  \sum_{\omega \in \Omega} F_{\theta_v, b^{\theta_v}(p^{\theta_v}), \omega} \, r_\omega -\sum_{\omega \in \Omega} F_{\theta_v, b^{\theta_v}(p^{\theta_v}), \omega} \, p^{\theta_v}_\omega \right) \ge  \frac{1}{2} \eta \rho l2^{-l}.
	\]

	\paragraph{Soundness.}
	We prove that, if all the independent sets of $G$ have size at most $\alpha\eta s$, then the principal's expected utility is at most $ 4 \alpha \eta l \rho 2^{-l}$ in any DSIC menu of deterministic contracts $P = \left( p^\theta \right)_{\theta \in \Theta}$.
	First, we show that, when the agent plays an action $a_{ui}$, then the principal's expected utility is at most $3 \rho 2^{-l}$.
	If an agent of type $\theta_v \in \Theta$ is incentivized to play an action $a_{ui} \in A_{\theta_v}$ with $i=l-3$ by contract $p^{\theta_v}$, \emph{i.e.}, it holds $b^{\theta_v} \left( p^{\theta_v} \right) = a_{ui}$, then it is easy to see that the principal's expected utility is at most $F_{\theta_v,a_{ui},\omega_3} r_{\omega_3}- c_{\theta_v,a_{ui}}= 3 \rho 2^{-l}$.
	Moreover, suppose that the agent plays an action $a_{ui}$ with $i \in [l-4]$.
	Since such an action is incentivized by contract $p^{\theta_v}$ over action $a_{u \, i+1}$, it is easy to check that by IC conditions it must be the case that
	\begingroup
	\allowdisplaybreaks
		\begin{align*}
		p^{\theta_v}_{\omega_1} \sin\left( \pi\frac{u}{2s} \right) & 2^{-i-2}+p^{\theta_v}_{\omega_2} \cos\left( \pi\frac{u}{2s} \right) 2^{-i-2}  -\left( 2^{-i-2}- \rho (l-i)2^{-l} \right)\\
		& \hspace{-1.3cm}\geq  p^{\theta_v}_{\omega_1} \sin\left( \pi\frac{u}{2s} \right) 2^{-i-3}+p^{\theta_v}_{\omega_2} \cos\left( \pi\frac{u}{2s} \right) 2^{-i-3}  -\left( 2^{-i-3}- \rho (l-i-1)2^{-l} \right),
		\end{align*}
	\endgroup
	which implies that
	\begin{align*}
		&\left( p^{\theta_v}_{\omega_1} \sin\left(\pi\frac{u}{2s} \right) +p^{\theta_v}_{\omega_2} \cos\left( \pi\frac{u}{2s} \right) \right)  \left( 2^{-i-2}-2^{-i-3} \right)\\
		& \hspace{4cm}\ge \left( 2^{-i-2}-\rho (l-i)2^{-l} \right) - \left( 2^{-i-3}- \rho(l-i-1)2^{-l} \right).
	\end{align*}
	Hence, 
	\[
		p^{\theta_v}_{\omega_1} \sin\left( \pi\frac{u}{2s} \right) +p^{\theta_v}_{\omega_2} \cos\left( \pi\frac{u}{2s} \right)   \ge \left( 2^{-i-3}- \rho2^{-l} \right) 2^{i+3},
	\]
	 and the agent's expected payment is at least 
	 	\begin{align*}
			p^{\theta_v}_{\omega_1} \sin\left( \pi\frac{u}{2s} \right) 2^{-i-2}+p^{\theta_v}_{\omega_2} \cos\left( \pi\frac{u}{2s} \right) 2^{-i-2}&\ge 2^{-i-2} \left( 2^{-i-3}-\rho 2^{-l} \right)2^{i+3}\\
			&\ge 2^{-i-2}-\rho2^{-l+1} .
	   \end{align*}
 	%
	This implies that the principal's expected utility is at most
	\begin{align*}
		F_{\theta_v,a_{ui},\omega_3} r_{\omega_3}-  F_{\theta_v,a_{ui},\omega_3} \, p^{\theta_v}_{\omega_1} -  F_{\theta_v,a_{ui},\omega_3} \, p^{\theta_v}_{\omega_2} &\le 2^{-i-2} - \left( 2^{-i-2}-\rho2^{-l+1} \right) \\
		&\le \rho2^{-l+1} \\
		&\leq 3 \rho 2^{-l}.
	\end{align*}

	Next, we prove that, given two agent's types $\theta_v \in \Theta$ and $\theta_u \in \Theta$ such that $(v,u) \in E$, then the principal's expected utility due to an agent of type $\theta_v$ or $\theta_u$ is at most $3\rho2^{-l}$.
	As we have shown, when an agent of type $\theta_v$ plays an action different from $a_v$, the principal's expected utility is at most $3\rho2^{-l}$. Hence, the principal's expected utility is strictly greater than $3\rho2^{-l}$ for both the agent's types $\theta_v \in \Theta$ and $\theta_u \in \Theta$ only if they play $a_v$ and $a_u$, respectively. However, in the following we show that if $(v,u) \in E$ and the two agent's type play the actions $a_v$ and $a_u$, then the principal's expected utility is at most $3\rho2^{-l}$ for at least one of the two types.
	Suppose that $p^{\theta_v}$ incentivizes action $a_v$ and that $p^{\theta_{u}}$ incentivizes action $a_u$, namely $b^{\theta_v} \left( p^{\theta_v} \right) = a_{v}$ and $b^{\theta_u} \left( p^{\theta_u} \right) = a_{u}$.
	Then, since an agent of type $\theta_v$ prefers action $a_v$ over action $a_{ui}$ with $i=1$, by IC we have
		\begin{align*}
			p^{\theta_v}_{\omega_1} \frac{1}{4}\sin\left( \pi\frac{v}{2s} \right) & +p^{\theta_v}_{\omega_2} \frac{1}{4} \cos\left( \pi\frac{v}{2s}\right) - \left( \frac{1}{4}-\rho l2^{-l} \right) \\ 
			& \ge p^{\theta_u}_{\omega_1} \frac{1}{8}\sin\left( \pi\frac{u}{2s} \right) +p^{\theta_u}_{\omega_2} \frac{1}{8} \cos\left( \pi\frac{u}{2s}\right)  -\left( \frac{1}{8}-\rho(l-1)2^{-l} \right).
		 \end{align*}
	Similarly, since an agent of type $\theta_u$ prefers action $a_u$ over action $a_{vi}$ with $i=1$, we have
		\begin{align*}
				\frac{1}{4}p^{\theta_u}_{\omega_1} \sin\left( \pi\frac{u}{2s} \right) & +p^{\theta_u}_{\omega_2} \frac{1}{4} \cos\left( \pi\frac{u}{2s} \right) -\left( \frac{1}{4}-\rho l2^{-l}\right)  \\
				& \ge p^{\theta_v}_{\omega_1} \frac{1}{8} \sin\left( \pi\frac{v}{2s} \right) +p^{\theta_v}_{\omega_2} \frac{1}{8} \cos\left( \pi\frac{v}{2s} \right) - \left( \frac{1}{8}-\rho (l-1)2^{-l} \right).
		\end{align*}
	%
	Now, assume that
	\[
		p^{\theta_v}_{\omega_1} \sin\left( \pi\frac{v}{2s} \right) +p^{\theta_v}_{\omega_2} \cos\left( \pi\frac{v}{2s} \right)\ge p^{\theta_u}_{\omega_1} \sin\left( \pi\frac{u}{2s}\right) +p^{\theta_u}_{\omega_2} \cos\left( \pi\frac{u}{2s}\right).
	\]
	The other case is analogous.
	Then, by taking the difference between the two inequalities above, we get
	\begin{align*}
		p^{\theta_u}_{\omega_1} \frac{1}{8}\sin\left( \pi\frac{u}{2s} \right) +p^{\theta_u}_{\omega_2} \frac{1}{8} \cos\left( \pi\frac{u}{2s} \right)   &\ge  \left( \frac{1}{4}-\rho l2^{-l} \right)- \left( \frac{1}{8}-\rho (l-1)2^{-l} \right)\\
		&=\frac{1}{8}- \rho2^{-l}.
	\end{align*}
	Then, the expected payment for an agent of type $\theta_u$ is 
	\[
		p^{\theta_u}_{\omega_1} \frac{1}{4} \sin\left( \pi\frac{u}{2s} \right) +p^{\theta_u}_{\omega_2} \frac{1}{4} \cos\left( \pi\frac{u}{2s} \right) \ge \frac{1}{4}-  \rho 2^{-l+1},
	\]
	 and the principal's expected utility due to agent's type $\theta_u$ is at most $\rho 2^{-l+1} $.

	 As a last step, notice that, for every possible agent's type $\theta_v$ and menu of deterministic contracts $P = \left( p^\theta\right)_{\theta \in \Theta}$, the principal's expected utility is at most $\rho l 2^{-l}$. In particular, we have shown that, by incentivizing an action different from  $a_v$, the principal obtains at most $ 3\rho 2^{-l} $ expected utility. Moreover, by incentivizing action $a_v$, the principal gets at most $F_{\theta_v,a_v,\omega_3} r_{\omega_3}-c_{\theta_v,a_v}=\rho l 2^{-l}$.
	 
 	To conclude the proof, for every possible menu of deterministic contracts $P = \left( p^\theta \right)_{\theta \in \Theta}$, the set of agent's types $\theta_v \in \Theta$ that result in a principal's expected utility strictly greater than $3 \rho2^{-l}$ defines an independent set $V^* \subseteq V$ of $G$.
 	Moreover, for all the agent's types, the principal's expected utility is at most $\rho l 2^{-l}$.
	Hence, the overall principal's expected utility is:
		\begin{align*}
				&\sum_{\theta_v \in \Theta} \mu_{\theta_v} \left(  \sum_{\omega \in \Omega} F_{\theta, b^{\theta_v}(p^{\theta_v}), \omega} \, r_\omega -\sum_{\omega \in \Omega} F_{\theta, b^{\theta_v}(p^{\theta_v}), \omega} \, p^{\theta_v}_\omega  \right) \\
				& \hspace{4cm}\le\frac{1}{s} \left( \sum_{\theta_v \in \Theta: v \in V^*} \rho l 2^{-l} + \sum_{\theta_v \in \Theta: v \notin  V^*} 3\rho 2^{-l} \right)\\
			&  \hspace{4cm} \le \alpha \eta \rho l 2^{-l} + 3 \rho 2^{-l}   \\
			&  \hspace{4cm} \le 4 \alpha \eta \rho l 2^{-l},
		\end{align*}
	where the last inequality follows from $\alpha \eta l \ge \alpha \frac{1}{k} \frac{k}{\alpha}=1$ (recall that $\eta\ge \frac{1}{k}$).
\end{proof}

\section{Menus of Deterministic Contracts: An Additive PTAS with a Constant Number of Outcomes}\label{sec:ptas}

In the previous section, we showed that, even in Bayesian principal-agent instances with a constant number of outcomes (and actions), the problem of computing a utility-maximizing DSIC menu of randomized contracts does \emph{not} admit any multiplicative approximation that can be computed in polynomial time.
In this section, we study the domain of additive approximations, where we provide a PTAS for the problem that works in settings with a constant number of outcomes.
This result is tight, since by Theorem~\ref{thm:inaprox} the problem does \emph{not} admit an additive FPTAS unless \Poly\ $=$ \NP.

In order to define our PTAS, we first need to introduce some auxiliary results.
In the following, we will rely on an abstract definition of \emph{approximately-incentive-compatible} menu of deterministic contracts, which combines two levels of approximation, one related to the IC conditions on agent's actions, and the other to the DSIC constraints that the menu should satisfy.
In particular, approximate menus do \emph{not} only specify a contract for each agent's type, but also a tuple of action recommendations, one for each type, so that the action recommended to each agent's type is approximately IC under the contract corresponding to that type.
Formally:
\begin{definition}[$\epsilon$-Approximate Menu of Deterministic Contracts]\label{def:apx_menu}
	Given any Bayesian principal-agent instance $(\Theta, A, \Omega)$ and $\epsilon > 0$, an \emph{$\epsilon$-approximate menu of deterministic contracts} is a tuple $\left( p^\theta, a^\theta \right)_{\theta \in \Theta}$ of contract-action pairs, with $p^\theta \in \mathbb{R}_{+}^m$ and $a^\theta \in A$ for all $\theta \in \Theta$, such that:
	\begin{align}\label{eq:apxIc}
		&\sum_{\omega \in \Omega} F_{\theta, a^\theta, \omega} \, p^\theta_\omega - c_{\theta, a^\theta} \geq \sum_{\omega \in \Omega} F_{\theta, b^\theta(p^{\hat \theta}), \omega} \, p^{\hat \theta}_\omega - c_{\theta, b^\theta(p^{\hat \theta})} -\epsilon  & \forall \theta, \hat \theta \in \Theta.
	\end{align}
\end{definition}
Intuitively, an $\epsilon$-approximate menu of deterministic contracts defined by $\left( p^\theta, a^\theta \right)_{\theta \in \Theta}$ satisfies the following two conditions: each action $a^\theta$ is $\epsilon$-approximately IC for an agent of type $\theta \in \Theta$ under contract $p^\theta$ (see the case $\theta = \hat \theta$ in Equation~\eqref{eq:apxIc}), and, additionally, the menu is $\epsilon$-approximately DSIC assuming type $\theta$ plays $a^\theta$ when reporting their true type (see cases $\theta\neq\hat \theta$ in Equation~\eqref{eq:apxIc}).
In the following, when we refer to the principal's expected utility in an $\epsilon$-approximate menu of deterministic contracts $\left( p^\theta, a^\theta \right)_{\theta \in \Theta}$, we mean the expected utility that the principal gets if each agent's type $\theta \in \Theta$ truthfully reports their type and plays $a^\theta$ under contract $p^\theta$.
Formally, the principal's expected utility in $\left( p^\theta, a^\theta \right)_{\theta \in \Theta}$ can be written as follows:
\[
	\sum_{\theta \in \Theta} \mu_\theta \left(  \sum_{\omega \in \Omega} F_{\theta, a^\theta, \omega} \, p^\theta_\omega - c_{\theta, a^\theta} \right).
\]
%

Now, we show that, starting from an $\epsilon$-approximate menu of deterministic contracts with principal's expected utility equal to $APX$, we can recover in polynomial time a (non-approximate) DSIC menu of deterministic contracts providing the principal with an expected utility at least of $APX-2\sqrt{\epsilon}$.
This allows us to focus on the computation of $\epsilon$-approximate menu of deterministic contracts.
Formally, we prove the following lemma.
\begin{restatable}{lemma}{lemmaOne}\label{lm:toIC}
	Given a Bayesian principal-agent instance $(\Theta, A, \Omega)$ and $\epsilon > 0$, let $\left( p^\theta, a^\theta \right)_{\theta \in \Theta}$ be an $\epsilon$-approximate menu of deterministic contracts with principal's expected utility $APX$.
	Then, there exists a DSIC menu of deterministic contracts $\overline{P} = \left( \bar p^\theta \right)_{\theta \in \Theta}$ in which the principal's expected utility is at least 
	$APX-2\sqrt{\epsilon}$.
	Moreover, such a menu can be computed in polynomial time.
\end{restatable}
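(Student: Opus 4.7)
The plan is to construct $\overline{P}$ from $(p^\theta, a^\theta)_{\theta \in \Theta}$ via a two-stage perturbation, losing at most $\sqrt{\epsilon}$ of principal's expected utility in each stage. In Stage~1, for each type $\theta$, I would replace $p^\theta$ with an intermediate contract $\tilde p^\theta = (1-\sqrt{\epsilon})\, p^\theta + \sqrt{\epsilon}\, q^\theta$, where the auxiliary $q^\theta$ is a bounded-payment contract that strictly implements $a^\theta$ for type $\theta$ with a sufficient IC margin (obtained by solving an LP of polynomial size in $|A|, |\Omega|$ that minimizes $\sum_{\omega} F_{\theta, a^\theta, \omega}\, q_\omega$ subject to $\sum_{\omega}(F_{\theta, a^\theta, \omega} - F_{\theta, a, \omega})\, q_\omega \geq c_{\theta, a^\theta} - c_{\theta, a}$ for all $a \in A$). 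The convex combination is designed so that the $\epsilon$-approximate IC contribution of $p^\theta$ is offset by the strict-IC contribution of $q^\theta$, making $a^\theta$ an exact best response for type $\theta$ under $\tilde p^\theta$. Because payments lie in $[0,1]$, the mixing increases the expected payment under $a^\theta$ by at most $\sqrt{\epsilon}$, giving the first $\sqrt\epsilon$-loss contribution.

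In Stage~2, starting from $(\tilde p^\theta, a^\theta)_\theta$, I would further adjust the contracts so that Equation~\eqref{eq:dsic} holds exactly. An analogous $\sqrt{\epsilon}$-mixture with contracts that enforce inter-type separation closes the residual $O(\epsilon)$ DSIC slack, yielding the final $\bar p^\theta$ with another $\sqrt\epsilon$-loss. Summing the two contributions gives the claimed $2\sqrt\epsilon$ bound. Both stages reduce to solving polynomially-many LPs of polynomial size in $|A|, |\Theta|, |\Omega|$, so $\overline{P}$ is computable in polynomial time.

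The main obstacle is establishing the existence and bounded expected payment of the auxiliary contracts $q^\theta$ with a sufficient IC margin, together with arguing that the $\sqrt{\epsilon}$-mixing is indeed enough to close the $\epsilon$-slack. This square-root-to-linear relationship between slack and perturbation captures a bilinear tradeoff between scaling cost and incentive-gap improvement: scaling $p^\theta$ by $(1-\sqrt{\epsilon})$ costs only $\sqrt{\epsilon}$ in expected payment but weakens the approximate-IC guarantee only linearly in $\sqrt{\epsilon}$, while adding $\sqrt{\epsilon}\, q^\theta$ boosts the IC gap by $\sqrt{\epsilon}$ times the separation of $q^\theta$. The formal verification that this boost dominates the residual $\epsilon$-slack will likely proceed via a Farkas-type LP duality applied to the $\epsilon$-approximate-IC certificate of the input menu, using Assumption~\ref{ass:ir} to ensure feasibility of the auxiliary LPs.
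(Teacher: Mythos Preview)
Your two-stage plan has a genuine structural gap in Stage~1: the auxiliary contract $q^\theta$ need not exist. Nothing in Definition~\ref{def:apx_menu} forces $a^\theta$ to be implementable for type~$\theta$; for instance, $F_{\theta,a^\theta}$ could coincide with $F_{\theta,a'}$ for some $a'$ with $c_{\theta,a'}<c_{\theta,a^\theta}$ (or more generally lie in the convex hull of cheaper actions' distributions), in which case the LP you write down is infeasible. The $\epsilon$-approximate IC condition only says $a^\theta$ is within $\epsilon$ of optimal, which is perfectly compatible with $a^\theta$ being strictly dominated by a margin up to $\epsilon$. So you cannot in general make $a^\theta$ an \emph{exact} best response by any perturbation.

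A second gap: payments are \emph{not} bounded in $[0,1]$ (only rewards and costs are; see the preliminaries). Even when $q^\theta$ exists, its expected payment under $a^\theta$ can be arbitrarily large, so mixing in $\sqrt{\epsilon}\,q^\theta$ can cost far more than $\sqrt{\epsilon}$ in principal's utility. Stage~2 is also left essentially unspecified; ``contracts that enforce inter-type separation'' is not a construction.

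The paper's proof avoids all of this by \emph{not} insisting that $a^\theta$ remain the best response. It sets $\hat p^\theta_\omega=(1-\sqrt{\epsilon})\,p^\theta_\omega+\sqrt{\epsilon}\,r_\omega$, i.e.\ mixes with the \emph{reward vector} rather than a type-specific implementing contract, and then obtains DSIC for free by assigning to each type its favorite $\hat p^{\theta'}$ across the menu. The point is that under $\hat p^\theta$ the agent's utility contains a $\sqrt{\epsilon}$-weighted copy of the principal's reward, so whatever action $b^\theta(\bar p^\theta)$ the agent switches to must be nearly as good for the principal as $a^\theta$: summing the agent's preference inequality for $\bar p^\theta$ with the $\epsilon$-approximate condition on $p^\theta$ yields $\sum_\omega F_{\theta,b^\theta(\bar p^\theta),\omega}(r_\omega-p^{\theta'}_\omega)\ge \sum_\omega F_{\theta,a^\theta,\omega}(r_\omega-p^\theta_\omega)-\sqrt{\epsilon}$, from which the $2\sqrt{\epsilon}$ loss follows directly. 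The reward vector is always available and bounded, so no feasibility or payment-size issues arise.
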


Next, we show that, in an optimal menu of deterministic contracts, large expected payments are only assigned to agent's types occurring with small probability, as it will be useful in Lemma~\ref{lm:smallSup}.
Given a constant $L\ge 1$ and a menu of deterministic contracts $P=\left( p^\theta \right)_{\theta \in \Theta}$, we let $\Theta(L,P) \subseteq \Theta$ be the set of agent's types such that $\sum_{\omega \in \Omega} F_{\theta,b^\theta(p^\theta),\omega} \, p^\theta_\omega\ge  L$.
Formally, we prove the following:
\begin{restatable}{lemma}{lemmaTwo}\label{lm:smallCosts}
	Given a Bayesian principal-agent instance $(\Theta, A, \Omega)$, let $P = \left( p^\theta \right)_{\theta \in \Theta}$ be an optimal menu of deterministic contracts.
	Then, for every constant $L>1$, it holds $\sum_{\theta \in \Theta(L,P)} \mu_\theta \le \frac{1}{L}$.
\end{restatable}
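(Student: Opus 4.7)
The plan is to argue by contradiction using the zero menu as a baseline. Concretely, I will show that if $q \defeq \sum_{\theta \in \Theta(L,P)} \mu_\theta > 1/L$, then the principal's expected utility under $P$ is strictly negative, which contradicts optimality since there is always a DSIC menu yielding non-negative principal utility.

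First I would exhibit the benchmark DSIC menu: let $\widetilde{P} = (\widetilde p^\theta)_{\theta\in\Theta}$ be the menu in which every contract is the all-zero vector. Since all contracts coincide, the DSIC constraints \eqref{eq:dsic} are trivially satisfied. Under the zero contract, the agent's utility for action $a$ is $-c_{\theta,a}$, so by Assumption~\ref{ass:ir} the agent's best response is a zero-cost action, and the principal then collects $\sum_{\omega}F_{\theta,b^\theta(\widetilde p^\theta),\omega}\,r_\omega \geq 0$ from each type since rewards are non-negative. Hence the principal's expected utility under $\widetilde{P}$ is at least $0$, and so the optimal menu $P$ must also yield expected utility at least $0$.

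Next I would bound the principal's expected utility in $P$ from above in terms of $q$. For each type $\theta\in\Theta$, the principal's contribution is $\sum_\omega F_{\theta,b^\theta(p^\theta),\omega}(r_\omega - p^\theta_\omega)$, which is at most $1$ in absolute reward-term (since $r_\omega\in[0,1]$) minus the expected payment. For $\theta\in\Theta(L,P)$, by definition the expected payment is at least $L$, so the per-type contribution is at most $1-L$; for $\theta\notin\Theta(L,P)$, the expected payment is non-negative by limited liability, so the per-type contribution is at most $1$. Putting the pieces together,
\[
\sum_{\theta\in\Theta}\mu_\theta\sum_{\omega\in\Omega}F_{\theta,b^\theta(p^\theta),\omega}(r_\omega - p^\theta_\omega) \;\leq\; q(1-L) + (1-q)\cdot 1 \;=\; 1 - qL.
\]

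Finally, combining the two bounds: since $P$ is optimal, its expected utility is at least $0$, whence $1 - qL \geq 0$, i.e., $q \leq 1/L$, which is exactly the claim. The only mildly delicate point is justifying that $\widetilde P$ is admissible as a DSIC menu and yields non-negative principal utility, and this is handled cleanly by Assumption~\ref{ass:ir} together with limited liability; there is no real obstacle beyond that.
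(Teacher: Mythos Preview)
Your proof is correct and follows essentially the same approach as the paper: bound the principal's expected utility in $P$ above by $1 - qL$ (reward at most $1$, payment at least $L$ on the heavy types) and compare against the all-zero DSIC menu, which yields non-negative utility. The only difference is cosmetic---you conclude directly that $q\le 1/L$ from $1-qL\ge 0$, whereas the paper phrases the same computation as a contradiction.
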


The next step is to show that there exists a menu of deterministic contracts that employs a small set of \emph{different} contracts, while providing almost optimal principal's expected utility.
This allows us to represent a menu of deterministic contracts in a more succinct way.
In particular, a menu that uses $k \in \mathbb{N}_+$ different contracts can be represented by a matrix $T \in \mathcal{T}\coloneqq \mathbb{R}_+^{k \times m}$ and a function $f^T :\Theta\rightarrow \{1,\dots,k\}$.
The matrix $T \in \mathcal{T}$ defines the $k$ different contracts used by the menu on its rows, with $T_i \in\mathbb{R}_+^{m}$ (\emph{i.e.}, the $i$-th row of $T$) denoting the vector encoding the $i$-th contract and $T_{i,\omega}$ (\emph{i.e.}, the element indexed by $i,\omega$) specifying the payment that the $i$-th contract associates to outcome $\omega \in \Omega$.
Additionally, the function $f^T:\Theta\rightarrow \{1,\dots,k\}$ assigns contracts to agent's types, with $f^T(\theta) \in \{1,\ldots,k\}$ defining the index of the contract proposed by menu $T \in \mathcal{T}$ when the agent reports type $\theta \in \Theta$ to the principal.

As a preliminary step, we prove the following crucial lemma.
\begin{restatable}{lemma}{lemmaThree} \label{lm:smallSup}
	Given a Bayesian principal-agent instance $(\Theta,A,\Omega)$ and a constant $\delta>0$, there exists a DSIC menu of deterministic contracts $P = \left( p^\theta \right)_{\theta \in \Theta}$ that employs at most $O\left( \left( \frac{m}{\delta^3} \log \frac{m}{\delta} \right)^m \right)$ different contracts and provides the principal with an expected utility at least of $OPT-\delta$, where $OPT$ is the principal's expected utility in an optimal DSIC menu of deterministic contracts.
\end{restatable}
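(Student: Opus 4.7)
The plan is to begin with an optimal DSIC menu $P^\ast = (p^\theta)_{\theta \in \Theta}$ of value $OPT$, modify it so that all its contracts lie in a small discretized subset of $\mathbb{R}_+^m$, pay a small additive loss along the way, and then appeal to Lemma~\ref{lm:toIC} to repair the (now slightly broken) DSIC property. The three main ingredients are a payment-range bound coming from Lemma~\ref{lm:smallCosts}, a coordinate-wise grid discretization, and the approximate-DSIC-to-DSIC conversion of Lemma~\ref{lm:toIC}.

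First I would apply Lemma~\ref{lm:smallCosts} with a threshold $L$ of order $1/\delta$ to split $\Theta$ into \emph{cheap} types, whose best-response expected payments are at most $L$, and \emph{expensive} types, whose total probability mass is at most $1/L$. Since per-type principal utility lies in $[0,1]$, I can assign each expensive type the contract of the cheap type that maximizes their utility, preserving DSIC and losing at most $1/L$ in expected utility. For cheap types the bound on expected payment lets me truncate each coordinate at a threshold $B = O(\log(m/\delta))$: any outcome $\omega$ with $p^\theta_\omega > B$ must carry best-response probability at most $L/B$, so capping such payments perturbs every expected payment appearing in the DSIC inequalities by a controllable additive amount. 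The logarithmic scale of $B$ is precisely what yields the $\log(m/\delta)$ factor in the final count.

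Next I would snap each truncated contract coordinate-wise to the grid $\{0,\eta,2\eta,\ldots\}^m \cap [0,B]^m$ of spacing $\eta = \Theta(\delta^3/m)$. Each coordinate shifts by at most $\eta$, so every expected payment $\sum_{\omega\in\Omega} F_{\theta,a,\omega}\, p_\omega$ appearing in Equation~\eqref{eq:apxIc} shifts by at most $\eta$. Paired with the best-response actions of the pre-rounded contracts, the rounded menu is therefore $O(\eta)$-approximately DSIC in the sense of Definition~\ref{def:apx_menu}, with total principal utility perturbed by at most $O(\eta)$. The number of distinct contracts is at most $|G| = O((B/\eta)^m) = O\!\left(\left(\tfrac{m}{\delta^3}\log\tfrac{m}{\delta}\right)^m\right)$, matching the claimed bound.

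Finally I would invoke Lemma~\ref{lm:toIC} to convert the $O(\eta)$-approximate DSIC menu into an exactly DSIC menu at an additional cost of $O(\sqrt{\eta}) = O(\delta^{3/2}/\sqrt{m})$, without introducing any new distinct contracts beyond those already produced by rounding. Summing the three sources of loss—expensive-type reassignment, coordinate truncation, and DSIC repair—with the constants chosen above gives an overall utility loss of at most $\delta$, as desired. I expect the main obstacle to be the truncation step: Lemma~\ref{lm:smallCosts} only controls the expected payment under each type's own best response, whereas DSIC also constrains every cross-type deviation, and truncating $p^\theta_\omega$ may shift the slack of an unrelated type $\theta'$ who might prefer to misreport as $\theta$. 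I would handle this by choosing a single uniform cap $B$ and bounding every cross-type deviation through the same ``high-payment outcomes carry best-response probability at most $L/B$'' estimate, absorbing the resulting cross-type slippage into the overall $\delta$-budget via the tolerance built into Definition~\ref{def:apx_menu}.
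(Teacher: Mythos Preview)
Your overall three-step plan (bound, discretize, repair via Lemma~\ref{lm:toIC}) matches the paper's, but the truncation step has a genuine gap that cannot be patched with the parameters you chose.

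The claim that capping every coordinate at some $B = O(\log(m/\delta))$ perturbs expected payments ``by a controllable additive amount'' does not follow from the estimate $F_{\theta,a^*,\omega}\le L/B$. What you actually need to bound is $\sum_{\omega: p^\theta_\omega > B} F_{\theta,a^*,\omega}\,(p^\theta_\omega - B)$, and the only information Lemma~\ref{lm:smallCosts} gives you is that $\sum_\omega F_{\theta,a^*,\omega}\,p^\theta_\omega \le L$; hence the truncation loss on the left-hand side of Equation~\eqref{eq:apxIc} can be as large as $L=\Theta(1/\delta)$, not $O(\delta^2)$. Concretely, a cheap type can have $F_{\theta,a^*,\omega}=1/X$ and $p^\theta_\omega = LX$ on a single outcome: its expected payment is exactly $L$, yet truncating to any fixed $B$ destroys almost all of it as $X\to\infty$. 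So no instance-independent cap $B$ works, and with $\epsilon=\Theta(1/\delta)$ the repair step of Lemma~\ref{lm:toIC} costs $\Theta(1/\sqrt{\delta})$, blowing the budget. (As a side remark, per-type principal utility is in $(-\infty,1]$, not $[0,1]$, so the expensive-type loss is $2/L$, not $1/L$; the paper handles this explicitly.)

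The paper avoids the problem by \emph{not} truncating at all. It anchors the discretization at the per-outcome maximum $M(\omega)\coloneqq\max_{\theta} p^\theta_\omega$ and uses a \emph{multiplicative} grid $\{(1-\eta)^i M(\omega)\}_{i\in I}\cup\{0\}$ with $\eta=\Theta(\delta^3/m)$. Rounding down multiplicatively changes each contribution $F_{\theta,a^*,\omega}\,p^\theta_\omega$ by at most an $\eta$-fraction of itself, so the total additive shift is $\eta\cdot L=O(\delta^2/m)$ per outcome, summing to $O(\delta^2)$---regardless of how large the actual payments are. The one danger is the floor of the grid (payments below $\eta M(\omega)$ are sent to~$0$); the paper shows, via the set $\hat\Theta$ and the DSIC inequality against the type attaining $M(\omega)$, that any cheap type hurt by this floor would in fact have expected payment at least $L$ and hence be expensive. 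The $\log(m/\delta)$ factor you were aiming for then falls out naturally as the number of geometric steps from $\eta M(\omega)$ to $M(\omega)$, namely $O\big((1/\eta)\log(1/\eta)\big)$, rather than from an artificial choice of $B$.
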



The following theorem provides our PTAS.
By Lemma \ref{lm:smallSup}, in order to find an optimal DSIC menu of deterministic contracts, we can optimize over the set of menus of contracts represented by the matrices in  $\mathcal{T} =  \mathbb{R}_+^{k\times m}$, for a values of $k$ that only depends on the desired constant approximation $\delta>0$ and $m$.
As a first step, we show  that $\mathcal{T}$ can be partitioned into polyhedra such that the principal's expected utility is linear in each of them, and, thus, an optimal solution is on a vertex of one of these polyhedra.
Moreover, each polyhedron is defined by a subset of a common set of linear inequalities.
Since the dimension of the space $\mathcal{T}$ is $km$ and, thus, it is constant, each vertex is defined by the intersection of $km$ linearly independent hyperplanes. Finally, all the possible subsets of $km$ inequalities can be enumerated in polynomial time since $k$ and $m$ are constants.


\begin{restatable}{theorem}{theoremTwo}\label{thm:PTAS}
	There exists an additive PTAS for the problem of computing an optimal DSIC menu of deterministic contracts in Bayesian principal-agent instances with a constant number of outcomes.
\end{restatable}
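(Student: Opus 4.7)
The plan is to combine Lemma~\ref{lm:smallSup} with a polynomial-time enumeration of vertices of a hyperplane arrangement living in a constant-dimensional ambient space. For the target additive error $\delta>0$, I first invoke Lemma~\ref{lm:smallSup} to conclude that there exists a DSIC menu using at most $k = O\big(\big(\tfrac{m}{\delta^3}\log\tfrac{m}{\delta}\big)^m\big)$ distinct contracts and attaining principal's expected utility at least $OPT-\delta$. Since both $m$ and $\delta$ are constants, so is $k$. Any such menu is encoded by a matrix $T \in \mathcal{T} \coloneqq \mathbb{R}_+^{k\times m}$ and an assignment $f^T:\Theta\to[k]$, and it suffices to search over the $km$-dimensional space $\mathcal{T}$.

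I would then introduce a common finite family $\mathcal{H}$ of linear inequalities in the $km$ variables $T_{i,\omega}$: (i) non-negativities $T_{i,\omega}\ge 0$; (ii) within-contract comparisons $\sum_\omega F_{\theta,a,\omega}T_{i,\omega}-c_{\theta,a} \ge \sum_\omega F_{\theta,a',\omega}T_{i,\omega}-c_{\theta,a'}$ for every $\theta, i, a, a'$; and (iii) cross-contract comparisons $\sum_\omega F_{\theta,a,\omega}T_{i,\omega}-c_{\theta,a} \ge \sum_\omega F_{\theta,a',\omega}T_{j,\omega}-c_{\theta,a'}$ for every $\theta$, every $i\ne j$, and every $a,a'$. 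Since $k$ is constant, $|\mathcal{H}|$ is polynomial in the instance size. The arrangement induced by $\mathcal{H}$ partitions $\mathcal{T}$ into polyhedral cells. Within any cell the best response $b^\theta(T_i)$ is a fixed action for each $\theta, i$ (via the within-contract orientations), the DSIC-optimal assignment $f^T(\theta) \in \argmax_i$ of type $\theta$'s utility is a fixed index (via the cross-contract orientations), and consequently the principal's expected utility is a linear function of $T$ throughout the cell.

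On each cell the linear objective attains its maximum at a vertex, and because $\dim\mathcal{T}=km$ is a constant, every such vertex is the unique intersection of $km$ linearly independent hyperplanes from $\mathcal{H}$. The algorithm therefore enumerates all $km$-subsets of $\mathcal{H}$---only $|\mathcal{H}|^{O(km)}$ in number, which is polynomial since $km$ is a constant---solves the associated linear system to obtain a candidate $T^*$, discards the degenerate or negative candidates, computes the induced best responses $b^\theta(T^*_i)$ together with the assignment $f^{T^*}(\theta)$ (breaking ties in favor of the principal), and evaluates the principal's utility. The algorithm returns the best candidate. Correctness follows because the menu from Lemma~\ref{lm:smallSup} lies in some cell, the maximum of the linear objective on that cell is attained at one of the enumerated vertices, and it is no smaller than that menu's own value, hence at least $OPT-\delta$; composing this with a suitable choice of $\delta$ in terms of the target precision delivers the PTAS.

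The main obstacle I anticipate is ensuring that the linear maximum is genuinely attained within each cell, rather than being merely a supremum along an unbounded direction. I would address this by enlarging $\mathcal{H}$ with upper-bound constraints $T_{i,\omega}\le M$ for a polynomially-bounded $M$. Such a bound is justified by combining Lemma~\ref{lm:smallCosts}---which implies that types responsible for large expected payments contribute only negligibly to the principal's utility---with the argument supporting Lemma~\ref{lm:smallSup}, so that an additional additive loss of order $\delta$ suffices to cap each individual payment. With these extra constraints every cell is bounded, $|\mathcal{H}|$ remains polynomial, and the vertex enumeration correctly yields the desired additive PTAS. A secondary technicality is tie-breaking at vertices where several inequalities hold with equality; since $km$ is constant, only constantly many consistent tie-breakings need to be tested per candidate, which does not affect the polynomial running time.
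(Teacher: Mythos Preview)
Your proposal is correct and follows essentially the same route as the paper: invoke Lemma~\ref{lm:smallSup} to bound the number of distinct contracts by a constant $k$, then enumerate all $km$-subsets of a common polynomial-size family of linear inequalities in $\mathbb{R}^{km}$ to locate the vertex-optimal menu. The only noteworthy difference is that the paper avoids your boundedness concern altogether by observing that, once $f$ and the best responses are fixed, the principal's objective is linear in $T$ with non-positive coefficients (payments can only hurt), so the LP over each $\mathcal{Q}^f$ is automatically bounded above and attains its maximum at a vertex of the pointed polyhedron without needing an explicit cap $M$.
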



Notice that our result cannot be extended to settings with an arbitrary number of outcomes.
Indeed, \citet{guruganesh2020contracts} show that, in Bayesian principal-agent instances with an arbitrary number of outcomes, there exists an $\epsilon>0$ such that it is \NP-hard to approximate the principal's expected utility in an optimal menu of deterministic contracts up to within an $\epsilon$ additive factor, unless \Poly\ $=$ \NP.

\section{Menus of Deterministic Contracts: Two Easy Cases} \label{sec:simple}

In this section, we analyze two cases in which an optimal menu of deterministic contracts can be computed in polynomial time. 
In particular, Section~\ref{subsec:two_out} studies instances with two outcomes, complementing the hardness result for the setting with four outcomes, while Section~\ref{subsec:const_types} addresses the case with a constant number of types.

\subsection{Case with Two Outcomes}\label{subsec:two_out}

Previously, we provide a PTAS for the problem of designing an optimal DSIC menu of deterministic contracts in Bayesian principal-agent instances with a constant number of outcomes.
With four or more outcomes, this result is tight by Theorem~\ref{thm:inaprox}.
In this section, we show that, with only two outcomes the problem can be solved in polynomial time.
We leave as an open problem the analysis of the case with three outcomes.
Our result generalizes that of~\citet{guruganesh2020contracts}.
In particular, they prove a result that is analogous to ours, but for the simpler case in which the action costs are the same for all the agent's types.

Our main result shows that, using menus of deterministic contracts does \emph{not} increase the principal's expected utility with respect to using a single contract.
\begin{restatable}{lemma}{lemmaBinary}
	In Bayesian principal-agent instances $(\Theta,A,\Omega)$ with $|\Omega| = 2$, there exists a contract having the same principal's expected utility as an optimal DSIC menu of deterministic contracts.
\end{restatable}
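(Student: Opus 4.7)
The proof exploits the one-dimensional structure that having only two outcomes imposes on deterministic contracts. I reparametrize every contract $p=(p_1,p_2)\in\mathbb{R}_+^2$ by its base $p_1$ and its bonus $q:=p_2-p_1$. Since $F_{\theta,a,1}+F_{\theta,a,2}=1$, the agent's expected utility from an action $a$ can be written as $p_1+F_{\theta,a,2}\,q-c_{\theta,a}$, so the best response depends only on $q$; I denote it by $b^\theta(q)$ and set $F^\theta(q):=F_{\theta,b^\theta(q),2}$ and $\phi^\theta(q):=F^\theta(q)\,q-c_{\theta,b^\theta(q)}$. Standard one-dimensional screening arguments give that $F^\theta$ is nondecreasing in $q$ and $\phi^\theta$ is convex (as the upper envelope of the affine maps $q\mapsto F_{\theta,a,2}\,q-c_{\theta,a}$). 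Consequently the agent's utility is $p_1+\phi^\theta(q)$, every DSIC constraint becomes $p^\theta_1+\phi^\theta(q^\theta)\ge p^{\hat\theta}_1+\phi^\theta(q^{\hat\theta})$, and, assuming WLOG $r_2\ge r_1$, the principal's per-type utility equals $\psi^\theta(q)-p_1$ with $\psi^\theta(q):=r_1+F^\theta(q)(r_2-r_1-q)$.

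Given an optimal DSIC menu $(p^\theta_1,q^\theta)_\theta$, I would perform two WLOG reductions. First, shifting every $p^\theta_1$ down by $\min_\theta p^\theta_1$ preserves both DSIC (every term moves by the same amount) and limited liability, while strictly raising the principal's utility by that shift; hence $\min_\theta p^\theta_1=0$. Second, $q^\theta\le r_2-r_1$ for every $\theta$: if $q^\theta>r_2-r_1$ then $\psi^\theta(q^\theta)<r_1$, and replacing $q^\theta$ by $r_2-r_1$ while raising $p^\theta_1$ by $\phi^\theta(q^\theta)-\phi^\theta(r_2-r_1)$ leaves the agent's own utility unchanged and weakly raises the principal's per-type utility. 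Combining the first reduction with the DSIC-implied monotonicity ``higher $q^\theta$ forces weakly smaller $p^\theta_1$'' (obtained by adding the two DSIC inequalities between any pair of types), the type $\theta^*$ attaining $q^{\theta^*}=\max_\theta q^\theta$ satisfies $p^{\theta^*}_1=0$.

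I would then propose the single deterministic contract $\tilde p:=p^{\theta^*}=(0,q^{\theta^*})$ and show that it matches the menu's principal utility. For each $\theta$, let $a^\theta:=b^\theta(q^\theta)$ and $\tilde a^\theta:=b^\theta(q^{\theta^*})$, and decompose the principal's utility as social surplus minus agent utility: $U^\theta(p)=S^\theta(b^\theta(q))-u^\theta(p)$ with $S^\theta(a):=F_{\theta,a,2}(r_2-r_1)+r_1-c_{\theta,a}$. DSIC gives $u^\theta(\tilde p)\le u^\theta(p^\theta)$, hence $U^\theta(\tilde p)-U^\theta(p^\theta)\ge S^\theta(\tilde a^\theta)-S^\theta(a^\theta)$. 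Best-response monotonicity yields $F_{\theta,\tilde a^\theta,2}\ge F_{\theta,a^\theta,2}$, and the incentive compatibility of $\tilde a^\theta$ at $q^{\theta^*}$ bounds the cost gap by $c_{\theta,\tilde a^\theta}-c_{\theta,a^\theta}\le(F_{\theta,\tilde a^\theta,2}-F_{\theta,a^\theta,2})\,q^{\theta^*}$. Substituting,
\[
S^\theta(\tilde a^\theta)-S^\theta(a^\theta)\;\ge\;\bigl(F_{\theta,\tilde a^\theta,2}-F_{\theta,a^\theta,2}\bigr)(r_2-r_1-q^{\theta^*})\;\ge\;0
\]
by the second reduction. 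Averaging over $\theta$ with weights $\mu_\theta$ and invoking optimality of the menu yields that $\tilde p$ achieves exactly the menu's expected principal utility.

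The technically delicate point is the cap-and-compensate reduction to $q^\theta\le r_2-r_1$: the modification preserves ``$\theta$ reports anywhere'' DSIC by construction, but ``$\hat\theta$ reports $\theta$'' DSIC requires a single-crossing-type inequality $\phi^{\hat\theta}(q^\theta)-\phi^{\hat\theta}(r_2-r_1)\ge\phi^\theta(q^\theta)-\phi^\theta(r_2-r_1)$. I would handle this by capping the $q^\theta$-values in decreasing order and, where necessary, simultaneously raising the other $p^{\hat\theta}_1$'s, exploiting the monotonicity of the $F^\theta$'s and the convexity of the $\phi^\theta$'s. Once this normalization is in place, the rest of the argument is a clean consequence of the one-dimensional geometry that the $|\Omega|=2$ assumption affords.
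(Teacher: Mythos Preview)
Your proposal has a genuine gap precisely where you flag it: the cap-and-compensate reduction~(b). You need, for every $\hat\theta$, the single-crossing inequality
\[
\phi^{\theta}(q^\theta)-\phi^{\theta}(r_2-r_1)\;\le\;\phi^{\hat\theta}(q^\theta)-\phi^{\hat\theta}(r_2-r_1),
\]
and this does \emph{not} follow from convexity and monotonicity of the $\phi$'s alone; it is a condition across types that the model does not impose. Your proposed workaround---raising the other $p^{\hat\theta}_1$'s to restore DSIC---directly decreases the principal's utility for those types, so you cannot argue that the modified menu is still optimal, and the whole normalization collapses. ``Capping in decreasing order'' does not help either: the obstruction is symmetric in $\theta,\hat\theta$ and independent of order. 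As written, the argument only goes through when the optimal menu already satisfies $q^{\theta^*}\le r_2-r_1$, which is exactly the nontrivial case you still owe.

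A smaller issue: your reduction~(a) shifts $p^\theta_1\mapsto p^\theta_1-\min_\theta p^\theta_1$ while holding $q^\theta$ fixed, which shifts $p^\theta_2$ by the same amount and can push it negative (take, e.g., two contracts $(1,2)$ and $(2,0)$). In fact reduction~(a) is unnecessary: taking $\tilde p=p^{\theta^*}$ directly already gives a valid contract and your surplus-decomposition step $U^\theta(\tilde p)-U^\theta(p^\theta)\ge S^\theta(\tilde a^\theta)-S^\theta(a^\theta)$ only uses DSIC, not $\tilde p_1=0$.

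The paper avoids both difficulties by not normalizing the menu at all. It simply \emph{defines} the single contract as $\hat p=(0,\min\{\max_\theta p^\theta_{\omega_1},\,r_{\omega_1}-r_{\omega_0}\})$---so the cap is applied to the proposed contract, not to the menu---and then shows, type by type and via a short case analysis on whether $b^\theta(\hat p)=b^\theta(p^\theta)$, that the principal's utility does not decrease. The key inequality $F_{\theta,b^\theta(p^\theta),\omega_1}\,\hat p_{\omega_1}\le \sum_\omega F_{\theta,b^\theta(p^\theta),\omega}\,p^\theta_\omega$ comes straight from DSIC (deviate to the type with largest $p^\theta_{\omega_1}$), and the cap $\hat p_{\omega_1}\le r_{\omega_1}-r_{\omega_0}$ handles the remaining cases without any single-crossing assumption. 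If you want to salvage your approach, the cleanest fix is to drop the menu normalization entirely and instead argue directly with the capped single contract, splitting into the cases $\max_\theta q^\theta\lessgtr r_2-r_1$.
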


By the previous lemma, in order to compute an optimal DSIC menu of deterministic contracts, it is sufficient to compute an optimal single contract.
As shown by \citet{castiglioni2021bayesian} and \citet{guruganesh2020contracts}, an optimal single contract can be computed in polynomial time when the number of outcomes is constant. Hence, we obtain the following:

\begin{theorem}
	In Bayesian principal-agent instances $(\Theta,A,\Omega)$ with $|\Omega| = 2$, there exists a polynomial-time algorithm that computes an optimal DSIC menu of deterministic contracts.
\end{theorem}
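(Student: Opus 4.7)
The plan is to combine the preceding lemma with existing polynomial-time algorithms for computing an optimal single contract in instances with a constant number of outcomes. By the lemma, in the two-outcome case there always exists a single contract whose principal's expected utility matches that of an optimal DSIC menu of deterministic contracts. Therefore, it suffices to compute an optimal (single) contract and then return the menu that assigns this same contract to every agent's type; such a menu is trivially DSIC, since no type can gain by misreporting when all reports yield the identical contract.

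The concrete algorithm I would propose is simply: (i) invoke the procedure of \citet{castiglioni2021bayesian} (or equivalently that of \citet{guruganesh2020contracts}) to compute an optimal single contract $p^\ast \in \mathbb{R}_+^2$ in polynomial time, exploiting the fact that $|\Omega|=2$ is a constant; and (ii) output the menu $P = (p^\theta)_{\theta \in \Theta}$ defined by $p^\theta \coloneqq p^\ast$ for every $\theta \in \Theta$. Correctness follows because the menu $P$ is obviously DSIC (constant in $\theta$), achieves principal's expected utility equal to that of $p^\ast$, and by the lemma this value coincides with the optimum among \emph{all} DSIC menus of deterministic contracts. Polynomial running time is immediate from the cited algorithms.

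The only subtlety is to verify that the cited polynomial-time result for single contracts indeed applies in the Bayesian setting we consider here (with types controlling both costs and outcome distributions), but this is precisely the regime addressed in those works when the number of outcomes is constant, so no additional argument is required. Consequently, no genuine obstacle arises: the entire proof is a short two-step reduction, with the lemma doing the conceptual work of collapsing menus to a single contract and the prior literature providing the algorithmic content.
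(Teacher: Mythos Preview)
Your proposal is correct and follows essentially the same approach as the paper: reduce to the single-contract problem via the preceding lemma, then invoke the known polynomial-time algorithms of \citet{castiglioni2021bayesian} and \citet{guruganesh2020contracts} for computing an optimal single contract when the number of outcomes is constant. The paper's proof is in fact even terser than yours, stating only these two steps without spelling out that the constant menu is DSIC.
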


\subsection{Case with a Constant Number of Types}\label{subsec:const_types}

As shown in Theorem~\ref{thm:inaprox}, when there is an arbitrary number of agent's types, the principal's problem cannot be approximated efficiently, even with a constant number of outcomes and actions.
We complement this result by showing that, when the number of types is constant, the problem can be solved in polynomial time.
Formally:

\begin{restatable}{theorem}{theoremTypes}
	In Bayesian principal-agent instances with a constant number of agent's types, there exists a polynomial-time algorithm that computes an optimal DSIC menu of deterministic contracts.
\end{restatable}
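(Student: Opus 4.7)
The plan is to enumerate, for each agent's type, the action that type plays under its designated contract, and then, for each guess, solve a linear program over the payment variables. Since $\ell \coloneqq |\Theta|$ is constant while $n = |A|$, the total number of guesses is $n^\ell = \poly(n)$, which is polynomial.

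Concretely, I will fix a candidate action tuple $(a^\theta)_{\theta \in \Theta} \in A^\ell$ and consider the following linear program in the variables $p^\theta_\omega \geq 0$ for all $\theta \in \Theta$ and $\omega \in \Omega$:
\begin{align*}
	\max \,\, & \sum_{\theta \in \Theta} \mu_\theta \sum_{\omega \in \Omega} F_{\theta, a^\theta, \omega} \bigl( r_\omega - p^\theta_\omega \bigr) \\
	\text{s.t.} \,\, & \sum_{\omega \in \Omega} F_{\theta, a^\theta, \omega} \, p^\theta_\omega - c_{\theta, a^\theta} \geq \sum_{\omega \in \Omega} F_{\theta, a, \omega} \, p^{\hat\theta}_\omega - c_{\theta, a}, \quad \forall \theta, \hat\theta \in \Theta, \, \forall a \in A.
\end{align*}
The constraints with $\hat\theta = \theta$ will ensure that $a^\theta$ is IC for type $\theta$ under $p^\theta$, while those with $\hat\theta \neq \theta$ encode the DSIC requirement~\eqref{eq:dsic}: the right-hand side, maximized over $a \in A$, is exactly the expected utility an agent of type $\theta$ obtains by reporting $\hat\theta$ and best-responding. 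Since the LP has $\ell m$ variables and $\ell^2 n$ non-trivial constraints, it can be solved in polynomial time, and my algorithm will output the menu attaining the best LP optimum across all $n^\ell$ action tuples.

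For correctness, let $U^*$ denote the optimal principal's expected utility. On the one hand, given any optimal DSIC menu $P^* = (p^{*\theta})_{\theta \in \Theta}$, setting $a^{*\theta} \coloneqq b^\theta(p^{*\theta})$ makes $P^*$ feasible for the LP associated with $(a^{*\theta})$, and its LP objective equals $U^*$; so the algorithm's output is at least $U^*$. On the other hand, for any tuple $(a^\theta)$ and any LP-feasible $(p^\theta)$ with objective $V$, the LP constraints imply that $(p^\theta)$ is DSIC (maximizing over $a$ on the right recovers Equation~\eqref{eq:dsic}); and since $a^\theta$ is an IC action while $b^\theta(p^\theta)$ is the principal-preferred IC action, the menu's true expected utility is at least $V$, hence at most $U^*$. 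Taking the maximum over all tuples then yields exactly $U^*$.

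The main subtlety I foresee is the principal-favorable tie-breaking embedded in $b^\theta$: the LP objective is written in terms of the enumerated $a^\theta$, whereas the induced menu would actually play $b^\theta(p^\theta)$, possibly a different IC action yielding the principal strictly higher utility. This is harmless because the enumeration includes the tuple arising from the optimal menu's own principal-preferred best responses, and for that specific tuple the LP objective coincides with $U^*$. Residual issues---arguing each LP is bounded above (uniformly scaling payments up preserves feasibility but strictly decreases the objective) and simply skipping tuples whose LP is infeasible---are straightforward.
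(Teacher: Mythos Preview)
Your proposal is correct and follows essentially the same approach as the paper's own proof: enumerate the $n^\ell$ action profiles $(a^\theta)_{\theta\in\Theta}$ and, for each, solve a polynomial-size LP in the payment variables whose constraints enforce that $a^\theta$ is IC under $p^\theta$ and that the menu is DSIC. Your correctness argument (the optimal menu's induced action profile makes its LP value exactly $U^*$, while any LP-feasible menu is DSIC with true utility at least its LP objective due to principal-favorable tie-breaking) is in fact more carefully spelled out than in the paper, which simply states the LP and observes that enumerating $O(n^\ell)$ profiles is polynomial when $\ell$ is constant.
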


\section{How to Find an ``Almost-optimal'' Menu of Randomized Contracts Efficiently} \label{sec:randomized}

In this section, we show that adding randomization enables the polynomial-time computation of a DSIC menu of contracts providing the principal with an expected utility arbitrarily close to the best possible one.

As we show later in this section, the impossibility of designing an optimal menu of randomized contracts does \emph{not} stem from computational challenges associated with the problem, but it is rather due to the fact that an optimal menu of randomized contracts may \emph{not} exist.
Indeed, in general the problem of finding a DSIC menu of randomized contracts maximizing the principal's expected utility only admits a supremum, and it may \emph{not} admit a maximum. 
%
%
The supremum of the problem of finding a principal-utility-maximizing DSIC menus of randomized contracts can be computed by solving the following problem:
%
%
\begin{subequations}\label{prob:opt_rand}
	\begin{align}
		\sup_{\Gamma = \{ \gamma^\theta \}_{\theta \in \Theta} } & \,\, \sum_{\theta \in \Theta} \mu_\theta \mathbb{E}_{p \sim \gamma^\theta} \left[  \sum_{\omega \in \Omega} F_{\theta, b^\theta(p), \omega} r_\omega - \sum_{\omega \in \Omega} F_{\theta, b^\theta(p), \omega} p_\omega  \right] \quad \text{s.t.} \\
		& \hspace{-1cm} \mathbb{E}_{p \sim \gamma^\theta} \left[ \sum_{\omega \in \Omega} F_{\theta, b^\theta(p), \omega} p_\omega - c_{\theta, b^\theta(p)}  \right] \geq \mathbb{E}_{p \sim \gamma^{\hat \theta}} \left[ \sum_{\omega \in \Omega} F_{\theta, b^\theta(p), \omega} p_\omega - c_{\theta, b^\theta(p)}  \right] \nonumber\\
		 &\hspace{7.5cm} \forall \theta \neq \hat \theta \in \Theta. \label{eq:opt_rand_ic} 
	\end{align}
\end{subequations}

As a first step, we show that the supremum defined in Problem~\ref{prob:opt_rand} always assumes a finite value (\emph{i.e.}, it is never $+\infty$).
In the following, for ease of notation, we denote by \SUP \ the value of Problem~\ref{prob:opt_rand}.
Then, we can formally prove:
\begin{proposition}
	Problem~\ref{prob:opt_rand} has always value $\emph{\SUP} \in [0,1]$.
\end{proposition}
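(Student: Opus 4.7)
The plan is to establish the two bounds separately, using a trivial feasible menu for the lower bound and the limited-liability assumption plus the boundedness of rewards for the upper bound.

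For the lower bound \(\SUP\ge 0\), I would exhibit an explicit feasible menu of randomized contracts whose principal's expected utility is non-negative. Consider the menu \(\Gamma=(\gamma^\theta)_{\theta\in\Theta}\) in which, for every \(\theta\in\Theta\), the distribution \(\gamma^\theta\) is a point mass on the all-zero contract \(p=\mathbf 0\in\mathbb R_+^m\). This menu is trivially DSIC since all agent's types are offered the same (degenerate) distribution, so both sides of Equation~\eqref{eq:opt_rand_ic} are equal. Under the zero contract, an agent of type \(\theta\) has utility \(-c_{\theta,a}\) for action \(a\), so the best-response set \(\mathcal B_{\mathbf 0}^\theta\) consists of the zero-cost actions, which is non-empty by Assumption~\ref{ass:ir}. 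The principal's expected utility then equals \(\sum_{\theta\in\Theta}\mu_\theta\sum_{\omega\in\Omega}F_{\theta,b^\theta(\mathbf 0),\omega}\,r_\omega\), which is non-negative because rewards are in \([0,1]\). Hence \(\SUP\ge 0\).

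For the upper bound \(\SUP\le 1\), I would observe that in any feasible menu \(\Gamma=(\gamma^\theta)_{\theta\in\Theta}\) and for any realization \(p\in\supp(\gamma^\theta)\), limited liability forces \(p_\omega\ge 0\) for all \(\omega\in\Omega\), and therefore the principal's utility contribution from type \(\theta\) satisfies
\[
\sum_{\omega\in\Omega}F_{\theta,b^\theta(p),\omega}\,r_\omega-\sum_{\omega\in\Omega}F_{\theta,b^\theta(p),\omega}\,p_\omega\ \le\ \sum_{\omega\in\Omega}F_{\theta,b^\theta(p),\omega}\,r_\omega\ \le\ 1,
\]
since \(r_\omega\in[0,1]\) and \(\sum_{\omega}F_{\theta,b^\theta(p),\omega}=1\). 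Taking expectation with respect to \(p\sim\gamma^\theta\) preserves the bound, and then averaging over \(\theta\) with weights \(\mu_\theta\) summing to one yields that the objective of Problem~\ref{prob:opt_rand} is at most \(1\) on every feasible point. Therefore \(\SUP\le 1\).

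There is no real obstacle here: the only thing that needs to be checked carefully is that the all-zero menu is feasible, which reduces to noticing that the DSIC inequalities in Equation~\eqref{eq:opt_rand_ic} become equalities (same distribution across types) and that Assumption~\ref{ass:ir} guarantees the existence of an IR best response. Combining the two bounds gives \(\SUP\in[0,1]\), as claimed.
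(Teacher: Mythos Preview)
Your proof is correct and follows essentially the same approach as the paper: both use the all-zero contract menu to show \(\SUP\ge 0\) and the fact that rewards lie in \([0,1]\) together with limited liability to show \(\SUP\le 1\). You simply spell out the DSIC and IR verifications for the trivial menu more explicitly than the paper does.
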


\begin{proof}
	It is sufficient to observe that the maximum principal's expected utility in a menu is $1$, since the rewards are in $[0,1]$ and the payments can only provide a negative utility contribution to the principal.
	Let $p^* \in \mathbb{R}^m_+$ be a contract such that $p^*_\omega=0$ for every $\omega \in \Omega$.
	Then, it is easy to check that $\SUP \ge 0$, since the menu of randomized contracts that set $\gamma^\theta_{p^*}=1$ for every $\theta \in \theta$ provides the principal with an expected utility of at least $0$.
	This proves the statement.
\end{proof}

%
%

Next, we show that Problem~\ref{prob:opt_rand} may \emph{not} admit a maximum, \emph{i.e.}, there is no contract for which the (finite) value of the supremum is attained.

Before proving such a result, we introduce some additional notation.
In particular, we let $\pa^{\theta,a} \coloneqq \left\{ p \in \mathbb{R}^m_+ \mid a \in \mathcal{B}_p^\theta  \right\} $ be the set of contracts such that action $a \in A$ is IC for an agent of type $\theta \in \Theta$.
Furthermore, we let $\hat\pa^{\theta, a} \coloneqq \left\{ p \in \mathbb{R}^m_+ \mid b^\theta(p) = a  \right\}$ be the set of contracts in which an agent of type $\theta $ plays action $a$.
Notice that the sets $\pa^{\theta,a}$ are closed polyhedra (as they can be defined by a system of linear inequalities), while sets $\hat\pa^{\theta,a}$ are nor open nor closed polyhedra (as they can be defined by a system of linear inequalities, some of which are strict due to the tie-breaking rule).
We also need to prove a preliminary result (Lemma~\ref{lem:finite_support}), which intuitively states that one can restrict the attention to randomized contracts placing positive probability on a finite number of contracts.
In particular, we prove that, for each type $\theta \in \Theta$, it is sufficient that the support of $\gamma^\theta$ contains at most one contract for each agent's action $a \in A$, with the latter being the action played by an agent of type $\theta$ in such a contract.
Formally:
\begin{restatable}{lemma}{lemmaRandOne}\label{lem:finite_support}
	Given any DSIC menu of randomized contract $\Gamma = \{ \gamma^\theta \}_{\theta \in \Theta}$, there always exists a DSIC menu of randomized contracts $\bar \Gamma = \{ \bar \gamma^\theta \}_{\theta \in \Theta}$ that provides the principal with at least the same expected utility as $\Gamma = \{ \gamma^\theta \}_{\theta \in \Theta}$ and such that, for every $\theta \in \Theta$, it holds $\left| \supp (\gamma^\theta) \cap \hat\pa^{\theta, a} \right| \leq 1$ for all $a \in A$.
\end{restatable}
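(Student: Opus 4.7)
The plan is to construct $\bar\Gamma$ from $\Gamma$ by collapsing, for each type $\theta$, the portions of $\supp(\gamma^\theta)$ that induce the same best-response action into a single contract equal to the corresponding conditional expectation. Concretely, for each $\theta \in \Theta$ and each $a \in A$, I would partition $\supp(\gamma^\theta)$ using the sets $S^{\theta,a} \coloneqq \supp(\gamma^\theta) \cap \hat{\mathcal{P}}^{\theta,a}$, set $q^{\theta,a} \coloneqq \gamma^\theta(S^{\theta,a})$, and (when $q^{\theta,a}>0$) define $\bar p^{\theta,a} \coloneqq \frac{1}{q^{\theta,a}} \int_{S^{\theta,a}} p\, d\gamma^\theta(p)$. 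The new randomized contract $\bar\gamma^\theta$ places mass $q^{\theta,a}$ on the atom $\bar p^{\theta,a}$.

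The first thing to check is that $a \in \mathcal{B}^\theta_{\bar p^{\theta,a}}$. This is immediate from linearity: the IC conditions cutting out $\mathcal{P}^{\theta,a}$ are linear in $p$, and $\bar p^{\theta,a}$ is a convex combination of points of $S^{\theta,a} \subseteq \mathcal{P}^{\theta,a}$. Next, I would verify that the principal's expected utility cannot decrease. The original contribution of $S^{\theta,a}$, being $q^{\theta,a}\sum_\omega F_{\theta,a,\omega}(r_\omega - \bar p^{\theta,a}_\omega)$ by linearity and the fact that agent $\theta$ always plays $a$ on $S^{\theta,a}$, is at most the new contribution $q^{\theta,a}\sum_\omega F_{\theta,b^\theta(\bar p^{\theta,a}),\omega}(r_\omega - \bar p^{\theta,a}_\omega)$, because by the tie-breaking rule $b^\theta(\bar p^{\theta,a})$ is a principal-utility maximizer among $\mathcal{B}^\theta_{\bar p^{\theta,a}}$ and $a$ itself lies in this set.

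For DSIC, I would bound the two sides of the DSIC inequality separately. On the LHS, the new expected utility of type $\theta$ under $\bar\gamma^\theta$ is $\sum_a q^{\theta,a}[\sum_\omega F_{\theta, b^\theta(\bar p^{\theta,a}),\omega}\bar p^{\theta,a}_\omega - c_{\theta,b^\theta(\bar p^{\theta,a})}]$, which dominates $\sum_a q^{\theta,a}[\sum_\omega F_{\theta,a,\omega}\bar p^{\theta,a}_\omega - c_{\theta,a}]$ by the IC property of $b^\theta$ at $\bar p^{\theta,a}$; the latter equals the original LHS by linearity. On the RHS (type $\theta$ pretending to be $\hat\theta$), for every $p \in S^{\hat\theta,a}$ the inequality $\sum_\omega F_{\theta,b^\theta(p),\omega} p_\omega - c_{\theta,b^\theta(p)} \geq \sum_\omega F_{\theta, b^\theta(\bar p^{\hat\theta,a}),\omega} p_\omega - c_{\theta, b^\theta(\bar p^{\hat\theta,a})}$ holds by definition of $b^\theta(p)$; integrating over $p \in S^{\hat\theta,a}$ and summing over $a$ shows that the new RHS is no larger than the old one. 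Combining, the new LHS is $\geq$ old LHS $\geq$ old RHS $\geq$ new RHS, so DSIC is preserved.

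The subtle point, and the main obstacle, is that the desired cardinality condition $|\supp(\bar\gamma^\theta)\cap \hat{\mathcal{P}}^{\theta,a}|\le 1$ can still fail: two distinct atoms $\bar p^{\theta,a_1}$ and $\bar p^{\theta,a_2}$ may happen to induce the same best response $a^\star$ via tie-breaking. I would handle this by an iterative merging step: whenever two atoms of $\bar\gamma^\theta$ share the same best response, replace them by the single atom equal to their $q$-weighted average, with mass equal to the sum of their masses. The very same linearity-plus-tie-breaking arguments used above show that this merging preserves both the principal's expected utility and DSIC, and each merge strictly reduces the number of atoms. Since the support starts with at most $|A|$ atoms, the process terminates after at most $|A|-1$ iterations, producing a final $\bar\gamma^\theta$ whose support contains at most one contract in each $\hat{\mathcal{P}}^{\theta,a}$, as required.
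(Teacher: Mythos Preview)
Your proof is correct and follows essentially the same approach as the paper: collapse each $\gamma^\theta$ by replacing the mass on every best-response cell $\hat{\mathcal{P}}^{\theta,a}$ with its conditional expectation, then verify (via linearity of the IC inequalities and the principal-favoring tie-breaking rule) that the principal's utility weakly increases and DSIC is preserved---the paper's Jensen-on-the-max step for the RHS is exactly your pointwise argument $\sum_\omega F_{\theta,b^\theta(p),\omega}p_\omega-c_{\theta,b^\theta(p)}\ge \sum_\omega F_{\theta,a',\omega}p_\omega-c_{\theta,a'}$ integrated. You are in fact more careful than the paper on one point: the paper asserts that the collapsed menu ``clearly'' satisfies the support-cardinality condition, whereas you correctly observe that tie-breaking may push distinct atoms $\bar p^{\theta,a_1},\bar p^{\theta,a_2}$ into the same cell $\hat{\mathcal{P}}^{\theta,a^\star}$ and patch this with a finite iterated merge, which is the right fix.
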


Equipped with the result in Lemma~\ref{lem:finite_support}, we can now show that Problem~\ref{prob:opt_rand} does \emph{not} admit a maximum in general.
Formally:
\begin{theorem}
	There exist Bayesian principal-agent problem instances for which Problem~\ref{prob:opt_rand} does \emph{not} admit a maximum.
\end{theorem}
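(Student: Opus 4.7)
The plan is to exhibit an explicit small-scale Bayesian principal-agent instance in which the supremum $\SUP$ is strictly greater than what any single DSIC menu of randomized contracts attains. The non-existence will stem from a structural discontinuity caused by the principal-favored tie-breaking built into $b^\theta(\cdot)$: at a ``limit'' contract $p^\ast$ the set $\mathcal{B}_{p^\ast}^{\theta}$ of IC actions can be strictly larger than at nearby contracts, and in that enlarged set the principal-favored action can provide the agent with substantially higher utility than at any perturbation of $p^\ast$.

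Concretely, I would construct an instance with two types in which a contract $p^\ast$ (the intended contract for type $\theta_2$) has the property that $b^{\theta_2}(p^\ast)$ yields the principal good expected utility, while the IC set $\mathcal{B}_{p^\ast}^{\theta_1}$ contains multiple actions. Among these, $b^{\theta_1}(p^\ast)$ is the principal-favored one, and by design it gives type $\theta_1$ a strictly higher utility than any strict perturbation of $p^\ast$ would (because such a perturbation collapses the tie and removes the high-paying action from $\mathcal{B}^{\theta_1}$). I would then parameterize a family $\{\Gamma_\epsilon\}_{\epsilon > 0}$ in which $\gamma^{\theta_2}$ is a perturbation of $p^\ast$ by $\epsilon$ in a direction that strictly removes the problematic action from $\mathcal{B}^{\theta_1}$, while $\gamma^{\theta_1}$ is a fixed principal-optimal contract for type $\theta_1$. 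I would verify that $\Gamma_\epsilon$ is DSIC for every $\epsilon > 0$ and that its principal expected utility $V(\epsilon)$ admits a limit $V^\ast$ as $\epsilon \to 0^+$, yielding $\SUP \geq V^\ast$.

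The key step is to rule out attainability of $V^\ast$ by any DSIC menu. By Lemma~\ref{lem:finite_support}, it suffices to consider menus $\Gamma = \{\gamma^\theta\}_{\theta \in \Theta}$ with $|\supp(\gamma^\theta) \cap \hat\pa^{\theta,a}| \leq 1$ for every $\theta,a$, so the search is over finitely-supported distributions. A case analysis on the support of $\gamma^{\theta_2}$ would show that attaining principal utility $V^\ast$ forces $\supp(\gamma^{\theta_2})$ to concentrate exactly on $p^\ast$ (or contracts indistinguishable from it in terms of $b^{\theta_2}$), since any strict perturbation of $p^\ast$ in the direction defining $\Gamma_\epsilon$ strictly decreases the principal's expected utility from type $\theta_2$, and any other perturbation either violates IC for $\theta_2$ or similarly loses utility. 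But if $p^\ast \in \supp(\gamma^{\theta_2})$, then by the definition of $b^{\theta_1}$, the right-hand side of the DSIC constraint for $\theta_1$ in Equation~\eqref{eq:opt_rand_ic} exceeds the left-hand side by a strictly positive amount, violating DSIC.

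The main obstacle will be handling genuine randomization: I must argue that no convex combination of contracts in $\supp(\gamma^{\theta_2})$ can simultaneously (a) match the principal utility $V^\ast$ and (b) keep the expected utility of type $\theta_1$ from deviating strictly below what $\gamma^{\theta_1}$ delivers. I would exploit linearity of expectations against the crucial non-continuity of $b^{\theta_1}(\cdot)$ at $p^\ast$: spreading mass away from $p^\ast$ strictly lowers the principal's objective, but only placing mass at $p^\ast$ triggers the discontinuous jump in $b^{\theta_1}$ that kills DSIC. Formalizing this trade-off and checking it carries over to $\gamma^{\theta_1}$ being randomized as well is where the bulk of the technical care lies.
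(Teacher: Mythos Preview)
Your proposal has a genuine gap: the mechanism you describe cannot produce non-attainability. The core misconception is in the sentence ``in that enlarged set the principal-favored action can provide the agent with substantially higher utility than at any perturbation of $p^\ast$.'' By definition of the best-response set, every action in $\mathcal{B}^{\theta_1}_{p^\ast}$ gives type $\theta_1$ exactly the \emph{same} utility under $p^\ast$, and the agent's utility
\[
U^{\theta_1}(p)\;=\;\max_{a\in A}\Big\{\textstyle\sum_{\omega}F_{\theta_1,a,\omega}\,p_\omega-c_{\theta_1,a}\Big\}
\]
is a pointwise maximum of linear functions, hence continuous in $p$. The discontinuity of the selector $b^{\theta_1}(\cdot)$ is irrelevant to the value of $U^{\theta_1}$. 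Consequently, the right-hand side of the DSIC constraint~\eqref{eq:opt_rand_ic} for $\theta_1$ is continuous in $\gamma^{\theta_2}$ as long as the contracts stay bounded, so your claim that ``the right-hand side exceeds the left-hand side by a strictly positive amount'' at $p^\ast$ cannot hold if the constraint is satisfied for every $\epsilon>0$.

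More structurally: if all contracts lie in a fixed box $[0,M]^m$, then (using Lemma~\ref{lem:finite_support} to restrict to supports of size at most $|A|$ per type) the search is over a compact parameter space, the DSIC constraints cut out a closed set because agent utilities are continuous, and the principal's objective is upper semicontinuous thanks to principal-favored tie-breaking. An upper semicontinuous function on a compact set attains its supremum, so any bounded-perturbation scheme like yours will have its limit menu $\Gamma_0$ DSIC with principal's utility at least $V^\ast$. The phenomenon the theorem asserts \emph{requires} unbounded payments. In the paper's construction the essential device is a randomized contract for one type that, with probability $O(\epsilon)$, plays a contract with a payment of order $1/\epsilon$ on an outcome that only that type can trigger; this keeps the agent's expected payment fixed while letting the probability mass on the ``bad'' contract vanish, and there is no limit menu because the payments diverge. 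Your two-type, bounded-perturbation plan cannot replicate this.
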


\begin{proof}
	Let us consider an instance defined as follows. There are three possible agent's types, namely $\Theta = \{ \theta_1$, $\theta_2$, $\theta_3\}$, with $\mu_{\theta_1}=\mu_{\theta_2}=\mu_{\theta_3}=\frac{1}{3}$.
	The agent has three actions available, namely $A = \{ a_1$, $a_2, a_3 \}$, while the set of possible outcomes is $\Omega = \{ \omega_1, \omega_2, \omega_3, \omega_4\}$.
	Type $\theta_1$ is such that $F_{\theta_1,a_1,\omega_1}=1$, $c_{\theta_1,a_1}=0$, $F_{\theta_1,a_2,\omega_3}=1$, $c_{\theta_1,a_2}=0$, $F_{\theta_1,a_3,\omega_3}=1$, and $c_{\theta_1,a_3}=0$.
	Type $\theta_2$ is such that $F_{\theta_2,a_1,\omega_1}=1$, $c_{\theta_2,a_1}=0$, $F_{\theta_2,a_2,\omega_2}=1$, $c_{\theta_2,a_2}=0$, $F_{\theta_2,a_3,\omega_4}=1$, and $c_{\theta_2,a_3}=0$.
	Type $\theta_3$ is such that $F_{\theta_3,a_1,\omega_2}=1$, $c_{\theta_3,a_1}=\frac{1}{4}$, $F_{\theta_3,a_2,\omega_3}=1$, $c_{\theta_3,a_2}=0$, $F_{\theta_3,a_3,\omega_3}=1$, and $c_{\theta_3,a_3}=0$.
	Finally, the principal's rewards are $r_{\omega_1}=1$, $r_{\omega_2}=\frac{3}{4}$, and $r_{\omega_3}=r_{\omega_4}=0$.
	
	As a first step, we show that, for every $\epsilon>0$, there exists a DSIC menu of randomized contracts with principal's expected utility at least $\frac{3}{4}-\epsilon$.
	Let $p^1 \in \mathbb{R}_+^m$ be a contract such that $p^1_\omega=0$ for all $\omega \in \Omega$, $p^2\in \mathbb{R}_+^m$ be such that $p^2_{\omega_4}=\frac{1}{12\epsilon}$ and $p^2_\omega=0$ for all $\omega \neq \omega_4$, and $p^3\in \mathbb{R}_+^m$ be such that $p^3_{\omega_2}=\frac{1}{4}$ and  $p^3_\omega=0$ for all $\omega \neq \omega_2$.
	Let us consider the menu $\Gamma=\{\gamma^\theta\}_{\theta \in \Theta}$ defined so that $\gamma^{\theta_1}_{p^1}=1$, $\gamma^{\theta_2}_{p^2}=3\epsilon$,  $\gamma^{\theta_2}_{p_1}=1-3\epsilon$, and $\gamma^{\theta_3}_{p_3}=1$.
	It Is easy to check that such a menu is DSIC.
	Moreover, the principal's expected utility is at least of
	\begin{align*}
		\frac{1}{3} \left[ \gamma^{\theta_1}_{p^1}+\gamma^{\theta_2}_{p^1} -\frac{1}{12\epsilon} \gamma^{\theta_2}_{p^2}+ \left( \frac{3}{4}-\frac{1}{4} \right) \gamma^{\theta_3}_{p^3} \right]&=\frac{1}{3}\left[ 1+1-3\epsilon-\frac{1}{4}+\frac{3}{4}-\frac{1}{4} \right]\\
		&=\frac{3}{4}- \epsilon .
	\end{align*}
	Hence, \SUP\ is at least $\frac{3}{4}$ for the considered instance.
	
	We conclude the proof by showing that any DSIC menu $\bar \Gamma=\{\bar \gamma^\theta\}_{\theta \in \Theta}$ results in a principal's expected utility that is strictly smaller than $\frac{3}{4}$.
	By contradiction, we assume that there exists a DSIC menu $\bar \Gamma=\{\bar \gamma^\theta\}_{\theta \in \Theta}$ of randomized contracts that provides the principal with expected utility greater than or equal to $\frac{3}{4}$.
	
	First, by Lemma~\ref{lem:finite_support}, we can focus on menus $ \bar \Gamma=\{ \bar \gamma^\theta\}_{\theta \in \Theta}$ such that for every $\theta \in \Theta$, it holds $\left| \supp (\bar \gamma^\theta) \cap \hat\pa^{\theta, a} \right| \leq 1$ for all $a \in A$.
	If $\left| \supp (\bar \gamma^{\theta_3}) \cap \hat\pa^{\theta_3, a_1} \right| = 1$, let $p^4 \in \mathbb{R}_+^m$ be the only contract in $\hat\pa^{\theta_3,a_1}$ such that $\bar \gamma^{\theta_3}_{p^4}>0$.
	Notice that $p^4_{\omega_2}\ge \frac{1}{4}$ by definition of $\hat\pa^{\theta_3, a_1}$.
	Instead, if $\left| \supp ( \bar\gamma^{\theta_3}) \cap \hat\pa^{\theta_3, a_1} \right| = 0$, let $p^4 \in \mathbb{R}_+^m$ be such that $p^4_\omega=0$ for all $\omega \in \Omega$.
	By the DSIC conditions for type $\theta_2$, we have that type $\theta_2$ gets an expected payment at least of $\frac{1}{4} \bar \gamma^{\theta_3}_{p^4} $.
	Moreover, if $\left| \supp (\bar \gamma^{\theta_2}) \cap \hat\pa^{\theta_2, a_1} \right| = 1$, let $p^5 \in \mathbb{R}_+^m$ be the only contract in $\hat\pa^{\theta_2,a_1}$ such that $\bar \gamma^{\theta_2}_{p^5}>0$.
	Otherwise, let $p^5 \in \mathbb{R}_+^m$ bu such that $p^5_\omega=0$ for all $\omega \in \Omega$.
	Finally, If $\left| \supp ( \bar \gamma^{\theta_2}) \cap \hat\pa^{\theta_2, a_2} \right| = 1$, let $p^6 \in \mathbb{R}_+^m$ be the only contract in $\hat\pa^{\theta_2,a_2}$ such that $\bar \gamma^{\theta_2}_{p^6}>0$; otherwise, let $p^6 \in \mathbb{R}_+^m$ be such that $p^6_\omega=0$ for every $\omega \in \Omega$.
	
	Now, suppose that action $a_1$ is incentivized for type $\theta_2$ with probability strictly smaller than $1$, \emph{i.e.}, $\bar \gamma^{\theta_2}_{p^5}<1$.
	Then, the overall principal's expected utility is at most
	\begin{align*}
		\frac{1}{3} \left[ 1+ \bar \gamma^{\theta_2}_{p^5}+\frac{3}{4}\bar \gamma^{\theta_2}_{p^6} -\frac{1}{4} \bar \gamma^{\theta_3}_{p^4}   +\left( \frac{3}{4}-\frac{1}{4} \right)  \bar \gamma^{\theta_3}_{p^4} \right] &=\frac{1}{3}\left[ 1+\bar \gamma^{\theta_2}_{p^5}+\frac{3}{4}\bar \gamma^{\theta_2}_{p^6} +\frac{1}{4}  \bar \gamma^{\theta_3}_{p^4} \right]\\
		& <\frac{1}{3}\, \frac{9}{4}= \frac{3}{4},
	\end{align*}
	which contradicts our initial assumption on $\bar \Gamma=\{ \bar \gamma^\theta\}_{\theta \in \Theta}$.
	
	As a result, it must hold $\bar \gamma^{\theta_2}_{p^5}=1$, and, by the DSIC conditions for type $\theta_2$, it must be the case that $p^5_{\omega_1}\ge \frac{1}{4}\bar \gamma^{\theta_3}_{p^4}$.
	Then, by the DSIC conditions for type $\theta_1$:
	\[
		\mathbb{E}_{p \sim \bar \gamma^{\theta_1}} \left[ \sum_{\omega \in \Omega} F_{\theta, b^{\theta_1}(p), \omega} p_\omega\right]\ge \bar \gamma^{\theta_2}_{p^5}p^5_{\omega_1}\ge\frac{1}{4}\bar \gamma^{\theta_3}_{p^4} .
	\]
	Thus, the overall principal's expected utility must be at most
	\[
		\frac{1}{3}[1-\frac{1}{4}\bar \gamma^{\theta_3}_{p^4} +  1 - \bar \gamma^{\theta_3}_{p^4} \frac{1}{4}  +  \bar \gamma^{\theta_3}_{p^4} (\frac{3}{4}-\frac{1}{4})=\frac{1}{3}[1+1 ] = \frac{2}{3},
	\]
	which contradicts our initial assumption on $\bar \Gamma$, concluding the proof.
\end{proof}

In the remaining part of this section, we show how to design an `` almost-optimal'' DSIC menu of randomized contracts, which is one providing principal' expected utility arbitrary close to $\SUP$.
First, we show that bounded payments are sufficient to provide an almost-optimal solution.
Formally:
\begin{restatable}{lemma}{lemmaRandTwo}\label{lem:bounded_payments}
	For every $\epsilon>0$, there always exists a DSIC menu of randomized contracts $\Gamma = \{ \gamma^\theta \}_{\theta \in \Theta}$ with principal's expected utility at least $\emph{\SUP}-\epsilon$ such that, for every $\theta \in \Theta$, it holds (i) $\left| \supp (\gamma^\theta) \cap \hat\pa^{\theta, a} \right| \leq 1$ for all $a \in A$, and (ii) $p_\omega \leq  C(I,\epsilon)$ for all $p \in \supp(\gamma^\theta)$ and $\omega \in \Omega$, where $C(I,\epsilon) \in O( \frac{1}{\epsilon} \cdot 2^{\poly(I)})$ and $I$ denotes the size of the problem instance.
\end{restatable}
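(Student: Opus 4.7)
The plan is to start from an arbitrary DSIC menu $\Gamma^*$ with principal's expected utility at least $\SUP-\epsilon$ and, after passing to a structured support, re-express the residual problem as a linear program whose basic feasible solutions automatically have polynomial bit complexity, hence bounded payments. By Lemma~\ref{lem:finite_support}, I may assume that $\Gamma^*$ already satisfies property~(i), so it is parameterized by weights $q^{*,\theta,a}\ge 0$ with $\sum_a q^{*,\theta,a}=1$ and a single contract $p^{*,\theta,a}\in\hat\pa^{\theta,a}$ for each $(\theta,a)$ with $q^{*,\theta,a}>0$. For every such $(\theta,a)$ and every $\theta'\in\Theta$, I record the cross-type best response $a^{*,\theta,a,\theta'}:=b^{\theta'}(p^{*,\theta,a})$; together with the intended play ``type $\theta$ plays $a$ under $p^{*,\theta,a}$'', this combinatorial structure is fixed in the rest of the argument.

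Next, I introduce LP variables $\tilde q^{\theta,a}\ge 0$ and $\tilde Q^{\theta,a}_\omega\ge 0$, with the substitution $\tilde Q^{\theta,a}_\omega=\tilde q^{\theta,a}\tilde p^{\theta,a}_\omega$. The principal's objective in Problem~\ref{prob:opt_rand}, the DSIC inequalities~\eqref{eq:opt_rand_ic}, and the best-response conditions on each contract $\tilde p^{\theta,a}$ (that $a$ is a best response for type $\theta$, and that $a^{*,\theta,a,\theta'}$ is a best response for every other type $\theta'$) all become linear in $(\tilde q,\tilde Q)$ once the BR inequalities are scaled through by $\tilde q^{\theta,a}$; together with the simplex constraints $\sum_a\tilde q^{\theta,a}=1$ and nonnegativity, this yields an LP with $\poly(|\Theta|,|A|,|\Omega|)$ variables and constraints whose rational coefficients have bit complexity $\poly(I)$. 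The menu $\Gamma^*$ itself witnesses feasibility of this LP with objective at least $\SUP-\epsilon$, and the objective is bounded above by $1$, so an optimal basic feasible solution $(\tilde q^{**},\tilde Q^{**})$ exists and attains value at least $\SUP-\epsilon$.

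The standard LP bit-complexity bounds then guarantee that every positive $\tilde q^{**,\theta,a}$ is at least $2^{-\poly(I)}$ and every $\tilde Q^{**,\theta,a}_\omega$ is at most $2^{\poly(I)}$, so the induced contracts satisfy $\tilde p^{**,\theta,a}_\omega=\tilde Q^{**,\theta,a}_\omega/\tilde q^{**,\theta,a}\le 2^{\poly(I)}\in O((1/\epsilon)\cdot 2^{\poly(I)})$, giving~(ii); property~(i) is automatic because the LP carries at most one pair $(\tilde q^{**,\theta,a},\tilde p^{**,\theta,a})$ per $(\theta,a)$. The main obstacle is interpreting this LP optimum as a genuine DSIC menu: the BR constraints in the LP are non-strict, so the optimum may have ties, and one must verify that the principal-favored tie-breaking rule does not corrupt the analysis. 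This works because every agent's expected utility from any reported type depends only on the tied best-response value (so the DSIC inequalities of the recovered menu continue to hold), and because principal-favored tie-breaking can only replace the intended action by one that yields at least as much principal utility (so the actual principal utility of the recovered menu is no less than the LP value), completing the proof.
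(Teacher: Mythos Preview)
Your approach is genuinely different from the paper's (which mixes the near-optimal menu with a fixed ``small'' menu so that every probability mass is bounded below, forcing the averaged payments in Lemma~\ref{lem:finite_support} to be bounded). The LP-vertex idea is elegant, but there is a real gap.

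The substitution $\tilde Q^{\theta,a}_\omega=\tilde q^{\theta,a}\tilde p^{\theta,a}_\omega$ is not invertible when $\tilde q^{\theta,a}=0$: nothing in your LP forces $\tilde Q^{\theta,a}=0$ in that case. At an optimal basic feasible solution you may have ``ghost'' blocks with $\tilde q^{**,\theta,a}=0$ and $\tilde Q^{**,\theta,a}\neq 0$. Such a block contributes the nonnegative term $\sum_\omega F_{\theta,a,\omega}\tilde Q^{**,\theta,a}_\omega$ to the \emph{left-hand side} of type~$\theta$'s DSIC inequalities in the LP, yet it is dropped when you reconstruct the menu (since it carries zero weight). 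The reconstructed menu therefore need not satisfy DSIC for type~$\theta$; your tie-breaking discussion does not cover this, because it concerns which action is played under a contract that \emph{is} in the support, not phantom mass that never materialises.

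That this gap is fatal rather than cosmetic can be seen as follows. Your payment bound $2^{\poly(I)}$ would be independent of~$\epsilon$. Since there are only finitely many possible combinatorial structures (active pairs $(\theta,a)$ and recorded best responses $a^{*,\theta,a,\theta'}$), some fixed structure is selected for a sequence $\epsilon_n\to 0$; for that structure the LP, its optimal vertex, and hence the reconstructed menu are all \emph{the same}. If the reconstructed menu were DSIC, its principal's utility would be at least $\SUP-\epsilon_n$ for every~$n$, hence equal to~$\SUP$, so the supremum in Problem~\ref{prob:opt_rand} would always be attained. This contradicts the instance in the paper where Problem~\ref{prob:opt_rand} has no maximum. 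Thus the argument, as written, cannot be correct; you need either an explicit mechanism preventing ghost $\tilde Q$ at the chosen vertex, or a different way to pass from the LP optimum to a genuine DSIC menu without losing the utility guarantee.
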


%
%

Given a Bayesian principal-agent problem instance $(\Theta, A, \Omega)$ and an $\epsilon>0$, we let $C(I,\epsilon)$ be defined as in Lemma~\ref{lem:bounded_payments}.
Moreover, let $\pa^{\epsilon} \coloneqq [0,C(I,\epsilon)]^m$.
For every $\avec = (a_\theta)_{\theta \in \Theta} \in \bigtimes_{\theta \in \Theta} A$, where $\avec$ is a tuple specifying an action $a_\theta$ for each agent's type $\theta \in \Theta$, we let $\pa^{\avec,\epsilon} \coloneqq \pa^{\epsilon} \cap \Big( \bigcap_{\theta \in \Theta} \pa^{\theta, a_\theta} \Big)$ and, additionally, $\hat\pa^{\avec,\epsilon}\coloneqq \pa^{\epsilon} \cap \Big( \bigcap_{\theta \in \Theta} \hat\pa^{\theta, a_\theta} \Big)$.
We also define the set of all the vertexes of the closed polyhedra $\pa^{\avec,\epsilon}$ as $\pa^{*,\epsilon} \coloneqq \bigcup_{\avec \in \bigtimes_{\theta \in \Theta} A} V(\pa^{\avec,\epsilon})$, where $V(\cdot)$ denotes the set of vertexes of the polyhedron given as input.
Then, we can prove the following last lemma, which shows that for every $\epsilon>0$, in order to obtain a DSIC menu of randomized contracts with principal's expected utility at least $\SUP -\epsilon$, it is sufficient to restrict the attention to randomized contracts placing positive probability only on contracts in the (finite) set $\pa^{*,\epsilon}$.
\begin{restatable}{lemma}{lemmaRandThree}\label{lem:finite_support_star}
	For every $\epsilon>0$, there always exists a DSIC menu of randomized contracts supported on $\pa^{*,\epsilon}$ with principal's expected utility at least $\emph{\SUP}-\epsilon$.
\end{restatable}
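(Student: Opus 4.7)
The plan is to start from the menu provided by Lemma~\ref{lem:bounded_payments}, whose payments are bounded by $C(I,\epsilon)$ and which achieves principal's expected utility at least $\SUP - \epsilon$, and to refine each contract $p$ in its (finite) support by a convex combination of vertices of a polytope from the family $\{\pa^{\avec,\epsilon}\}_\avec$. For each such $p$, set $\avec(p) \coloneqq (b^{\theta'}(p))_{\theta' \in \Theta}$. Since $b^{\theta'}(p) \in \mathcal{B}_p^{\theta'}$ for every $\theta'$, it follows that $p \in \pa^{\avec(p),\epsilon}$. Because $\pa^{\avec(p),\epsilon}$ is a closed polyhedron intersected with the cube $[0,C(I,\epsilon)]^m$, it is a polytope; hence $p$ can be written as a convex combination $p = \sum_i \lambda_i^p \, q_i^p$ with $q_i^p \in V(\pa^{\avec(p),\epsilon}) \subseteq \pa^{*,\epsilon}$. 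I then define $\bar\Gamma = \{\bar\gamma^\theta\}_{\theta \in \Theta}$ by splitting the mass of each $p$ across its vertices proportionally to the weights $\lambda_i^p$; by construction, $\bar\Gamma$ is supported on $\pa^{*,\epsilon}$.

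The substance of the argument is to show that $\bar\Gamma$ is DSIC and yields principal's expected utility at least as large as the original menu. The critical observation is that at every vertex $q_i^p \in \pa^{\avec(p),\epsilon}$, the action $a_{\theta'}(p) = b^{\theta'}(p)$ is IC for type $\theta'$, so type $\theta'$'s expected utility at $q_i^p$ under $b^{\theta'}(q_i^p)$ equals the one obtained by playing $a_{\theta'}(p)$ (since all IC actions give the agent the same utility). Linearity of payments and costs then yields
\[
\sum_i \lambda_i^p \Bigl( \sum_{\omega \in \Omega} F_{\theta',b^{\theta'}(q_i^p),\omega} \, q_{i,\omega}^p - c_{\theta',b^{\theta'}(q_i^p)} \Bigr) = \sum_{\omega \in \Omega} F_{\theta',b^{\theta'}(p),\omega} \, p_\omega - c_{\theta',b^{\theta'}(p)},
\]
so both sides of each DSIC constraint~\eqref{eq:opt_rand_ic} are unchanged by the replacement, and $\bar\Gamma$ is DSIC. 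For the objective, the principal-favorable tie-breaking ensures that at $q_i^p$ the principal's conditional utility is at least the one they would obtain if type $\theta'$ played $a_{\theta'}(p)$; averaging with weights $\lambda_i^p$ and using linearity, this lower bound equals the principal's utility at $p$, so summing across types and contracts shows that $\bar\Gamma$ is at least as good as $\Gamma$, hence at least $\SUP - \epsilon$.

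The main obstacle I anticipate is the interaction between the principal-favorable tie-breaking rule and the vertex decomposition: at an interior contract $p$ the rule singles out one best response per type, while at the boundary vertices $q_i^p$ several actions can be tied and the rule may select a different IC action. The argument hinges on the fact that any such switch can only help the principal while keeping the agent's utility unchanged, so both the DSIC constraints and the objective behave monotonically under the replacement. The remaining verifications---that the $\bar\gamma^\theta$'s are valid probability distributions (which follows from $\sum_i \lambda_i^p = 1$ and $\sum_p \gamma^\theta_p = 1$), and that $\pa^{*,\epsilon}$ is a finite set (since each polytope $\pa^{\avec,\epsilon}$ has finitely many vertices, and there are finitely many action profiles $\avec$)---are routine.
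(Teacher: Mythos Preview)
Your proposal is correct and follows essentially the same approach as the paper's proof: start from the bounded-payment menu of Lemma~\ref{lem:bounded_payments}, decompose each contract $p$ as a convex combination of vertices of $\pa^{\avec(p),\epsilon}$ via Carath\'eodory, then use the fact that $b^{\theta'}(p)$ remains IC at every such vertex to conclude that the agent's utility (hence both sides of every DSIC constraint) is preserved exactly, while the principal-favorable tie-breaking makes the objective weakly larger. The paper carries out precisely this argument, with the same key observations you identify.
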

%

%
%

Lemma~\ref{lem:finite_support_star} allows us to formulate the following LP which, given an $\epsilon>0$, has an optimal solution of value at least $\SUP-\epsilon$.
\begin{subequations}\label{prob:finite_lp}
	\begin{align}
		\max_{\Gamma = \{ \gamma^\theta \}_{\theta \in \Theta} } & \,\, \sum_{\theta \in \Theta} \mu_\theta \sum_{p \in \pa^{*,\epsilon}} \gamma_p^\theta \left[  \sum_{\omega \in \Omega} F_{\theta, b^\theta(p), \omega} r_\omega - \sum_{\omega \in \Omega} F_{\theta, b^\theta(p), \omega} p_\omega  \right] \quad \text{s.t.} \\
		& \hspace{-1.4cm} \sum_{p \in \pa^{*,\epsilon}} \gamma_p^\theta \left[ \sum_{\omega \in \Omega} F_{\theta, b^\theta(p), \omega} p_\omega - c_{\theta, b^\theta(p)}  \right] \geq \sum_{p \in \pa^{*,\epsilon}} \gamma_p^{\hat \theta} \left[ \sum_{\omega \in \Omega} F_{\theta, b^\theta(p), \omega} p_\omega - c_{\theta, b^\theta(p)}  \right] \nonumber\\ 
		& \hspace{7.7cm} \forall \theta \neq \hat \theta \in \Theta \label{eq:finite_lp_cons_1}\\
		& \hspace{-1.4cm} \sum_{p \in \pa^{*,\epsilon}} \gamma_p^\theta = 1 \hspace{7.8cm} \forall \theta \in \Theta.\label{eq:finite_lp_cons_2}
	\end{align}
\end{subequations}

Notice that LP~\ref{prob:finite_lp} has an exponential number of variables, since the probability distributions $\gamma^\theta$ are defined over contracts in $\pa^{*,\epsilon}$, and these may be exponentially many in the size of the problem instance.

Nevertheless, the LP has polynomially many constraints, and, thus, as we show next, we can solve it in polynomial time by applying the \emph{ellipsoid algorithm} to its dual program, which features polynomially-many variables and exponentially-many constraints.\footnote{An LP analogous to LP~\ref{prob:finite_lp} also arises when dealing with Bayesian persuasion problems in which the receiver can be of multiple types (see~\citep{castiglioni2022bayesian}). However, in that case, the counterparts of the terms $\gamma^\theta_{p}$ and $p_\omega$ (namely, for a receiver's type $k$, the probability $\gamma^k_\xi$ of a given posterior $\xi$ and the probability $\xi_\theta$ that the posterior assigns to a given state of nature $\theta$) always appear in a product $\gamma^\theta_p \, p_\omega$, which can be replaced by a suitably-defined new variable (subject to some consistency constraints). This property of the Bayesian persuasion setting allows to formulate the problem as an LP with polynomially-many variables and constraints. However, in our principal-agent setting, the variables $\gamma^\theta_{p}$ also appear without being multiplied by $p_\omega$, which prevents us from applying the same trick. As a result, in our setting, we had to resort to the ellipsoid algorithm.}

The dual of LP~\ref{prob:finite_lp} reads as follows:

\begin{subequations}\label{prob:finite_lp_dual}
	\begin{align}
	\min_{y \leq 0, t} & \,\, \sum_{\theta \in \Theta} t_\theta  \quad \text{s.t.} \\
	& \sum_{\hat\theta \in \Theta: \hat\theta \neq \theta} y_{\theta, \hat\theta} \left(  \sum_{\omega \in \Omega} F_{\theta, b^\theta(p), \omega} p_\omega - c_{\theta, b^\theta(p)}  \right) \nonumber \\
	& \quad\quad\quad - \sum_{\hat\theta \in \Theta: \hat\theta \neq \theta} y_{\theta, \hat\theta} \left(  \sum_{\omega \in \Omega} F_{\hat \theta, b^{\hat \theta}(p), \omega} p_\omega - c_{\hat \theta, b^{\hat \theta}(p)}  \right) + t_\theta \geq \nonumber\\
	& \quad\quad\quad   \mu_\theta \left(  \sum_{\omega \in \Omega} F_{\theta, b^\theta(p), \omega} r_\omega - \sum_{\omega \in \Omega} F_{\theta, b^\theta(p), \omega} p_\omega  \right)  \hspace{0.3cm}\forall \theta \in \Theta, \forall p\in \pa^{*,\epsilon}, \label{eq:finite_lp_dual_cons}
	\end{align}
\end{subequations}
where $y \in \mathbb{R}^{\ell (\ell - 1)}$ is a vector of dual variables whose components $y_{\theta, \hat\theta}$ for $\theta, \hat \theta \in \Theta: \theta \neq \hat \theta$ are the dual variables corresponding to Constraints~\eqref{eq:finite_lp_cons_1}, while $t \in \mathbb{R}^\ell$ is another vector of dual variables whose components $t_\theta$ for $\theta \in \Theta$ are the dual variables of Constraints~\eqref{eq:finite_lp_cons_2}.

The dual LP~\ref{prob:finite_lp_dual} has polynomially-many variables and exponentially-many constraints, and it can be optimally solved in polynomial time by means of the ellipsoid algorithm provided that a suitable polynomial-time \emph{separation oracle} is available.
In particular, given a pair $(y,t)$ assigning values to the variables of the dual LP, the separation oracle that we provide outputs a pair $\theta \in \Theta$, $ p \in \pa^{*,\epsilon}$ such that the corresponding inequality in Constraints~\eqref{eq:finite_lp_dual_cons} is violated for $(y,t)$, if any; otherwise, the oracle concludes that $(y,t)$ is feasible for the dual LP.
Formally, the separation oracle that we provide solves (in polynomial time) the following optimization problem for each $\theta \in \Theta$:
\begin{align}\label{prob:separation}
\max_{p \in \pa^{*,\epsilon}} \Bigg\{ & \mu_\theta \left(  \sum_{\omega \in \Omega} F_{\theta, b^\theta(p), \omega} r_\omega - \sum_{\omega \in \Omega} F_{\theta, b^\theta(p), \omega} p_\omega  \right) \\
& - \sum_{\hat\theta \in \Theta: \hat\theta \neq \theta} y_{\theta, \hat\theta} \left(  \sum_{\omega \in \Omega} F_{\theta, b^\theta(p), \omega} p_\omega - c_{\theta, b^\theta(p)}  \right)  \nonumber \\
& + \sum_{\hat\theta \in \Theta: \hat\theta \neq \theta} y_{\theta, \hat\theta} \left(  \sum_{\omega \in \Omega} F_{\hat \theta, b^{\hat \theta}(p), \omega} p_\omega - c_{\hat \theta, b^{\hat \theta}(p)}  \right) \Bigg\}. \nonumber
\end{align} 
Indeed, if the the value of the maximization above is greater than $t_\theta$ for some $\theta \in \Theta$, then the separation oracle outputs the pair $(\theta, p)$ with $p \in \pa^{*,\epsilon}$ being a contract for which the maximization is attained, since the constraint corresponding to $\theta$ and $p$ is violated.
Instead, if $t_\theta$ is less than or equal to the value of the maximization above for every $\theta \in \Theta$, then the oracle concludes that $(y,t)$ is feasible since no constraint is violated.

The following lemma shows that Problem~\ref{prob:separation} above can indeed be solved in polynomial time.
Notice that $\epsilon$ appears only in the coefficients of Problem~\ref{prob:finite_lp}. Indeed, such coefficients are at most $C(I,\epsilon)$. Hence, our algorithm runs in time polynomial in the number of bits needed to represent the coefficients, and, thus, polynomial in $\log(1/\epsilon)$.
Formally, we can state the following:
\begin{restatable}{lemma}{lemmaRandFour}
	For every $\epsilon>0$, there exists a separation oracle for LP~\ref{prob:finite_lp_dual}, that runs in time $\poly(I,\log(1/\epsilon))$, where $I$ is the size of the problem instance.
\end{restatable}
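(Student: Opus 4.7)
The plan is to construct a separation oracle by, for each agent's type $\theta \in \Theta$, solving Problem~\ref{prob:separation}, i.e., maximizing $g_\theta(p)$ over $p \in \pa^{*,\epsilon}$, where $g_\theta$ denotes the objective of Problem~\ref{prob:separation}. The first observation is that although $g_\theta(p)$ depends on $p$ through the best-response functions $b^\theta(p)$ and $b^{\hat\theta}(p)$ for $\hat \theta \neq \theta$, it is piecewise linear: once an action tuple $\avec = (a_{\theta'})_{\theta' \in \Theta}$ is fixed and we restrict to the closed polyhedron $\pa^{\avec,\epsilon}$, the resulting value $\tilde g_\theta(p,\avec)$ is linear in $p$. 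Combining this with the fact that $b^\theta(p)$ selects a principal-favored IC action while $b^{\hat\theta}(p)$ attains $U^{\hat\theta}(p)$, one shows
\[
\max_{p\in \pa^{*,\epsilon}} g_\theta(p) \;=\; \max_{\avec\in \bigtimes_{\theta' \in \Theta} A}\; \max_{p \in \pa^{\avec,\epsilon}} \tilde g_\theta(p,\avec).
\]

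Directly enumerating action tuples is infeasible since there are $n^\ell$ of them. The crucial idea is to enumerate only over the $n$ choices of $a^* \in A$ for the single type $\theta$ itself, and to handle the maxima over actions of the remaining types \emph{implicitly} using the sign structure of the dual variables. Since $y_{\theta,\hat\theta} \leq 0$ for all $\hat \theta \neq \theta$, and the third line of $g_\theta$ contains $\sum_{\hat\theta \neq \theta} y_{\theta,\hat\theta} U^{\hat\theta}(p)$, introducing auxiliary variables $u_{\hat\theta}$ together with constraints $u_{\hat\theta} \geq \sum_\omega F_{\hat\theta,a,\omega} p_\omega - c_{\hat\theta,a}$ for every $a \in A$ forces $u_{\hat\theta} = U^{\hat\theta}(p)$ at any LP optimum. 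Concretely, for each $\theta \in \Theta$ and each $a^* \in A$, one solves the polynomial-size LP
\begin{align*}
\max_{p, u} \; & \mu_\theta \sum_\omega F_{\theta,a^*,\omega}(r_\omega - p_\omega) - \Big(\sum_{\hat\theta \neq \theta} y_{\theta,\hat\theta}\Big) \Big(\sum_\omega F_{\theta,a^*,\omega} p_\omega - c_{\theta,a^*}\Big) + \sum_{\hat\theta \neq \theta} y_{\theta,\hat\theta}\, u_{\hat\theta} \\
\text{s.t.} \; & \sum_\omega F_{\theta,a^*,\omega} p_\omega - c_{\theta,a^*} \geq \sum_\omega F_{\theta,a,\omega} p_\omega - c_{\theta,a} && \forall a \in A,\\
& u_{\hat\theta} \geq \sum_\omega F_{\hat\theta,a,\omega} p_\omega - c_{\hat\theta,a} && \forall \hat\theta \neq \theta, \forall a \in A,\\
& 0 \leq p_\omega \leq C(I,\epsilon) && \forall \omega \in \Omega.
\end{align*}

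Taking the maximum of the resulting LP values over $a^* \in A$ yields $\max_{p \in \pa^{*,\epsilon}} g_\theta(p)$. If this value exceeds $t_\theta$ for some $\theta$, the oracle outputs a violated constraint; otherwise $(y,t)$ is declared feasible. To return a point genuinely lying in $\pa^{*,\epsilon}$ (rather than just a point of the LP's feasible region), one reads off from the LP solution the tight inequalities defining the actions $\bar a_{\hat\theta}$ achieving $u_{\hat\theta} = U^{\hat\theta}(p)$ for each $\hat\theta$ and re-optimizes over $\pa^{\avec,\epsilon}$ with $\avec = (a^*,\bar a_{\hat\theta})_{\hat\theta \neq \theta}$, whose LP optimum is a vertex of $\pa^{\avec,\epsilon}$ and hence belongs to $\pa^{*,\epsilon}$.

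For the runtime, the LP has $m + \ell - 1$ variables and $O(n\ell + m)$ constraints, and all coefficients involving $p_\omega$ are bounded by $C(I,\epsilon) \in O(\tfrac{1}{\epsilon} \cdot 2^{\poly(I)})$, so the input to the LP has bit-complexity $\poly(I,\log(1/\epsilon))$, which yields the claimed running time. The main conceptual obstacle is step three: recognizing that the combinatorial explosion from enumerating $\avec \in \bigtimes_\theta A$ can be bypassed precisely because the dual variables $y_{\theta,\hat\theta}$ are non-positive, which turns the nested maximization over the best responses of types $\hat\theta \neq \theta$ into a tractable LP constraint rather than a separate enumeration. The minor technical point of exhibiting the witness as an element of $\pa^{*,\epsilon}$ rather than of the relaxed feasible region is handled by the final re-optimization described above.
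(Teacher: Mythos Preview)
Your proposal is correct and follows essentially the same route as the paper's own proof: enumerate over the action $a^\ast\in A$ of the distinguished type $\theta$ only, exploit $y_{\theta,\hat\theta}\le 0$ to replace each $U^{\hat\theta}(p)$ by an auxiliary variable $u_{\hat\theta}$ constrained from below, solve the resulting polynomial-size LP over $\pa^{\theta,a^\ast}\cap\pa^\epsilon$, and then re-optimize the linear objective over the polytope $\pa^{\avec,\epsilon}$ determined by the optimal $p$ to obtain a vertex in $\pa^{*,\epsilon}$. The paper develops the same argument through Problems~\ref{prob:separation_bis}--\ref{prob:separation_lp_relaxed_vertex}; your LP is exactly LP~\ref{prob:separation_lp_relaxed} and your re-optimization step is exactly LP~\ref{prob:separation_lp_relaxed_vertex}. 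Two very minor remarks: your claim that the constraints ``force $u_{\hat\theta}=U^{\hat\theta}(p)$ at any LP optimum'' is only literally true when $y_{\theta,\hat\theta}<0$, but when $y_{\theta,\hat\theta}=0$ the objective is indifferent to $u_{\hat\theta}$, so the optimal value is unaffected; and for the re-optimization it is cleaner to read off $\bar a_{\hat\theta}=b^{\hat\theta}(p)$ directly from $p$ rather than from tight inequalities on $u_{\hat\theta}$, precisely because those need not be tight in the degenerate case.
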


\begin{proof}
	Given a pair $(y,t) \in \mathbb{R}^{\ell (\ell - 1)} \times \mathbb{R}^{\ell}$ assigning values to the variables of LP~\ref{prob:finite_lp_dual}, the separation oracle solves Problem~\ref{prob:separation} for every $\theta \in \Theta$.
	Then, if there exists an agent's type $\theta \in \Theta$ such that $t_\theta$ is less than the value of an optimal solution $p \in \pa^{*,\epsilon}$ to Problem~\ref{prob:separation} for type $\theta$, the oracle outputs the pair $(\theta, p)$, which corresponds to a violated inequality in Constraints~\eqref{eq:finite_lp_dual_cons}.
	Instead, if $t_\theta$ is greater than or equal to the value of an optimal solution to Problem~\ref{prob:separation} for every $\theta \in \Theta$, the oracle concludes that $(y,t)$ is feasible, since all the inequalities in Constraints~\eqref{eq:finite_lp_dual_cons} are satisfied.
	
	Next, we show that, for every $\theta \in \Theta$, Problem~\ref{prob:separation} can be solved in polynomial time.
	In order to do that, we split the set $\pa^{*,\epsilon}$ into the subsets $\hat \pa^{\theta, a} \cap \pa^{*,\epsilon}$, defined for every agents' action $a \in A$.
	Notice that, by the definitions of $\hat\pa^{\theta, a}$ and $\pa^{*,\epsilon}$, each $p \in \pa^{*,\epsilon}$ belongs to exactly one subset $\hat \pa^{\theta, a} \cap \pa^{*,\epsilon}$.
	Thus, solving Problem~\ref{prob:separation} reduces to solving the following problem:
	\begin{align}\label{prob:separation_bis}
	\max_{a \in A} \max_{p \in \hat\pa^{\theta, a} \cap \pa^{*,\epsilon}} \Bigg\{ & \mu_\theta \left(  \sum_{\omega \in \Omega} F_{\theta, a, \omega} r_\omega - \sum_{\omega \in \Omega} F_{\theta, a, \omega} p_\omega  \right) \\
	& - \sum_{\hat\theta \in \Theta: \hat\theta \neq \theta} y_{\theta, \hat\theta} \left(  \sum_{\omega \in \Omega} F_{\theta, a, \omega} p_\omega - c_{\theta, a}  \right)  \nonumber \\
	&+ \sum_{\hat\theta \in \Theta: \hat\theta \neq \theta} y_{\theta, \hat\theta} \left(  \sum_{\omega \in \Omega} F_{\hat \theta, b^{\hat \theta}(p), \omega} \,p_\omega - c_{\hat \theta, b^{\hat \theta}(p)}  \right) \Bigg\}, \nonumber
	\end{align} 
	where we used the fact that $b^\theta(p) = a$ for $p \in \hat\pa^{\theta, a}$.

	As a first step, we prove that Problem~\ref{prob:separation_bis} is equivalent to the following one:
	\begin{align}\label{prob:separation_tris}
	\max_{a \in A} \max_{p \in \pa^{\theta, a} \cap \pa^{*,\epsilon}} \Bigg\{ & \mu_\theta \left(  \sum_{\omega \in \Omega} F_{\theta, a, \omega} r_\omega - \sum_{\omega \in \Omega} F_{\theta, a, \omega} p_\omega  \right) \\
	& - \sum_{\hat\theta \in \Theta: \hat\theta \neq \theta} y_{\theta, \hat\theta} \left(  \sum_{\omega \in \Omega} F_{\theta, a, \omega} p_\omega - c_{\theta, a}  \right) \nonumber  \\
	&+ \sum_{\hat\theta \in \Theta: \hat\theta \neq \theta} y_{\theta, \hat\theta} \left(  \sum_{\omega \in \Omega} F_{\hat \theta, b^{\hat \theta}(p), \omega}\, p_\omega - c_{\hat \theta, b^{\hat \theta}(p)}  \right) \Bigg\}, \nonumber
	\end{align}
	where we replace $\hat\pa^{\theta,a}$ with $\pa^{\theta,a}$ in the inner maximization.
	Indeed, by adding contracts $p \in \left(  \pa^{\theta,a} \setminus \hat\pa^{\theta,a} \right) \cap \pa^{*,\epsilon}$ to the domain of the inner maximizations for every $a \in A$ does \emph{not} change the optimal value of the overall maximization problem.
	In order to see this, notice that, for every $p \in \left(  \pa^{\theta,a} \setminus \hat\pa^{\theta,a} \right) \cap \pa^{*,\epsilon}$, by definition of best response it holds $\sum_{\omega \in \Omega} F_{\theta,b^\theta(p), \omega} p_\omega - c_{\theta,b^\theta(p)} = \sum_{\omega \in \Omega} F_{\theta,a, \omega} p_\omega - c_{\theta,a}$ and $\sum_{\omega \in \Omega} F_{\theta, b^\theta(p), \omega} r_\omega - \sum_{\omega \in \Omega} F_{\theta, b^\theta(p), \omega} p_\omega \geq \sum_{\omega \in \Omega} F_{\theta, a, \omega} r_\omega - \sum_{\omega \in \Omega} F_{\theta, a, \omega} p_\omega$ (tie-breaking rule).
	Thus, any pair $(a, p)$ with $p \in \left(  \pa^{\theta,a} \setminus \hat\pa^{\theta,a} \right) \cap \pa^{*,\epsilon}$ would result in an objective value smaller than that of the pair $(b^\theta(p),p)$.
	Recalling that $p \in \hat\pa^{\theta, b^\theta(p)}$ by definition, we can conclude that Problem~\ref{prob:separation_bis} and Problem~\ref{prob:separation_tris} are indeed equivalent.
	
	We are left to prove that Problem~\ref{prob:separation_tris} can be solved in polynomial time.
	First, notice that solving Problem~\ref{prob:separation_tris} is equivalent to solving the following problem for every $a\in A$:
	\begin{align}\label{prob:separation_noaction}
	\max_{p \in \pa^{\theta, a} \cap \pa^{*,\epsilon}} \Bigg\{ & \mu_\theta \left(  \sum_{\omega \in \Omega} F_{\theta, a, \omega} r_\omega - \sum_{\omega \in \Omega} F_{\theta, a, \omega} p_\omega  \right) \\
	& - \sum_{\hat\theta \in \Theta: \hat\theta \neq \theta} y_{\theta, \hat\theta} \left(  \sum_{\omega \in \Omega} F_{\theta, a, \omega} p_\omega - c_{\theta, a}  \right) \nonumber \\
	&+ \sum_{\hat\theta \in \Theta: \hat\theta \neq \theta} y_{\theta, \hat\theta} \left(  \sum_{\omega \in \Omega} F_{\hat \theta, b^{\hat \theta}(p), \omega} p_\omega - c_{\hat \theta, b^{\hat \theta}(p)}  \right) \Bigg\}. \nonumber
	\end{align}
	The first and second terms in the maximization above are linear in the payments $p_\omega$ for $\omega \in \Omega$.
	Moreover, it holds:
	\begin{align*}
	\sum_{\hat\theta \in \Theta: \hat\theta \neq \theta} y_{\theta, \hat\theta} & \left(  \sum_{\omega \in \Omega} F_{\hat \theta, b^{\hat \theta}(p), \omega} p_\omega - c_{\hat \theta, b^{\hat \theta}(p)}  \right) \\
	& = \sum_{\hat\theta \in \Theta: \hat\theta \neq \theta} y_{\theta, \hat\theta} \max_{a' \in A}\left\{   \sum_{\omega \in \Omega} F_{\hat \theta, a', \omega} p_\omega - c_{\hat \theta, a'}  \right\},
	\end{align*}
	and the latter is a concave function of the payments $p_\omega$ since all the $y_{\theta,\hat\theta}$ are negative and the $\max$ is convex.
	These observations allow us to solve Problem~\ref{prob:separation_tris} in polynomial time by first solving the following LP relaxation obtained by weakening the requirement $p \in \pa^{\theta,a} \cap \pa^{*,\epsilon}$ as $p \in \pa^{\theta,a} \cap \pa^{\epsilon}$:
	\begin{subequations}\label{prob:separation_lp_relaxed}
		\begin{align}
		\max_{p \in \pa^{\epsilon}} & \,\, \mu_\theta \left(  \sum_{\omega \in \Omega} F_{\theta, a, \omega} r_\omega - \sum_{\omega \in \Omega} F_{\theta, a, \omega} p_\omega  \right) \label{eq:separation_lp_relaxed_obj}\\
		& \,\,\,\, - \sum_{\hat\theta \in \Theta: \hat\theta \neq \theta} y_{\theta, \hat\theta} \left(  \sum_{\omega \in \Omega} F_{\theta, a, \omega} p_\omega - c_{\theta, a}  \right)  + \sum_{\hat\theta \in \Theta: \hat\theta \neq \theta} y_{\theta, \hat\theta} z_{\hat \theta} \hspace{1cm} \text{s.t.}  \nonumber\\
		& z_{\hat \theta} \geq \sum_{\omega \in \Omega} F_{\hat \theta, a', \omega} p_\omega - c_{\hat \theta, a'} \hspace{2.5cm} \forall \hat\theta \in \Theta : \hat\theta \neq \theta, \forall a' \in A \label{eq:separation_lp_relaxed_cons1}\\
		& \sum_{\omega \in \Omega} F_{ \theta, a, \omega} p_\omega - c_{ \theta, a} \geq \sum_{\omega \in \Omega} F_{ \theta, a', \omega} p_\omega - c_{ \theta, a'} \hspace{2.2cm} \forall a' \in A,\label{eq:separation_lp_relaxed_cons2}
		\end{align}
	\end{subequations}
	where Constraints~\eqref{eq:separation_lp_relaxed_cons1} ensure that $z_{\hat\theta} = \max_{a' \in A}\left\{   \sum_{\omega \in \Omega} F_{\hat \theta, a', \omega} p_\omega - c_{\hat \theta, a'}  \right\}$ for every $\hat \theta \in \Theta: \hat\theta \neq \theta$ (since in Objective~\eqref{eq:separation_lp_relaxed_obj} all the coefficients of the variables $z_{\hat \theta}$ are negative), while Constraints~\eqref{eq:separation_lp_relaxed_cons2} are equivalent to $p \in \pa^{\theta,a}$.
	Given an optimal solution to LP~\ref{prob:separation_lp_relaxed}, it is possible to recover in polynomial time a contract $p \in \pa^{\theta,a} \cap \pa^{*,\epsilon}$ that is an optimal solution to Problem~\ref{prob:separation_noaction}.
	Given a contract $ p^{\ast} \in \pa^{\theta, a} \cap \pa^{\epsilon}$ that is an optimal solution to LP~\ref{prob:separation_lp_relaxed}, let $\avec^* = (a^{p^\ast}_{\hat \theta})$, where $a^{p^\ast}_{\theta}=a$ and  $a^{p^\ast}_{\hat \theta}=b^{\hat \theta}(p^\ast)$ for each $\hat \theta \in \Theta: \hat \theta \neq \theta$.
	%
	%
	Then, since $a^{p^\ast}_\theta=a $, and for each contract $\hat p \in \pa^{\avec^*,\epsilon}$ and each $\hat \theta \in \Theta:\hat \theta \neq \theta$, $ b^{\hat\theta}(p^\ast) \in \mathcal{B}^{\hat \theta}_{\hat p}$, the following LP is equivalent to LP~\ref{prob:separation_lp_relaxed} restricted to the feasibility set $\pa^{\avec^{*,\epsilon}}\subseteq \pa^{\theta,a} \cap \pa^{\epsilon}$. Hence, the contract $p^\ast$ defined above is also an optimal solution to the following LP:
	\begin{subequations}\label{prob:separation_lp_relaxed_vertex}
		\begin{align}
		\max_{p \in \pa^{\epsilon}} & \,\, \mu_\theta \left(  \sum_{\omega \in \Omega} F_{\theta, a, \omega} r_\omega - \sum_{\omega \in \Omega} F_{\theta, a, \omega} p_\omega  \right) \\
		& \,\,\,\, - \sum_{\hat\theta \in \Theta: \hat\theta \neq \theta} y_{\theta, \hat\theta} \left(  \sum_{\omega \in \Omega} F_{\theta, a, \omega} p_\omega - c_{\theta, a}  \right) \nonumber \\
		& \,\,\,\, + \sum_{\hat\theta \in \Theta: \hat\theta \neq \theta} y_{\theta, \hat\theta} \left(  \sum_{\omega \in \Omega} F_{\hat \theta, b^{\hat \theta}(p^\ast), \omega} p_\omega - c_{\hat \theta, b^{\hat \theta}(p^\ast)}  \right) \hspace{2.5cm} \text{s.t.} \\
		& \sum_{\omega \in \Omega} F_{\hat \theta, a^{p^\ast}_{\hat \theta}, \omega} \, p_\omega - c_{\hat \theta, a^{p^\ast}_{\hat \theta}} \geq \sum_{\omega \in \Omega} F_{\hat \theta, a', \omega} \, p_\omega - c_{\hat \theta, a'} \quad \ \  \forall \hat\theta \in \Theta, \forall a' \in A.
		\end{align}
	\end{subequations}
	Thus, there always exists an optimal solution to LP~\ref{prob:separation_lp_relaxed_vertex} that is a vertex of its feasibility polytope $\pa^{\avec^\ast,\epsilon}$, \emph{i.e.}, a contract in $V(\pa^{\avec^\ast,\epsilon}) \subseteq \pa^{\theta, a} \cap \pa^{*,\epsilon}$, and such optimal solution can be found in polynomial time~\citep{ye1990recovering}.  
\end{proof}

By the previous lemma, it immediately follows the main result of the paper.

\begin{theorem}
	In Bayesian principal-agent problems, there exists an algorithm that, given any $\epsilon > 0$, runs in time $\poly(I, \log(1/\epsilon))$ and computes a DSIC menu of randomized contracts providing the principal with an expected utility at least of $\SUP-\epsilon$, where $I$ is the size of the problem instance.
\end{theorem}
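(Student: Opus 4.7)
The plan is to chain together the lemmas already established in the section to prove the final theorem. By Lemma~\ref{lem:finite_support_star}, for every $\epsilon > 0$, there exists a DSIC menu of randomized contracts supported on the finite set $\pa^{*,\epsilon}$ whose principal's expected utility is at least $\SUP - \epsilon$. This means that LP~\ref{prob:finite_lp}, which optimizes precisely over menus supported on $\pa^{*,\epsilon}$, has optimal value at least $\SUP - \epsilon$. Thus the theorem reduces to solving LP~\ref{prob:finite_lp} in time $\poly(I, \log(1/\epsilon))$ and recovering a primal optimal solution.

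First, I would observe that LP~\ref{prob:finite_lp} has $\ell \cdot |\pa^{*,\epsilon}|$ variables, which is exponential in $I$, but only polynomially many constraints (namely $\ell(\ell-1)$ DSIC constraints plus $\ell$ normalization constraints). Taking its dual yields LP~\ref{prob:finite_lp_dual}, which has polynomially many variables $(y, t) \in \mathbb{R}^{\ell(\ell-1)} \times \mathbb{R}^{\ell}$ and exponentially many constraints, one per pair $(\theta, p) \in \Theta \times \pa^{*,\epsilon}$. By the preceding lemma on the separation oracle, given any candidate $(y,t)$ one can determine in time $\poly(I, \log(1/\epsilon))$ whether all constraints are satisfied, or produce a pair $(\theta, p)$ corresponding to a violated constraint. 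Consequently, the ellipsoid algorithm solves LP~\ref{prob:finite_lp_dual} in time $\poly(I, \log(1/\epsilon))$, since the coefficients of LP~\ref{prob:finite_lp_dual} are bounded polynomially in the bit-length $\log(C(I,\epsilon)) = \poly(I, \log(1/\epsilon))$ guaranteed by Lemma~\ref{lem:bounded_payments}.

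Next, I would recover a primal optimal solution $\Gamma = \{\gamma^\theta\}_{\theta \in \Theta}$ to LP~\ref{prob:finite_lp} from the dual solve. The standard trick is to track the constraints of LP~\ref{prob:finite_lp_dual} generated as cuts by the separation oracle during the ellipsoid run, which are polynomially many. The primal variables $\gamma^\theta_p$ corresponding to contracts $p$ that never appear as cuts may be set to zero without loss of optimality, yielding a restricted LP with polynomially many variables and constraints that achieves the same optimal value as LP~\ref{prob:finite_lp}. This restricted LP can be solved exactly in polynomial time to recover the $\gamma^\theta_p$'s, producing an explicit DSIC menu of randomized contracts with polynomial-size support and expected utility at least $\SUP - \epsilon$.

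The main technical work has already been front-loaded into the previous lemmas: Lemma~\ref{lem:bounded_payments} ensures payments (hence the encoding size of contracts in $\pa^{*,\epsilon}$) are bounded, Lemma~\ref{lem:finite_support_star} guarantees the finite set $\pa^{*,\epsilon}$ is expressive enough, and the separation oracle lemma supplies the polynomial-time subroutine the ellipsoid method needs. The only step that requires care in the theorem's proof itself is the primal recovery: one must verify that the dual optimum obtained via ellipsoid, together with the polynomial family of cuts it generated, suffices to reconstruct the primal solution via LP duality in $\poly(I, \log(1/\epsilon))$ time. This is a standard application of the equivalence between separation and optimization, but it is where I would be most careful to state the argument cleanly.
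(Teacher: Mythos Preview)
Your proposal is correct and follows exactly the approach laid out in the paper: reduce to LP~\ref{prob:finite_lp} via Lemma~\ref{lem:finite_support_star}, solve the dual LP~\ref{prob:finite_lp_dual} by ellipsoid using the separation oracle lemma, and recover the primal by restricting to the polynomially many contracts generated as cuts. The paper in fact states only that the theorem ``immediately follows'' from the separation oracle lemma, so your write-up supplies more detail (particularly on primal recovery) than the paper itself, but the argument is the same.
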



\section*{Acknowledgment}
We are thankful to Jibang Wu and Haifeng Xu for reporting an error in the previous version of the paper.

\bibliographystyle{ACM-Reference-Format}
\bibliography{biblio}

\begin{thebibliography}{30}
\providecommand{\natexlab}[1]{#1}
\providecommand{\url}[1]{\texttt{#1}}
\providecommand{\href}[2]{#2}
\providecommand{\path}[1]{#1}
\providecommand{\eprint}[1]{\href{http://arxiv.org/abs/#1}{\path{#1}}}
\providecommand{\DOIprefix}{doi:}
\providecommand{\ArXivprefix}{arXiv:}
\providecommand{\URLprefix}{URL: }
\providecommand{\Pubmedprefix}{pmid:}
\providecommand{\doi}[1]{\href{http://dx.doi.org/#1}{\path{#1}}}
\providecommand{\Pubmed}[1]{\href{pmid:#1}{\path{#1}}}
\providecommand{\BIBand}{and}
\providecommand{\bibinfo}[2]{#2}
\ifx\xfnm\undefined \def\xfnm[#1]{\unskip,\space#1}\fi
\bibitem[{Castiglioni et~al.(2022{\natexlab{a}})Castiglioni, Marchesi and
  Gatti}]{CastiglioniEC22}
\bibinfo{author}{Castiglioni\xfnm[ M.]}, \bibinfo{author}{Marchesi\xfnm[ A.]},
  \bibinfo{author}{Gatti\xfnm[ N.]}.
\newblock \bibinfo{title}{Designing menus of contracts efficiently: The power
  of randomization}.
\newblock In: \bibinfo{booktitle}{Proceedings of the 23rd ACM Conference on
  Economics and Computation}. EC '22; \bibinfo{address}{New York, NY, USA}:
  \bibinfo{publisher}{Association for Computing Machinery}.
\newblock ISBN \bibinfo{isbn}{9781450391504};
  \bibinfo{year}{2022}{\natexlab{a}}, p. \bibinfo{pages}{705–735}.
\newblock \URLprefix \url{https://doi.org/10.1145/3490486.3538270}.
  \DOIprefix\doi{10.1145/3490486.3538270}.
\bibitem[{Ho et~al.(2016)Ho, Slivkins and Vaughan}]{ho2016adaptive}
\bibinfo{author}{Ho\xfnm[ C.J.]}, \bibinfo{author}{Slivkins\xfnm[ A.]},
  \bibinfo{author}{Vaughan\xfnm[ J.W.]}.
\newblock \bibinfo{title}{Adaptive contract design for crowdsourcing markets:
  Bandit algorithms for repeated principal-agent problems}.
\newblock \bibinfo{journal}{Journal of Artificial Intelligence Research}
  \bibinfo{year}{2016};\bibinfo{volume}{55}:\bibinfo{pages}{317--359}.
\bibitem[{Cong and He(2019)}]{cong2019blockchain}
\bibinfo{author}{Cong\xfnm[ L.W.]}, \bibinfo{author}{He\xfnm[ Z.]}.
\newblock \bibinfo{title}{Blockchain disruption and smart contracts}.
\newblock \bibinfo{journal}{The Review of Financial Studies}
  \bibinfo{year}{2019};\bibinfo{volume}{32}(\bibinfo{number}{5}):\bibinfo{pages}{1754--1797}.
\bibitem[{Bastani et~al.(2016)Bastani, Bayati, Braverman, Gummadi and
  Johari}]{bastani2016analysis}
\bibinfo{author}{Bastani\xfnm[ H.]}, \bibinfo{author}{Bayati\xfnm[ M.]},
  \bibinfo{author}{Braverman\xfnm[ M.]}, \bibinfo{author}{Gummadi\xfnm[ R.]},
  \bibinfo{author}{Johari\xfnm[ R.]}.
\newblock \bibinfo{title}{Analysis of medicare pay-for-performance contracts}.
\newblock \bibinfo{journal}{Available at SSRN 2839143} \bibinfo{year}{2016};.
\bibitem[{Guruganesh et~al.(2021)Guruganesh, Schneider and
  Wang}]{guruganesh2020contracts}
\bibinfo{author}{Guruganesh\xfnm[ G.]}, \bibinfo{author}{Schneider\xfnm[ J.]},
  \bibinfo{author}{Wang\xfnm[ J.R.]}.
\newblock \bibinfo{title}{Contracts under moral hazard and adverse selection}.
\newblock In: \bibinfo{booktitle}{Proceedings of the 22nd ACM Conference on
  Economics and Computation}. EC '21; \bibinfo{address}{New York, NY, USA}:
  \bibinfo{publisher}{Association for Computing Machinery}.
\newblock ISBN \bibinfo{isbn}{9781450385541}; \bibinfo{year}{2021}, p.
  \bibinfo{pages}{563–582}.
\newblock \URLprefix \url{https://doi.org/10.1145/3465456.3467637}.
  \DOIprefix\doi{10.1145/3465456.3467637}.
\bibitem[{Castiglioni et~al.(2021)Castiglioni, Marchesi and
  Gatti}]{castiglioni2021bayesian}
\bibinfo{author}{Castiglioni\xfnm[ M.]}, \bibinfo{author}{Marchesi\xfnm[ A.]},
  \bibinfo{author}{Gatti\xfnm[ N.]}.
\newblock \bibinfo{title}{Bayesian agency: Linear versus tractable contracts}.
\newblock \bibinfo{year}{2021}.
\newblock \href{http://arxiv.org/abs/2106.00319}{\tt arXiv:2106.00319}.
\bibitem[{Alon et~al.(2021{\natexlab{a}})Alon, D{\"u}tting and
  Talgam-Cohen}]{alon2021contracts}
\bibinfo{author}{Alon\xfnm[ T.]}, \bibinfo{author}{D{\"u}tting\xfnm[ P.]},
  \bibinfo{author}{Talgam-Cohen\xfnm[ I.]}.
\newblock \bibinfo{title}{Contracts with private cost per unit-of-effort}.
\newblock In: \bibinfo{booktitle}{Proceedings of the 22nd ACM Conference on
  Economics and Computation}. \bibinfo{year}{2021}{\natexlab{a}}, p.
  \bibinfo{pages}{52--69}.
\bibitem[{D{\"u}tting et~al.(2019)D{\"u}tting, Roughgarden and
  Talgam-Cohen}]{dutting2019simple}
\bibinfo{author}{D{\"u}tting\xfnm[ P.]}, \bibinfo{author}{Roughgarden\xfnm[
  T.]}, \bibinfo{author}{Talgam-Cohen\xfnm[ I.]}.
\newblock \bibinfo{title}{Simple versus optimal contracts}.
\newblock In: \bibinfo{booktitle}{Proceedings of the 2019 ACM Conference on
  Economics and Computation}. \bibinfo{year}{2019}, p.
  \bibinfo{pages}{369--387}.
\bibitem[{Shavell(1979)}]{shavell1979risk}
\bibinfo{author}{Shavell\xfnm[ S.]}.
\newblock \bibinfo{title}{Risk sharing and incentives in the principal and
  agent relationship}.
\newblock \bibinfo{journal}{The Bell Journal of Economics}
  \bibinfo{year}{1979};:\bibinfo{pages}{55--73}.
\bibitem[{Grossman and Hart(1983)}]{grossman1983analysis}
\bibinfo{author}{Grossman\xfnm[ S.J.]}, \bibinfo{author}{Hart\xfnm[ O.D.]}.
\newblock \bibinfo{title}{An analysis of the principal-agent problem}.
\newblock \bibinfo{journal}{Econometrica}
  \bibinfo{year}{1983};\bibinfo{volume}{51}(\bibinfo{number}{1}):\bibinfo{pages}{7--46}.
\bibitem[{Rogerson(1985)}]{rogerson1985repeated}
\bibinfo{author}{Rogerson\xfnm[ W.P.]}.
\newblock \bibinfo{title}{Repeated moral hazard}.
\newblock \bibinfo{journal}{Econometrica: Journal of the Econometric Society}
  \bibinfo{year}{1985};:\bibinfo{pages}{69--76}.
\bibitem[{Holmstrom and Milgrom(1991)}]{holmstrom1991multitask}
\bibinfo{author}{Holmstrom\xfnm[ B.]}, \bibinfo{author}{Milgrom\xfnm[ P.]}.
\newblock \bibinfo{title}{Multitask principal-agent analyses: Incentive
  contracts, asset ownership, and job design}.
\newblock \bibinfo{journal}{Journal of Law, Economics, \& Organization}
  \bibinfo{year}{1991};\bibinfo{volume}{7}:\bibinfo{pages}{24}.
\bibitem[{Mas-Colell et~al.(1995)Mas-Colell, Whinston, Green
  et~al.}]{mas1995microeconomic}
\bibinfo{author}{Mas-Colell\xfnm[ A.]}, \bibinfo{author}{Whinston\xfnm[ M.D.]},
  \bibinfo{author}{Green\xfnm[ J.R.]}, et~al.
\newblock \bibinfo{title}{Microeconomic theory}; vol.~\bibinfo{volume}{1}.
\newblock \bibinfo{publisher}{Oxford university press New York};
  \bibinfo{year}{1995}.
\bibitem[{Bolton et~al.(2005)Bolton, Dewatripont et~al.}]{bolton2005contract}
\bibinfo{author}{Bolton\xfnm[ P.]}, \bibinfo{author}{Dewatripont\xfnm[ M.]},
  et~al.
\newblock \bibinfo{title}{Contract theory}.
\newblock \bibinfo{publisher}{MIT press}; \bibinfo{year}{2005}.
\bibitem[{Laffont and Martimort(2009)}]{laffont2009theory}
\bibinfo{author}{Laffont\xfnm[ J.J.]}, \bibinfo{author}{Martimort\xfnm[ D.]}.
\newblock \bibinfo{title}{The theory of incentives: the principal-agent model}.
\newblock \bibinfo{publisher}{Princeton university press};
  \bibinfo{year}{2009}.
\bibitem[{Babaioff et~al.(2006)Babaioff, Feldman and
  Nisan}]{babaioff2006combinatorial}
\bibinfo{author}{Babaioff\xfnm[ M.]}, \bibinfo{author}{Feldman\xfnm[ M.]},
  \bibinfo{author}{Nisan\xfnm[ N.]}.
\newblock \bibinfo{title}{Combinatorial agency}.
\newblock In: \bibinfo{booktitle}{Proceedings of the 7th ACM Conference on
  Electronic Commerce}. \bibinfo{year}{2006}, p. \bibinfo{pages}{18--28}.
\bibitem[{Babaioff et~al.(2012)Babaioff, Feldman, Nisan and
  Winter}]{babaioff2012combinatorial}
\bibinfo{author}{Babaioff\xfnm[ M.]}, \bibinfo{author}{Feldman\xfnm[ M.]},
  \bibinfo{author}{Nisan\xfnm[ N.]}, \bibinfo{author}{Winter\xfnm[ E.]}.
\newblock \bibinfo{title}{Combinatorial agency}.
\newblock \bibinfo{journal}{Journal of Economic Theory}
  \bibinfo{year}{2012};\bibinfo{volume}{147}(\bibinfo{number}{3}):\bibinfo{pages}{999--1034}.
\bibitem[{Babaioff et~al.(2009)Babaioff, Feldman and Nisan}]{babaioff2009free}
\bibinfo{author}{Babaioff\xfnm[ M.]}, \bibinfo{author}{Feldman\xfnm[ M.]},
  \bibinfo{author}{Nisan\xfnm[ N.]}.
\newblock \bibinfo{title}{Free-riding and free-labor in combinatorial agency}.
\newblock In: \bibinfo{booktitle}{International Symposium on Algorithmic Game
  Theory}. \bibinfo{organization}{Springer}; \bibinfo{year}{2009}, p.
  \bibinfo{pages}{109--121}.
\bibitem[{Babaioff et~al.(2010)Babaioff, Feldman and Nisan}]{babaioff2010mixed}
\bibinfo{author}{Babaioff\xfnm[ M.]}, \bibinfo{author}{Feldman\xfnm[ M.]},
  \bibinfo{author}{Nisan\xfnm[ N.]}.
\newblock \bibinfo{title}{Mixed strategies in combinatorial agency}.
\newblock \bibinfo{journal}{Journal of Artificial Intelligence Research}
  \bibinfo{year}{2010};\bibinfo{volume}{38}:\bibinfo{pages}{339--369}.
\bibitem[{D{\"u}tting et~al.(2020)D{\"u}tting, Roughgarden and
  Cohen}]{dutting2020complexity}
\bibinfo{author}{D{\"u}tting\xfnm[ P.]}, \bibinfo{author}{Roughgarden\xfnm[
  T.]}, \bibinfo{author}{Cohen\xfnm[ I.T.]}.
\newblock \bibinfo{title}{The complexity of contracts}.
\newblock In: \bibinfo{booktitle}{Proceedings of the Fourteenth Annual ACM-SIAM
  Symposium on Discrete Algorithms}. \bibinfo{organization}{SIAM};
  \bibinfo{year}{2020}, p. \bibinfo{pages}{2688--2707}.
\bibitem[{Duetting et~al.(2021)Duetting, Ezra, Feldman and
  Kesselheim}]{duetting2021combinatorial}
\bibinfo{author}{Duetting\xfnm[ P.]}, \bibinfo{author}{Ezra\xfnm[ T.]},
  \bibinfo{author}{Feldman\xfnm[ M.]}, \bibinfo{author}{Kesselheim\xfnm[ T.]}.
\newblock \bibinfo{title}{Combinatorial contracts}.
\newblock \bibinfo{journal}{arXiv preprint arXiv:210914260}
  \bibinfo{year}{2021};.
\bibitem[{Babaioff and Winter(2014)}]{babaioff2014contract}
\bibinfo{author}{Babaioff\xfnm[ M.]}, \bibinfo{author}{Winter\xfnm[ E.]}.
\newblock \bibinfo{title}{Contract complexity.}
\newblock \bibinfo{journal}{EC}
  \bibinfo{year}{2014};\bibinfo{volume}{14}:\bibinfo{pages}{911}.
\bibitem[{Alon et~al.(2021{\natexlab{b}})Alon, D{\"u}tting and
  Talgam-Cohen}]{alon2021contractsarvix}
\bibinfo{author}{Alon\xfnm[ T.]}, \bibinfo{author}{D{\"u}tting\xfnm[ P.]},
  \bibinfo{author}{Talgam-Cohen\xfnm[ I.]}.
\newblock \bibinfo{title}{Contracts with private cost per unit-of-effort}.
\newblock \bibinfo{journal}{arXiv preprint arXiv:211109179}
  \bibinfo{year}{2021}{\natexlab{b}};.
\bibitem[{Carroll(2015)}]{carroll2015robustness}
\bibinfo{author}{Carroll\xfnm[ G.]}.
\newblock \bibinfo{title}{Robustness and linear contracts}.
\newblock \bibinfo{journal}{American Economic Review}
  \bibinfo{year}{2015};\bibinfo{volume}{105}(\bibinfo{number}{2}):\bibinfo{pages}{536--63}.
\bibitem[{Shoham and Leyton-Brown(2008)}]{shoham2008multiagent}
\bibinfo{author}{Shoham\xfnm[ Y.]}, \bibinfo{author}{Leyton-Brown\xfnm[ K.]}.
\newblock \bibinfo{title}{Multiagent systems: Algorithmic, game-theoretic, and
  logical foundations}.
\newblock \bibinfo{publisher}{Cambridge University Press};
  \bibinfo{year}{2008}.
\bibitem[{Alon et~al.(1995)Alon, Feige, Wigderson and
  Zuckerman}]{Alon1995IndependentSet}
\bibinfo{author}{Alon\xfnm[ N.]}, \bibinfo{author}{Feige\xfnm[ U.]},
  \bibinfo{author}{Wigderson\xfnm[ A.]}, \bibinfo{author}{Zuckerman\xfnm[ D.]}.
\newblock \bibinfo{title}{Derandomized graph products}.
\newblock \bibinfo{journal}{computational complexity}
  \bibinfo{year}{1995};\bibinfo{volume}{5}(\bibinfo{number}{1}):\bibinfo{pages}{60--75}.
\newblock \URLprefix \url{https://doi.org/10.1007/BF01277956}.
  \DOIprefix\doi{10.1007/BF01277956}.
\bibitem[{Trevisan(2001)}]{Trevisan2001Independent}
\bibinfo{author}{Trevisan\xfnm[ L.]}.
\newblock \bibinfo{title}{Non-approximability results for optimization problems
  on bounded degree instances}.
\newblock In: \bibinfo{booktitle}{Proceedings of the Thirty-Third Annual ACM
  Symposium on Theory of Computing}. STOC '01; \bibinfo{address}{New York, NY,
  USA}: \bibinfo{publisher}{Association for Computing Machinery}.
\newblock ISBN \bibinfo{isbn}{1581133499}; \bibinfo{year}{2001}, p.
  \bibinfo{pages}{453–461}.
\newblock \URLprefix \url{https://doi.org/10.1145/380752.380839}.
  \DOIprefix\doi{10.1145/380752.380839}.
\bibitem[{Castiglioni et~al.(2022{\natexlab{b}})Castiglioni, Marchesi and
  Gatti}]{castiglioni2022bayesian}
\bibinfo{author}{Castiglioni\xfnm[ M.]}, \bibinfo{author}{Marchesi\xfnm[ A.]},
  \bibinfo{author}{Gatti\xfnm[ N.]}.
\newblock \bibinfo{title}{Bayesian persuasion meets mechanism design: Going
  beyond intractability with type reporting}.
\newblock \bibinfo{journal}{arXiv preprint arXiv:220200605}
  \bibinfo{year}{2022}{\natexlab{b}};.
\bibitem[{Ye(1990)}]{ye1990recovering}
\bibinfo{author}{Ye\xfnm[ Y.]}.
\newblock \bibinfo{title}{Recovering optimal basic variables in karmarkar's
  polynomial algorithm for linear programming}.
\newblock \bibinfo{journal}{Mathematics of operations research}
  \bibinfo{year}{1990};\bibinfo{volume}{15}(\bibinfo{number}{3}):\bibinfo{pages}{564--572}.
\bibitem[{Bertsimas and Tsitsiklis(1997)}]{bertsimas1997introduction}
\bibinfo{author}{Bertsimas\xfnm[ D.]}, \bibinfo{author}{Tsitsiklis\xfnm[
  J.N.]}.
\newblock \bibinfo{title}{Introduction to linear optimization};
  vol.~\bibinfo{volume}{6}.
\newblock \bibinfo{publisher}{Athena Scientific Belmont, MA};
  \bibinfo{year}{1997}.

\end{thebibliography}



\begin{thebibliography}{30}


\ifx \showCODEN    \undefined \def \showCODEN     #1{\unskip}     \fi
\ifx \showDOI      \undefined \def \showDOI       #1{#1}\fi
\ifx \showISBNx    \undefined \def \showISBNx     #1{\unskip}     \fi
\ifx \showISBNxiii \undefined \def \showISBNxiii  #1{\unskip}     \fi
\ifx \showISSN     \undefined \def \showISSN      #1{\unskip}     \fi
\ifx \showLCCN     \undefined \def \showLCCN      #1{\unskip}     \fi
\ifx \shownote     \undefined \def \shownote      #1{#1}          \fi
\ifx \showarticletitle \undefined \def \showarticletitle #1{#1}   \fi
\ifx \showURL      \undefined \def \showURL       {\relax}        \fi
\providecommand\bibfield[2]{#2}
\providecommand\bibinfo[2]{#2}
\providecommand\natexlab[1]{#1}
\providecommand\showeprint[2][]{arXiv:#2}

\bibitem[\protect\citeauthoryear{Alon, Feige, Wigderson, and Zuckerman}{Alon
  et~al\mbox{.}}{1995}]%
        {Alon1995IndependentSet}
\bibfield{author}{\bibinfo{person}{Noga Alon}, \bibinfo{person}{Uriel Feige},
  \bibinfo{person}{Avi Wigderson}, {and} \bibinfo{person}{David Zuckerman}.}
  \bibinfo{year}{1995}\natexlab{}.
\newblock \showarticletitle{Derandomized graph products}.
\newblock \bibinfo{journal}{\emph{computational complexity}}
  \bibinfo{volume}{5}, \bibinfo{number}{1} (\bibinfo{year}{1995}),
  \bibinfo{pages}{60--75}.
\newblock
\showISBNx{1420-8954}
\urldef\tempurl%
\url{https://doi.org/10.1007/BF01277956}
\showDOI{\tempurl}


\bibitem[\protect\citeauthoryear{Alon, D{\"u}tting, and Talgam-Cohen}{Alon
  et~al\mbox{.}}{2021a}]%
        {alon2021contracts}
\bibfield{author}{\bibinfo{person}{Tal Alon}, \bibinfo{person}{Paul
  D{\"u}tting}, {and} \bibinfo{person}{Inbal Talgam-Cohen}.}
  \bibinfo{year}{2021}\natexlab{a}.
\newblock \showarticletitle{Contracts with Private Cost per Unit-of-Effort}. In
  \bibinfo{booktitle}{\emph{Proceedings of the 22nd ACM Conference on Economics
  and Computation}}. \bibinfo{pages}{52--69}.
\newblock


\bibitem[\protect\citeauthoryear{Alon, D{\"u}tting, and Talgam-Cohen}{Alon
  et~al\mbox{.}}{2021b}]%
        {alon2021contractsarvix}
\bibfield{author}{\bibinfo{person}{Tal Alon}, \bibinfo{person}{Paul
  D{\"u}tting}, {and} \bibinfo{person}{Inbal Talgam-Cohen}.}
  \bibinfo{year}{2021}\natexlab{b}.
\newblock \showarticletitle{Contracts with Private Cost per Unit-of-Effort}.
\newblock \bibinfo{journal}{\emph{arXiv preprint arXiv:2111.09179}}
  (\bibinfo{year}{2021}).
\newblock


\bibitem[\protect\citeauthoryear{Babaioff, Feldman, and Nisan}{Babaioff
  et~al\mbox{.}}{2006}]%
        {babaioff2006combinatorial}
\bibfield{author}{\bibinfo{person}{Moshe Babaioff}, \bibinfo{person}{Michal
  Feldman}, {and} \bibinfo{person}{Noam Nisan}.}
  \bibinfo{year}{2006}\natexlab{}.
\newblock \showarticletitle{Combinatorial agency}. In
  \bibinfo{booktitle}{\emph{Proceedings of the 7th ACM Conference on Electronic
  Commerce}}. \bibinfo{pages}{18--28}.
\newblock


\bibitem[\protect\citeauthoryear{Babaioff, Feldman, and Nisan}{Babaioff
  et~al\mbox{.}}{2009}]%
        {babaioff2009free}
\bibfield{author}{\bibinfo{person}{Moshe Babaioff}, \bibinfo{person}{Michal
  Feldman}, {and} \bibinfo{person}{Noam Nisan}.}
  \bibinfo{year}{2009}\natexlab{}.
\newblock \showarticletitle{Free-riding and free-labor in combinatorial
  agency}. In \bibinfo{booktitle}{\emph{International Symposium on Algorithmic
  Game Theory}}. Springer, \bibinfo{pages}{109--121}.
\newblock


\bibitem[\protect\citeauthoryear{Babaioff, Feldman, and Nisan}{Babaioff
  et~al\mbox{.}}{2010}]%
        {babaioff2010mixed}
\bibfield{author}{\bibinfo{person}{Moshe Babaioff}, \bibinfo{person}{Michal
  Feldman}, {and} \bibinfo{person}{Noam Nisan}.}
  \bibinfo{year}{2010}\natexlab{}.
\newblock \showarticletitle{Mixed strategies in combinatorial agency}.
\newblock \bibinfo{journal}{\emph{Journal of Artificial Intelligence Research}}
   \bibinfo{volume}{38} (\bibinfo{year}{2010}), \bibinfo{pages}{339--369}.
\newblock


\bibitem[\protect\citeauthoryear{Babaioff, Feldman, Nisan, and Winter}{Babaioff
  et~al\mbox{.}}{2012}]%
        {babaioff2012combinatorial}
\bibfield{author}{\bibinfo{person}{Moshe Babaioff}, \bibinfo{person}{Michal
  Feldman}, \bibinfo{person}{Noam Nisan}, {and} \bibinfo{person}{Eyal Winter}.}
  \bibinfo{year}{2012}\natexlab{}.
\newblock \showarticletitle{Combinatorial agency}.
\newblock \bibinfo{journal}{\emph{Journal of Economic Theory}}
  \bibinfo{volume}{147}, \bibinfo{number}{3} (\bibinfo{year}{2012}),
  \bibinfo{pages}{999--1034}.
\newblock


\bibitem[\protect\citeauthoryear{Babaioff and Winter}{Babaioff and
  Winter}{2014}]%
        {babaioff2014contract}
\bibfield{author}{\bibinfo{person}{Moshe Babaioff} {and} \bibinfo{person}{Eyal
  Winter}.} \bibinfo{year}{2014}\natexlab{}.
\newblock \showarticletitle{Contract complexity.}
\newblock \bibinfo{journal}{\emph{EC}}  \bibinfo{volume}{14}
  (\bibinfo{year}{2014}), \bibinfo{pages}{911}.
\newblock


\bibitem[\protect\citeauthoryear{Bastani, Bayati, Braverman, Gummadi, and
  Johari}{Bastani et~al\mbox{.}}{2016}]%
        {bastani2016analysis}
\bibfield{author}{\bibinfo{person}{Hamsa Bastani}, \bibinfo{person}{Mohsen
  Bayati}, \bibinfo{person}{Mark Braverman}, \bibinfo{person}{Ramki Gummadi},
  {and} \bibinfo{person}{Ramesh Johari}.} \bibinfo{year}{2016}\natexlab{}.
\newblock \showarticletitle{Analysis of medicare pay-for-performance
  contracts}.
\newblock \bibinfo{journal}{\emph{Available at SSRN 2839143}}
  (\bibinfo{year}{2016}).
\newblock


\bibitem[\protect\citeauthoryear{Bertsimas and Tsitsiklis}{Bertsimas and
  Tsitsiklis}{1997}]%
        {bertsimas1997introduction}
\bibfield{author}{\bibinfo{person}{Dimitris Bertsimas} {and}
  \bibinfo{person}{John~N Tsitsiklis}.} \bibinfo{year}{1997}\natexlab{}.
\newblock \bibinfo{booktitle}{\emph{Introduction to linear optimization}}.
  Vol.~\bibinfo{volume}{6}.
\newblock \bibinfo{publisher}{Athena Scientific Belmont, MA}.
\newblock


\bibitem[\protect\citeauthoryear{Bolton, Dewatripont, et~al\mbox{.}}{Bolton
  et~al\mbox{.}}{2005}]%
        {bolton2005contract}
\bibfield{author}{\bibinfo{person}{Patrick Bolton}, \bibinfo{person}{Mathias
  Dewatripont}, {et~al\mbox{.}}} \bibinfo{year}{2005}\natexlab{}.
\newblock \bibinfo{booktitle}{\emph{Contract theory}}.
\newblock \bibinfo{publisher}{MIT press}.
\newblock


\bibitem[\protect\citeauthoryear{Carroll}{Carroll}{2015}]%
        {carroll2015robustness}
\bibfield{author}{\bibinfo{person}{Gabriel Carroll}.}
  \bibinfo{year}{2015}\natexlab{}.
\newblock \showarticletitle{Robustness and linear contracts}.
\newblock \bibinfo{journal}{\emph{American Economic Review}}
  \bibinfo{volume}{105}, \bibinfo{number}{2} (\bibinfo{year}{2015}),
  \bibinfo{pages}{536--63}.
\newblock


\bibitem[\protect\citeauthoryear{Castiglioni, Marchesi, and Gatti}{Castiglioni
  et~al\mbox{.}}{2021}]%
        {castiglioni2021bayesian}
\bibfield{author}{\bibinfo{person}{Matteo Castiglioni},
  \bibinfo{person}{Alberto Marchesi}, {and} \bibinfo{person}{Nicola Gatti}.}
  \bibinfo{year}{2021}\natexlab{}.
\newblock \bibinfo{title}{Bayesian Agency: Linear versus Tractable Contracts}.
\newblock
\newblock
\showeprint[arxiv]{cs.GT/2106.00319}


\bibitem[\protect\citeauthoryear{Castiglioni, Marchesi, and Gatti}{Castiglioni
  et~al\mbox{.}}{2022a}]%
        {castiglioni2022bayesian}
\bibfield{author}{\bibinfo{person}{Matteo Castiglioni},
  \bibinfo{person}{Alberto Marchesi}, {and} \bibinfo{person}{Nicola Gatti}.}
  \bibinfo{year}{2022}\natexlab{a}.
\newblock \showarticletitle{Bayesian Persuasion Meets Mechanism Design: Going
  Beyond Intractability with Type Reporting}.
\newblock \bibinfo{journal}{\emph{arXiv preprint arXiv:2202.00605}}
  (\bibinfo{year}{2022}).
\newblock


\bibitem[\protect\citeauthoryear{Castiglioni, Marchesi, and Gatti}{Castiglioni
  et~al\mbox{.}}{2022b}]%
        {CastiglioniEC22}
\bibfield{author}{\bibinfo{person}{Matteo Castiglioni},
  \bibinfo{person}{Alberto Marchesi}, {and} \bibinfo{person}{Nicola Gatti}.}
  \bibinfo{year}{2022}\natexlab{b}.
\newblock \showarticletitle{Designing Menus of Contracts Efficiently: The Power
  of Randomization}. In \bibinfo{booktitle}{\emph{Proceedings of the 23rd ACM
  Conference on Economics and Computation}} \emph{(\bibinfo{series}{EC '22})}.
  \bibinfo{publisher}{Association for Computing Machinery},
  \bibinfo{address}{New York, NY, USA}, \bibinfo{pages}{705–735}.
\newblock
\showISBNx{9781450391504}
\urldef\tempurl%
\url{https://doi.org/10.1145/3490486.3538270}
\showDOI{\tempurl}


\bibitem[\protect\citeauthoryear{Cong and He}{Cong and He}{2019}]%
        {cong2019blockchain}
\bibfield{author}{\bibinfo{person}{Lin~William Cong} {and}
  \bibinfo{person}{Zhiguo He}.} \bibinfo{year}{2019}\natexlab{}.
\newblock \showarticletitle{Blockchain disruption and smart contracts}.
\newblock \bibinfo{journal}{\emph{The Review of Financial Studies}}
  \bibinfo{volume}{32}, \bibinfo{number}{5} (\bibinfo{year}{2019}),
  \bibinfo{pages}{1754--1797}.
\newblock


\bibitem[\protect\citeauthoryear{Duetting, Ezra, Feldman, and
  Kesselheim}{Duetting et~al\mbox{.}}{2021}]%
        {duetting2021combinatorial}
\bibfield{author}{\bibinfo{person}{Paul Duetting}, \bibinfo{person}{Tomer
  Ezra}, \bibinfo{person}{Michal Feldman}, {and} \bibinfo{person}{Thomas
  Kesselheim}.} \bibinfo{year}{2021}\natexlab{}.
\newblock \showarticletitle{Combinatorial Contracts}.
\newblock \bibinfo{journal}{\emph{arXiv preprint arXiv:2109.14260}}
  (\bibinfo{year}{2021}).
\newblock


\bibitem[\protect\citeauthoryear{D{\"u}tting, Roughgarden, and
  Cohen}{D{\"u}tting et~al\mbox{.}}{2020}]%
        {dutting2020complexity}
\bibfield{author}{\bibinfo{person}{Paul D{\"u}tting}, \bibinfo{person}{Tim
  Roughgarden}, {and} \bibinfo{person}{Inbal-Talgam Cohen}.}
  \bibinfo{year}{2020}\natexlab{}.
\newblock \showarticletitle{The complexity of contracts}. In
  \bibinfo{booktitle}{\emph{Proceedings of the Fourteenth Annual ACM-SIAM
  Symposium on Discrete Algorithms}}. SIAM, \bibinfo{pages}{2688--2707}.
\newblock


\bibitem[\protect\citeauthoryear{D{\"u}tting, Roughgarden, and
  Talgam-Cohen}{D{\"u}tting et~al\mbox{.}}{2019}]%
        {dutting2019simple}
\bibfield{author}{\bibinfo{person}{Paul D{\"u}tting}, \bibinfo{person}{Tim
  Roughgarden}, {and} \bibinfo{person}{Inbal Talgam-Cohen}.}
  \bibinfo{year}{2019}\natexlab{}.
\newblock \showarticletitle{Simple versus optimal contracts}. In
  \bibinfo{booktitle}{\emph{Proceedings of the 2019 ACM Conference on Economics
  and Computation}}. \bibinfo{pages}{369--387}.
\newblock


\bibitem[\protect\citeauthoryear{Grossman and Hart}{Grossman and Hart}{1983}]%
        {grossman1983analysis}
\bibfield{author}{\bibinfo{person}{Sanford~J Grossman} {and}
  \bibinfo{person}{Oliver~D Hart}.} \bibinfo{year}{1983}\natexlab{}.
\newblock \showarticletitle{An Analysis of the Principal-Agent Problem}.
\newblock \bibinfo{journal}{\emph{Econometrica}} \bibinfo{volume}{51},
  \bibinfo{number}{1} (\bibinfo{year}{1983}), \bibinfo{pages}{7--46}.
\newblock


\bibitem[\protect\citeauthoryear{Guruganesh, Schneider, and Wang}{Guruganesh
  et~al\mbox{.}}{2021}]%
        {guruganesh2020contracts}
\bibfield{author}{\bibinfo{person}{Guru Guruganesh}, \bibinfo{person}{Jon
  Schneider}, {and} \bibinfo{person}{Joshua~R. Wang}.}
  \bibinfo{year}{2021}\natexlab{}.
\newblock \showarticletitle{Contracts under Moral Hazard and Adverse
  Selection}. In \bibinfo{booktitle}{\emph{Proceedings of the 22nd ACM
  Conference on Economics and Computation}} \emph{(\bibinfo{series}{EC '21})}.
  \bibinfo{publisher}{Association for Computing Machinery},
  \bibinfo{address}{New York, NY, USA}, \bibinfo{pages}{563–582}.
\newblock
\showISBNx{9781450385541}
\urldef\tempurl%
\url{https://doi.org/10.1145/3465456.3467637}
\showDOI{\tempurl}


\bibitem[\protect\citeauthoryear{Ho, Slivkins, and Vaughan}{Ho
  et~al\mbox{.}}{2016}]%
        {ho2016adaptive}
\bibfield{author}{\bibinfo{person}{Chien-Ju Ho}, \bibinfo{person}{Aleksandrs
  Slivkins}, {and} \bibinfo{person}{Jennifer~Wortman Vaughan}.}
  \bibinfo{year}{2016}\natexlab{}.
\newblock \showarticletitle{Adaptive contract design for crowdsourcing markets:
  Bandit algorithms for repeated principal-agent problems}.
\newblock \bibinfo{journal}{\emph{Journal of Artificial Intelligence Research}}
   \bibinfo{volume}{55} (\bibinfo{year}{2016}), \bibinfo{pages}{317--359}.
\newblock


\bibitem[\protect\citeauthoryear{Holmstrom and Milgrom}{Holmstrom and
  Milgrom}{1991}]%
        {holmstrom1991multitask}
\bibfield{author}{\bibinfo{person}{Bengt Holmstrom} {and} \bibinfo{person}{Paul
  Milgrom}.} \bibinfo{year}{1991}\natexlab{}.
\newblock \showarticletitle{Multitask principal-agent analyses: Incentive
  contracts, asset ownership, and job design}.
\newblock \bibinfo{journal}{\emph{Journal of Law, Economics, \& Organization}}
  \bibinfo{volume}{7} (\bibinfo{year}{1991}), \bibinfo{pages}{24}.
\newblock


\bibitem[\protect\citeauthoryear{Laffont and Martimort}{Laffont and
  Martimort}{2009}]%
        {laffont2009theory}
\bibfield{author}{\bibinfo{person}{Jean-Jacques Laffont} {and}
  \bibinfo{person}{David Martimort}.} \bibinfo{year}{2009}\natexlab{}.
\newblock \bibinfo{booktitle}{\emph{The theory of incentives: the
  principal-agent model}}.
\newblock \bibinfo{publisher}{Princeton university press}.
\newblock


\bibitem[\protect\citeauthoryear{Mas-Colell, Whinston, Green,
  et~al\mbox{.}}{Mas-Colell et~al\mbox{.}}{1995}]%
        {mas1995microeconomic}
\bibfield{author}{\bibinfo{person}{Andreu Mas-Colell},
  \bibinfo{person}{Michael~Dennis Whinston}, \bibinfo{person}{Jerry~R Green},
  {et~al\mbox{.}}} \bibinfo{year}{1995}\natexlab{}.
\newblock \bibinfo{booktitle}{\emph{Microeconomic theory}}.
  Vol.~\bibinfo{volume}{1}.
\newblock \bibinfo{publisher}{Oxford university press New York}.
\newblock


\bibitem[\protect\citeauthoryear{Rogerson}{Rogerson}{1985}]%
        {rogerson1985repeated}
\bibfield{author}{\bibinfo{person}{William~P Rogerson}.}
  \bibinfo{year}{1985}\natexlab{}.
\newblock \showarticletitle{Repeated moral hazard}.
\newblock \bibinfo{journal}{\emph{Econometrica: Journal of the Econometric
  Society}} (\bibinfo{year}{1985}), \bibinfo{pages}{69--76}.
\newblock


\bibitem[\protect\citeauthoryear{Shavell}{Shavell}{1979}]%
        {shavell1979risk}
\bibfield{author}{\bibinfo{person}{Steven Shavell}.}
  \bibinfo{year}{1979}\natexlab{}.
\newblock \showarticletitle{Risk sharing and incentives in the principal and
  agent relationship}.
\newblock \bibinfo{journal}{\emph{The Bell Journal of Economics}}
  (\bibinfo{year}{1979}), \bibinfo{pages}{55--73}.
\newblock


\bibitem[\protect\citeauthoryear{Shoham and Leyton-Brown}{Shoham and
  Leyton-Brown}{2008}]%
        {shoham2008multiagent}
\bibfield{author}{\bibinfo{person}{Yoav Shoham} {and} \bibinfo{person}{Kevin
  Leyton-Brown}.} \bibinfo{year}{2008}\natexlab{}.
\newblock \bibinfo{booktitle}{\emph{Multiagent systems: Algorithmic,
  game-theoretic, and logical foundations}}.
\newblock \bibinfo{publisher}{Cambridge University Press}.
\newblock


\bibitem[\protect\citeauthoryear{Trevisan}{Trevisan}{2001}]%
        {Trevisan2001Independent}
\bibfield{author}{\bibinfo{person}{Luca Trevisan}.}
  \bibinfo{year}{2001}\natexlab{}.
\newblock \showarticletitle{Non-Approximability Results for Optimization
  Problems on Bounded Degree Instances}. In
  \bibinfo{booktitle}{\emph{Proceedings of the Thirty-Third Annual ACM
  Symposium on Theory of Computing}} \emph{(\bibinfo{series}{STOC '01})}.
  \bibinfo{publisher}{Association for Computing Machinery},
  \bibinfo{address}{New York, NY, USA}, \bibinfo{pages}{453–461}.
\newblock
\showISBNx{1581133499}
\urldef\tempurl%
\url{https://doi.org/10.1145/380752.380839}
\showDOI{\tempurl}


\bibitem[\protect\citeauthoryear{Ye}{Ye}{1990}]%
        {ye1990recovering}
\bibfield{author}{\bibinfo{person}{Yinyu Ye}.} \bibinfo{year}{1990}\natexlab{}.
\newblock \showarticletitle{Recovering optimal basic variables in Karmarkar's
  polynomial algorithm for linear programming}.
\newblock \bibinfo{journal}{\emph{Mathematics of operations research}}
  \bibinfo{volume}{15}, \bibinfo{number}{3} (\bibinfo{year}{1990}),
  \bibinfo{pages}{564--572}.
\newblock


\end{thebibliography}

\newpage
\appendix
\section{Proofs Omitted From Section~\ref{sec:ptas}}

\lemmaOne*

\begin{proof}
	Given $\left( p^\theta, a^\theta \right)_{\theta \in \Theta}$ as in Definition~\ref{def:apx_menu}, we let $\widehat P =\left( \hat p^\theta \right)_{\theta \in \Theta}$ be an auxiliary menu of deterministic contracts such that $\hat p^\theta_\omega = \left( 1-\sqrt{\epsilon} \right) \, p^\theta_\omega+\sqrt{\epsilon} r_\omega$ for every $\theta \in \Theta$ and $\omega \in \Omega$.
	Then, we define the menu of contracts $\overline{P} = \left( \bar p^\theta \right)_{\theta \in \Theta}$ as follows.
	For every agent's type $\theta\in \Theta$, we let $\bar p^\theta = \hat p^{\theta'}$ for some $\theta' \in \argmax_{\theta' \in \Theta } U^\theta \left( \hat p^{\theta'} \right)$, where, for the ease of presentation, we define $U^\theta \left( \hat p^{\theta'} \right) \coloneqq \max_{a \in A} \left\{ \sum_{\omega \in \Omega} F_{\theta,a,\omega} \, \hat p^{\theta'}_\omega-c_{\theta,a} \right\}$ as type $\theta$'s utility in any IC action under $\hat p^{\theta'} $.

	Notice that $\overline{P} = \left( \bar p^\theta \right)_{\theta \in \Theta}$ is DSIC by definition.
	Next, we show that, for each agent's type $\theta \in \Theta$, the principal's expected utility contribution due to that type under contract $\bar p^\theta$ decreases by a at most $2\sqrt{\epsilon}$ compared with that obtained when the agent plays $a^\theta$ under $p^\theta$, proving the lemma.
	%
	
	For every agent's type $\theta \in \Theta$, let $\theta' \in \Theta$ be the agent's type such that $\bar p^\theta=\hat p^{\theta'}$.
	Notice that it may be the case that $\theta = \theta'$ and/or $b^\theta\left( \bar p^\theta \right)= a^\theta$.
	%
	%
	First, we prove the following:
	\begin{align*}
	\sum_{\omega \in \Omega} F_{\theta,b^\theta( \bar p^\theta),\omega}  \left( (1-\sqrt{\epsilon}) \, p^{\theta'}_\omega+\sqrt{\epsilon} r_\omega \right)-c_{\theta,b^\theta( \bar p^\theta)} &= \sum_{\omega \in \Omega} F_{\theta,b^\theta( \bar p^\theta),\omega} \, \bar p^\theta -c_{\theta,b^\theta( \bar p^\theta)} \\
	&\ge \sum_{\omega \in \Omega} F_{\theta,a^\theta,\omega}  \hat p^\theta -c_{\theta,a^\theta} \\
	&= \sum_{\omega \in \Omega} F_{\theta,a^\theta,\omega}  \left( (1-\sqrt{\epsilon}) \, p^\theta+\sqrt{\epsilon} r_\omega \right) -c_{\theta,a^\theta} ,
	\end{align*}
	where the inequality follows from the fact that playing action $b^\theta\left( \bar p^\theta \right)$ under contract $\bar p^\theta$ is preferred by the agent over playing action $a^\theta$ under contract $\hat p^{\theta}$, by definition of contract $\bar p^\theta$.
	Moreover, by definition of the $\epsilon$-approximate menu of deterministic contracts $\left( p^\theta, a^\theta \right)_{\theta \in \Theta}$, it holds
	\[
	\sum_{\omega \in \Omega} F_{\theta,a^\theta,\omega} \, p^\theta -c_{\theta,a^\theta} \ge \sum_{\omega \in \Omega} F_{\theta,b^\theta(p^{\theta'}),\omega}  \, p^{\theta'} -c_{\theta,b^\theta(p^{\theta'})} -\epsilon\ge \sum_{\omega \in \Omega} F_{\theta,b^\theta(\bar p^\theta),\omega}  \, p^{\theta'} -c_{\theta,b^\theta(\bar p^\theta)} -\epsilon.
	\]
	By summing the two relations obtained above, we get
	\[
	\sum_{\omega \in \Omega}  F_{\theta,b^\theta( \bar p^\theta), \omega} \left( -\sqrt{\epsilon} \, p^{\theta'}_{\omega}+ \sqrt{\epsilon} r_\omega \right) \ge \sum_{\omega \in \Omega}  F_{\theta,a^\theta,\omega} \left(  -\sqrt{\epsilon} \, p^\theta_\omega+\sqrt{\epsilon} r_\omega \right)-\epsilon,
	\]
	which implies that 
	\[
	\sum_{\omega \in \Omega}  F_{\theta,b^\theta(\bar p^\theta)} \left(   r_\omega - p^{\theta'}_{\omega} \right) \ge \sum_{\omega \in \Omega}  F_{\theta,a^\theta,\omega} \left(   r_\omega- p^\theta_\omega \right)-\sqrt\epsilon.
	\]

	Then, the principal's expected utility when the agent is of type is $\theta \in \Theta$ is equal to
	%
	\begin{align*}
	\sum_{\omega \in \Omega} F_{\theta,b^\theta( \bar p^\theta),\omega}  \left( r_\omega-\bar p^\theta_\omega \right)&=
	\sum_{\omega \in \Omega} F_{\theta,b^\theta( \bar p^\theta),\omega}  \left( r_\omega -(1-\sqrt{\epsilon}) \, p^{\theta'} -\sqrt{\epsilon} r_\omega \right)\\
	&= \left( 1-\sqrt{\epsilon} \right) \sum_{\omega \in \Omega} F_{\theta,b^\theta( \bar p^\theta),\omega} \left( r_\omega - p^{\theta'} \right) \\
	&\ge \left( 1-\sqrt{\epsilon} \right) \left( \sum_{\omega \in \Omega}  F_{\theta,a^\theta,\omega} \left(  r_\omega- p^\theta_\omega \right)- \sqrt \epsilon \right)\\
	&\ge \sum_{\omega \in \Omega}  F_{\theta,a^\theta,\omega} \left(   r_\omega- p^\theta_\omega \right)-2 \sqrt{\epsilon}.
	\end{align*}
	%
	This concludes the proof.
\end{proof}

\lemmaTwo*

\begin{proof}
	By contradiction, we show that, if $\sum_{\theta \in \Theta(L,P)} \mu_\theta > \frac{1}{L}$, then the menu $P = \left( p^\theta \right)_{\theta \in \Theta}$ provides a negative expected utility to the principal.
	In particular, if $\sum_{\theta \in \Theta(L,P)} \mu_\theta > \frac{1}{L}$, then the expected payment due to agent's types in $\Theta(L,P)$ is greater than $1$, while the expected reward to the principal is at most $1$.
	Hence, since a menu of all-zero contract vectors provides the principal with an expected utility at least of zero, the menu $P$ cannot be optimal, reaching a contradiction.
	%
	%
\end{proof}

\lemmaThree*

\begin{proof}
	%
	The proof is organized in two steps.
	As a first step, we show that, from an optimal menu of deterministic contracts $P = \left( p^\theta \right)_{\theta \in \Theta}$, we can build a $\frac{\delta^2}{16}$--approximate menu of deterministic contracts $\left( \bar p^\theta, \bar a^\theta \right)_{\theta \in \Theta}$ that provides the principal with an expected utility of $OPT-\frac{\delta}{2}$ and such that it adopts a limited number of different payment values.
	In particular, all payment values $\bar p^\theta_\omega$, for every agent's type $\theta \in \Theta$ and outcome $\omega \in \Omega$, belong to a small, finite set.
	This implies that also the set of possible different contracts is small.
	Then, as a second step, we employ Lemma~\ref{lm:toIC} to build a DSIC menu of deterministic contracts as in the statement of the lemma, starting from $\left( \bar p^\theta, \bar a^\theta \right)_{\theta \in \Theta}$, and only incurring in a small loss in the principal's expected utility.
	
	Let $M(\omega) \coloneqq \max_{\theta \in \Theta}p^\theta_\omega$ for every $\omega \in \Omega$, and let $I \coloneqq \left\{ 0,1,\dots,\left\lceil\frac{\log \eta}{\log(1-\eta)}\right\rceil\right\}$ with $\eta \coloneqq \frac{\delta^3}{64m}$.
	We split $\Theta$ into two sets.
	Letting $L \coloneqq \frac{4}{\delta}$, the first set is $\Theta_1 \coloneqq \Theta(L,P)$, while the second one is $\Theta_2 \coloneqq \Theta\setminus \Theta_1$, which includes all the agent's types that are \emph{not} in $\Theta_1$.
	Moreover, we let $\hat \Theta\subseteq \Theta$ be the subset of agent's types $\theta \in \Theta$ such that there exists an action $a^\theta \in A$ and an outcome $\omega^\theta \in \Omega$ that simultaneously satisfy $p^\theta_{\omega^\theta}< \eta M(\omega^\theta)$ and $ F_{\theta,a^\theta,\omega^\theta} \, p^\theta_{\omega^{\theta}}\ge \frac{\delta^2}{8m}$.

	First, we prove that $\hat \Theta\subseteq \Theta_1$.
	For every $\theta \in \hat \Theta$, we have $ F_{\theta,a^\theta,\omega^\theta} \ge \frac{1}{p^\theta_{\omega^\theta}}\frac{\delta^2}{8m} >  \frac{\delta^2}{8m}\frac{1}{\eta M(\omega)}= \frac{8}{\delta M(\omega)} $.
	Then, by IC conditions and the definition of $M(\omega^\theta)$, it holds 
	\[\sum_{\omega \in \Omega} F_{\theta,b^{\theta}(p^\theta),\omega} \, p^\theta_\omega -c_{\theta,b^{\theta}(p^\theta)} \ge F_{\theta,a^\theta,\omega^\theta} M(\omega^\theta) - c_{\theta,a^\theta}, \]
	which implies that 
	\[\sum_{\omega \in \Omega} F_{\theta,b^{\theta}(p^\theta),\omega} \, p^\theta_\omega \ge F_{\theta,a^\theta,\omega^\theta} M(\omega^\theta)-1 \ge  \frac{8}{\delta}-1\ge \frac{4}{\delta}.\]

	Next, we build a $\frac{\delta^2}{16}$--approximate menu of deterministic contracts $\left( \bar p^\theta, \bar a^\theta \right)_{\theta \in \Theta}$, as follows.
	For each agent's type $\theta \in \Theta_2$ and outcome $\omega \in \Omega$, we set $\bar p^\theta_\omega=0$ if $ p^\theta_\omega < (1-\eta)^{\left\lceil\frac{\log \eta}{\log(1-\eta)}\right\rceil} M(\omega)\le \eta  M(\omega)$, while we set $\bar p^\theta_\omega=(1-\eta)^{i^*} M(\omega)$ otherwise, where $i^*$ is the smallest integer such that $(1-\eta)^{i^*} M(\omega)\le   p^\theta_\omega$.
	Notice that $i^*$ is at most $\left\lceil \frac{\log \eta}{\log(1-\eta)} \right\rceil$, and, thus, $i^*\in I$.
	Intuitively, contracts $\bar p^\theta$ are defined by lowering the payments of the corresponding contracts $p^\theta$ to the nearest value $(1-\eta)^i$ with $i \in I$, so that, in the second case above, it holds that $ \bar p^\theta_\omega \ge (1-\eta)  p^\theta_\omega$.
	Moreover, we set $\bar a^\theta=b^\theta \left( p^\theta \right)$ for all $\theta \in \Theta_2$.
	Finally, for each agent's type $\theta \in \Theta_1$, we set $\bar p^\theta=\bar p^{g(\theta)}$ and $\bar a^\theta=b^{\theta}\left( \bar p^{\theta} \right)$, where, for the ease of notation, we let $g(\theta) \in \argmax_{\hat \theta \in \Theta_2} \left\{ \sum_{\omega \in \Omega} F_{\theta,b^\theta(\bar p^{\hat\theta}) ,\omega} \, \bar p^{\hat \theta}_\omega-c_{\theta,b^\theta(\bar p^{\hat \theta})} \right\}$.
	
	First, we show that the menu $\left( \bar p^\theta, \bar a^\theta \right)_{\theta \in \Theta}$ is $\frac{\delta^2}{16}$--approximate.
	This holds by construction for all the agent's types in $\Theta_1$.
	For any type $\theta \in \Theta_2$, as a first step, we prove that
	\begin{align}\label{eq:smallPayment}
	& F_{\theta,\bar a^{\theta},\omega} \, \bar p^\theta_\omega \ge F_{\theta,\bar a^\theta,\omega}  \, p^\theta_\omega-\frac{\delta^2}{16m} & \forall \omega \in \Omega. 
	\end{align}
	%
	We consider two cases.
	If $F_{\theta,\bar a^{ \theta},\omega} \,  p^\theta_\omega< \frac{\delta^2}{16m}$, then $F_{\theta,\bar a^\theta,\omega}  \, p^\theta_\omega - \frac{\delta^2}{16m} < 0$, and, thus, the inequality above holds since $F_{\theta, \bar a^{\theta},\omega} \, \bar p^\theta_\omega\ge 0$.
	Otherwise, since $\theta\notin \hat \Theta \subseteq \Theta_1$, it holds $p^\theta_\omega \ge \eta M(\omega)$ and
	\[
	F_{\theta,\bar a^{\theta},\omega} \, \bar p^\theta_\omega \ge F_{\theta,\bar a^{\theta},\omega} (1-\eta) \, p^\theta_\omega \ge F_{\theta,\bar a^{\theta},\omega} \, p^\theta_\omega - \eta F_{\theta,\bar a^{\theta},\omega} \, p^\theta_\omega \ge F_{\theta,\bar a^{\theta},\omega} \, p^\theta_\omega - \eta \frac{4}{\delta}=F_{\theta,\bar a^{\theta},\omega} \, p^\theta_\omega-\frac{\delta^2}{16m},
	\]
	where the last inequality follows from $F_{\theta,\bar a^{\theta},\omega} \, p^\theta_\omega\le \sum_\omega F_{\theta,\bar a^{\theta},\omega} \, p^\theta_\omega \le \frac{4}{\delta}$ since $\theta\notin \Theta_1$.
	Then, we prove that $\left( \bar p^\theta, \bar a^\theta \right)_{\theta \in \Theta}$ satisfies Equation~\eqref{eq:apxIc} for every agent's type $\theta \in \Theta_2$.
	Notice that, since for each contract $\bar p^\theta$ with $\theta \in \Theta_1$ there exists a contract $\bar p^{g(\theta)}$ with $g(\theta) \in \Theta_2$ such that $\bar p^\theta = \bar p^{g(\theta)}$, the condition in Equation~\eqref{eq:apxIc} must be checked only for types $\hat \theta \in \Theta$.
	In particular, for every pair of agent's types $\theta, \hat \theta \in \Theta_2$, the following holds:
	\begin{align*}
	\sum_{\omega \in \Omega} F_{\theta,\bar a^{\theta},\omega} \, \bar p^\theta_\omega -c_{\theta, \bar a^{\theta}} & \ge \sum_{\omega \in \Omega} \left( F_{\theta,\bar a^{\theta},\omega} \, p^\theta_\omega -  \frac{\delta^2}{16m} \right) -c_{\theta,\bar a^{\theta}} \\
	&= \sum_{\omega \in \Omega} F_{\theta,b^\theta(p^\theta),\omega}   \, p^\theta_\omega  -c_{\theta,b^\theta(p^\theta)} -\frac{\delta^2}{16}\\
	&\ge\sum_{\omega \in \Omega} F_{\theta,b^{ \theta}(p^{\hat \theta}),\omega}  p^{\hat \theta}_\omega -c_{\theta,b^{ \theta}(p^{\hat \theta})} - \frac{\delta^2}{16} \\
	&\ge\sum_{\omega \in \Omega} F_{\theta,b^{ \theta}(\bar p^{\hat \theta}),\omega}  p^{\hat \theta}_\omega -c_{\theta,b^{ \theta}(\bar p^{\hat \theta})} - \frac{\delta^2}{16} \\
	&\ge \sum_{\omega \in \Omega} F_{\theta,b^{ \theta}(\bar p^{\hat \theta}),\omega} \, \bar p^{\hat \theta}_\omega -c_{\theta,b^{ \theta}(\bar p^{\hat \theta})} - \frac{\delta^2}{16} ,
	\end{align*}
	where the first inequality follows from Equation~\eqref{eq:smallPayment}, the second one holds by the DSIC property of $P = \left(p^\theta\right)_{\theta \in \Theta}$, the third one from the definition of best response, while the last inequality follows from the fact that $\bar p^{\theta'}_\omega\le  p^{\theta'}_\omega $ for every $\theta \in \Theta_2$ and $\omega \in \Omega$.
	
	Now, we prove that the principal's expected utility in the $\frac{\delta^2}{16}$--approximate menu of deterministic contracts $\left( \bar p^\theta, \bar a^\theta \right)_{\theta \in \Theta}$ is at least $OPT-\frac{\delta}{2}$.
	For every agent's type $\theta \in \Theta_1$, we have that
	\begin{align*}
	\sum_{\omega \in \Omega} F_{\theta,\bar a^\theta,\omega}  \, \bar p^\theta_\omega - c_{\theta,\bar a^\theta} &=  \sum_{\omega \in \Omega} F_{\theta,  b^\theta(\bar p^{g(\theta)}),\omega}  \, \bar p^{g(\theta)}_\omega - c_{\theta, b^\theta(\bar p^{g(\theta)})}  \\
	&\le \sum_{\omega \in \Omega} F_{\theta,  b^\theta(\bar p^{g(\theta)}),\omega}   \, p^{g(\theta)}_\omega - c_{\theta, b^\theta(\bar p^{g(\theta)})} \\
	&\le   \sum_{\omega \in \Omega} F_{\theta,b^\theta(p^\theta),\omega} \, p^\theta_\omega - c_{\theta,b^\theta(p^\theta)},
	\end{align*}
	%
	where the first inequality follows from the fact that $\bar p^\theta_\omega\le p^\theta_\omega$ for every $\theta \in \Theta_2$ and $\omega\in \Omega$, while the second one from the IC e DSIC properties of $P = \left(p^\theta\right)_{\theta \in \Theta}$.
	Thus, we can conclude that $\sum_{\theta \in \Theta} F_{\theta,\bar a^\theta,\omega} \, \bar  p^\theta_\omega \le \sum_{\theta \in \Theta} F_{\theta,b^\theta(p^\theta),\omega} \, p^\theta_\omega +1$.
	Moreover, the principal's expected reward of contract $p^\theta$ is $\sum_{\theta \in \Theta} F_{\theta,b^\theta(p^\theta),\omega} r_\omega \le 1$.
	Thus, the revenue that the principal extracts from an agent of type $\theta \in \Theta_1$ decreases by at most $2$, since, by changing the contract from $p^\theta$ to $\bar p^\theta$, the reward decreases by at most $1$ and the payment increases by at most $1$.
	On the other hand, for every agent's type $\theta \in \Theta_2$, the principal's expected utility does \emph{not} decrease, since the agent plays the same action and the payments decrease.
	Hence, the principal's expected utility decreases by at most $2 \sum_{\theta \in \Theta_1} \mu_\theta \le 2 \frac{\delta}{4}= \frac{\delta}{2}$, where the inequality follows from Lemma~\ref{lm:smallCosts}.

	Finally, by Lemma~\ref{lm:toIC}, starting from the $\frac{\delta^2}{16}$-approximate menu of deterministic contracts $\left( \bar p^\theta, \bar a^\theta \right)_{\theta \in \Theta}$, we can build a DSCI menu of deterministic contracts $\hat P = \left( \hat p^\theta \right)_{\theta \in \Theta}$ providing the principal with a utility at least of $APX-2\sqrt{\frac{\delta^2}{16}}=APX-\frac{\delta}{2}$, where $APX$ is the expected utility of $\left( \bar p^\theta, \bar a^\theta \right)_{\theta \in \Theta}$.
	Hence, the principal's expected utility is at least $APX-\frac{\delta}{2}=OPT-\delta$.
	Moreover, we also have that $\hat p^\theta_\omega \in \left\{ (1-\eta)^i M(\omega) \left( 1-\frac{\delta}{4} \right)+ \frac{\delta}{4} r_\omega \right\}_{i \in I } \cup \left\{ \frac{\delta}{4 } r_\omega\right\}$ for every $\theta \in \Theta$ and $\omega \in \Omega$.
	This holds since the menu of contracts built in Lemma~\ref{lm:toIC} uses contracts defined as $(1-\sqrt{\epsilon})\bar  p^\theta+\sqrt \epsilon r$ for $\theta \in \Theta$.
	Thus, there are at most
	\[
	|I+1|^m=\left( \frac{\log \frac{\delta^3}{64m}}{\log \left(1-\frac{\delta^3}{64m} \right)}+3 \right)^m\le \left( \frac{64m}{\delta^3} \log \frac{64m}{\delta^3} \right)^m = O\left(\left( \frac{m}{\delta^3} \log \frac{m}{\delta} \right)^m\right)
	\]
	different contracts, where in the first inequality we use $-\log(1-x)\ge x$ for $x\le 1$.
\end{proof}

\theoremTwo*

\begin{proof}
	Let $\delta > 0$ be a desired additive approximation factor for the PTAS.
	By Lemma~\ref{lm:smallSup}, in order to find an approximate DSIC menu of deterministic contracts, it is sufficient to optimize over all the menus using at most $k=k(\delta,m)$ different contracts.
	%
	%
	Recall that a menu that uses at most $k$ different contracts can be represented by a matrix $T \in \mathcal{T}	= \mathbb{R}_+^{k \times m}$ and a function $f:\Theta\rightarrow \{1,\dots,k\}$.
	
	The PTAS works by organizing the space of possible matrices representing menus of deterministic contracts, namely $\mathbb{R}_+^{k \times m}$, into polyhedra, depending on which function $f$ can be paired with the matrix to define a DSIC menu.
	In particular, given a function $f:\Theta\rightarrow \{1,\dots,k\}$, we define $\mathcal{Q}^f\subseteq \mathbb{R}_+^{k \times m}$ as the polyhedra defined by the following set of linear inequalities enforcing DSIC:
	\begin{align*}
	& \sum_{\omega \in \Omega}  F_{\theta,b^\theta(T_{f(\theta)}),\omega} \, T_{f(\theta),\omega} -c_{\theta,b^\theta(T_{f(\theta)})}\ge \sum_{\omega \in \Omega}   F_{\theta,a,\omega} \, T_{i,\omega}-c_{\theta,a} & \forall \theta \in \Theta, \forall a \in A, \forall i \in \{1,\dots,k\}.
	\end{align*}
	
	Once function $f$ is fixed, \emph{i.e.}, we restrict the attention to menus of deterministic contracts in $\mathcal{Q}^f$, an optimal menu can be obtained as optimal solution to the following LP:
	\begin{subequations} \label{lp:outcomes}
		\begin{align}
		\max_{T \in \mathbb{R}^{k \times m}} &\sum_{\theta \in \Theta} \mu_\theta \sum_{\omega \in \Omega} F_{\theta,b^\theta(T_{f(\theta)}),\omega} \left( r_\omega -T_{f(\theta),\omega} \right) \hspace{5cm} \text{s.t.} \\
		& \sum_{\omega \in \Omega}  F_{\theta,b^\theta(T_{f(\theta)})} \, T_{f(\theta),\omega} -c_{\theta,b^\theta(T_{f(\theta)})}\ge \sum_{\omega \in \Omega}   F_{\theta,a,\omega} \, T_{i,\omega}-c_{\theta,a} \nonumber \\
		& \hspace{6.7cm} \forall \theta \in \Theta, \forall a \in A, \forall i \in \{1,\dots,k\} \label{lp:outcomes1}\\
		&  T_{i,\omega}\ge0 \hspace{6.5cm} \forall i \in \{1,\dots,k\} , \forall \omega \in \Omega. \label{lp:outcomes2}
		\end{align}
	\end{subequations}
	
	Since the objective of the LP is a linear function with non-positive coefficients and the feasible region is $\mathbb{R}^{k\times m}_{+}$, an optimal solution to the LP is one of the vertexes of the polyhedron defined by Constraints~\eqref{lp:outcomes1}~and~\eqref{lp:outcomes2}.
	In order to conclude the proof, notice that, for every possible function $f$, the feasible region of the LP is defined by a subset of the same set of $\ell nk^2+km$ possible constraints.
	In particular, for every $ \theta \in \Theta$, $a \in A$, and $i \in \{1,\dots,k\}$, there exists $k$ possible Constraints~\eqref{lp:outcomes1} depending on the value of $f(\theta)$, while Constraints~\eqref{lp:outcomes1} does \emph{not} depend on the function $f$.
	Moreover, since each vertex is at the intersection of $km$ linearly independent hyperplanes representing the inequalities and there are at most $\ell nk^2+km$ different hyperplanes, there are at most $\binom{\ell nk^2+km}{km}$ possible vertexes.
	Finally, the vertexes can be enumerated in polynomial time.
\end{proof}
\section{Proofs Omitted From Section~\ref{sec:simple}}

\lemmaBinary*

\begin{proof}
	Let $\Omega = \left\{ \omega_0, \omega_1 \right\}$ be the set made by the two outcomes.
	Moreover, w.l.o.g., let $\omega_0$ be the outcome with smaller reward, \emph{i.e.}, $r_{\omega_0}\le r_{\omega_1}$.
	We show that, given a menu of deterministic contracts $P = \left( p^\theta \right)_{\theta \in \Theta}$, we can build a contract $\hat p \in \mathbb{R}_+^m$ providing the principal with at least the same principal's expected utility.
	Let $\hat \theta \in \argmax_{\theta \in \Theta} p^\theta_{\omega_1}$.
	Then, we define the contract $\hat p$ such that $\hat p_{\omega_1}=\min\left\{ p^{\hat \theta}_{\omega_1}, r_{\omega_1}-r_{\omega_0} \right\}$ and $\hat p_{\omega_0}=0$.
	We show that, by replacing the menu $P = \left( p^\theta \right)_{\theta \in \Theta}$ with the contract $\hat p$, for each agent's type $\theta \in \Theta$, the principal's expected utility does \emph{not} decrease.

	Take any type $\theta \in \Theta$.
	As a preliminarily step, notice that 
	\begin{align}\label{eq:bestSingle}
	F_{\theta,b^\theta( p^\theta),\omega_1} \, \hat p_{\omega_1}& \le
	F_{\theta,b^\theta( p^\theta),\omega_1} \, p^{\hat \theta}_{\omega_1}+ \left( 1-F_{\theta,b^\theta( p^\theta),\omega_1} \right) \, p^{\hat \theta}_{\omega_0} \nonumber \\
	& \le  F_{\theta,b^\theta( p^\theta),\omega_1} \, p^\theta_{\omega_1}+  \left(1-F_{\theta,b^\theta( p^\theta),\omega_1} \right) \, p^\theta_{\omega_0}, 
	\end{align}
	where the first inequality follows from the definition of $\hat p$ and the second one from DSIC.
	Then, we consider two cases.
	First, suppose that $b^\theta\left(\hat p\right)=b^\theta\left( p^\theta\right)$.
	Then, the expected payment under the contract $p$ for an agent of type $\theta$ is 
	\begin{equation*}
	F_{\theta,b^\theta(\hat p),\omega_1} \, \hat p_{\omega_1} \le  F_{\theta,b^\theta( p^\theta),\omega_1} \, p^\theta_{\omega_1}+  \left( 1-F_{\theta,b^\theta( p^\theta),\omega_1} \right) \, p^\theta_{\omega_0} 
	\end{equation*}
	where the inequality comes from Equation~\eqref{eq:bestSingle} and $b^\theta\left( \hat p \right)=b^\theta\left( p^\theta \right)$.
	Hence, the payment decreases while the revenue does not, implying that the principal's expected utility does \emph{not} decrease.
	
	Second, suppose that $b^\theta\left( \hat p \right)\neq b^\theta \left( p^\theta \right)$.
	By IC conditions, we have 
	\[
	F_{\theta,b^\theta(\hat p),\omega_1} \, \hat p_{\omega_1}-c_{\theta,b^\theta(\hat p)} \ge F_{\theta,b^\theta(p^\theta),\omega_1} \, \hat p_{\omega_1}-c_{\theta,b^\theta(p^\theta)}, 
	\]
	and
	\[
	F_{\theta,b^\theta(\hat p),\omega_1} \, p^{\theta}_{\omega_1} +  \left( 1-F_{\theta,b^\theta(\hat p),\omega_1} \right) \, p^\theta_{\omega_0} -c_{\theta,b^\theta(\hat p)} \le  F_{\theta,b^\theta(p^\theta),\omega_1}\, p^\theta_{\omega_1}+  \left( 1-F_{\theta,b^\theta(p^\theta),\omega_1} \right) \, p^\theta_{\omega_0} - c_{\theta,b^\theta(p^\theta)}.
	\]
	By summing the two inequalities, we get
	\begin{equation}\label{eq:largeF}
	F_{\theta,b^\theta(\hat p),\omega_1} \, \left( \hat p_{\omega_1} -  p^\theta_{\omega_1} +p^\theta_{\omega_0} \right) \ge F_{\theta,b^\theta(p^\theta),\omega_1} \, \left( \hat p_{\omega_1} -  p^\theta_{\omega_1} +p^\theta_{\omega_0} \right). 
	\end{equation}
	We consider two cases.
	%
	If $p^\theta_{\omega_1} \le r_{\omega_1}-r_{\omega_0}$, then $\hat p_{\omega_1} \ge p^\theta_{\omega_1}$, which implies $F_{\theta,b^\theta(\hat p),\omega_1} \ge F_{\theta,b^\theta( p^\theta),\omega_1}$ by Equation~\eqref{eq:largeF}.
	Then, the principal's expected utility when the agent plays action $b^\theta \left( \hat p \right)$ is 
	\begin{align*}
	F_{\theta,b^\theta(\hat p),\omega_1} \left( r_{\omega_1}-\hat p_{\omega_1} \right)+ & \left( 1-F_{\theta,b^\theta(\hat p),\omega_1} \right) r_{\omega_0} \ge  F_{\theta,b^\theta( p^\theta),\omega_1} \left( r_{\omega_1}-\hat p_{\omega_1} \right)+ \left( 1-F_{\theta,b^\theta( p^\theta),\omega_1} \right) r_{\omega_0} \\
	& \ge F_{\theta,b^\theta( p^\theta),\omega_1}  \left( r_{\omega_1}-p^\theta_{\omega_1} \right)+  \left( 1-F_{\theta,b^\theta( p^\theta),\omega_1} \right) \left( r_{\omega_0}-p^\theta_{\omega_0} \right),
	\end{align*}
	
	where the first inequality comes from $r_{\omega_1}-\hat p_{\omega_1} \ge r_{\omega_0}$ (by construction) and $F_{\theta,b^\theta(\hat p),\omega_1} \ge F_{\theta,b^\theta( p^\theta),\omega_1}$, while the second one comes from Equation \eqref{eq:bestSingle}.
	Hence, the principal's utility does not decrease.
	
	If $p^\theta_{\omega_1} \ge r_{\omega_1}-r_{\omega_0}$, then the expected reward is 
	\begin{subequations}
		\begin{align*}
		F_{\theta,b^\theta( \hat p),\omega_1} (r_{\omega_1}-\hat p_{\omega_1}) + F_{\theta,b^\theta( \hat p),\omega_0}  r_{\omega_0} &\ge F_{\theta,b^\theta( \hat p),\omega_1} r_{\omega_0} + F_{\theta,b^\theta( \hat p),\omega_0}  r_{\omega_0} \\
		&=r_{\omega_0} \ge F_{\theta,b^\theta(  p^\theta),\omega_1} (r_{\omega_1}-p^\theta_{\omega_1}) + F_{\theta,b^\theta(  p^\theta),\omega_0}  (r_{\omega_0}-p^\theta_{\omega_0})
		\end{align*}
	\end{subequations}
	where the first inquality holds since $\hat p_{\omega_1}\le r_{\omega_1}-r_{\omega_0}$ by construction and the last inequality holds since both $r_{\omega_0}-p^\theta_{\omega_0}$ and $r_{\omega_1}-p^\theta_{\omega_1}$ are smaller than $r_{\omega_0}$.
	Thus, the principal's utility does not decrease.
	This concludes the proof.
\end{proof}

\theoremTypes*

\begin{proof}
	Consider an optimal menu of contracts $(p^\theta)_{\theta \in \Theta}$.
	This optimal menu of contracts induces an action $a^\theta=b^\theta(p^\theta)$ for each possible receiver's type $\theta \in \Theta$.
	Once the action profile $(a^\theta)_{\theta \in \Theta}$ is determined, the optimal menu of contracts can be found solving the following LP of polynomial size.
	
	\begin{subequations}
		\begin{align} 
		\min_{p \in \mathbb{R}_+^{\ell \times m}} &\sum_\theta \mu_\theta \sum_{\omega \in \Omega} p_\omega F_{\theta,a^\theta,\omega}  \\
		&\text{s.t. } \sum_{\omega \in \Omega} p^{\theta}_\omega F_{\theta,a^\theta,\omega} -c_{\theta,a^\theta} \ge  \sum_{\omega \in \Omega} p^{\theta'}_\omega F_{\theta,a,\omega} -c_{\theta,a} \,\, \forall \theta, \theta' \in \Theta, a \in A\\ 
		\end{align}
	\end{subequations}
	We can enumerate over all the actions' profiles and, for each action profile, find an optimal menu of contract that incentivize this action profile.
	Since the are $O(n^\ell)$ action profile, the algorithm works in polynomial time if the number of types is fixed.
\end{proof}

\section{Proofs Omitted From Section~\ref{sec:randomized}}

\lemmaRandOne*

\begin{proof}
	Let $\Gamma = \{ \gamma^\theta \}_{\theta \in \Theta}$ be a DSIC menu of randomized contracts, \emph{i.e.}, a feasible solution to Problem~\ref{prob:opt_rand}.
	We show that it is always possible to build a new DSIC menu of randomized contracts $\tilde\Gamma = \{ \tilde\gamma^\theta \}_{\theta \in \Theta}$ that is still feasible to Problem~\ref{prob:opt_rand}, provides at least the same expected utility, and satisfies $\left| \supp (\tilde\gamma^\theta) \cap \hat\pa^{\theta, a} \right| \leq 1$ for all $\theta \in \Theta$ and $a \in A$.
	In particular, we let each probability distribution $\tilde\gamma^\theta$ be such that the support $\supp(\tilde\gamma^\theta)$ contains all and only the contracts defined by $p^{\theta,a}\coloneqq \mathbb{E}_{p \sim \gamma^\theta} \left[ p \mid p \in \hat\pa^{\theta,a} \right]$ for all the actions $a \in A$ such that $\supp (\gamma^\theta) \cap \hat\pa^{\theta, a} \neq \varnothing$. 
	Moreover, we define $\tilde\gamma^\theta_{p^{\theta,a}} \coloneqq \Pr_{p \sim \gamma^\theta} \left\{ p\in \hat\pa^{\theta,a}  \right\}$ for these contracts.
	%
	Clearly, the menu of randomized contracts $\tilde\Gamma = \{ \tilde\gamma^\theta \}_{\theta \in \Theta}$ satisfies the required conditions on the supports.

	It remains to show that $\tilde\Gamma = \{ \tilde\gamma^\theta \}_{\theta \in \Theta}$ defines a solution to Problem~\ref{prob:opt_rand}.
	First, we prove that the principal's expected utility (the objective of Problem~\ref{prob:opt_rand}) does \emph{not} decrease with respect to that of the menu $\Gamma = \{ \gamma^\theta \}_{\theta \in \Theta}$.
	Formally:
	\begin{subequations}\label{eq:lem_finite_support}
		\begin{align}
		\sum_{\theta \in \Theta} &\mu_\theta \mathbb{E}_{p \sim \gamma^\theta}  \left[  \sum_{\omega \in \Omega} F_{\theta, b^\theta(p), \omega} r_\omega - \sum_{\omega \in \Omega} F_{\theta, b^\theta(p), \omega} p_\omega  \right] \nonumber \\
		& \hspace{-1cm}=  \sum_{\theta \in \Theta} \mu_\theta \sum_{a \in A} \Pr_{p \sim \gamma^\theta} \left\{ p\in \hat\pa^{\theta,a}  \right\} \mathbb{E}_{p \sim \gamma^\theta} \left[  \sum_{\omega \in \Omega} F_{\theta, b^\theta(p), \omega} r_\omega - \sum_{\omega \in \Omega} F_{\theta, b^\theta(p), \omega} p_\omega  \mid  p \in \hat\pa^{\theta,a}  \right] \label{eq:lem_finite_support_1} \\
		& \hspace{-1cm} = \sum_{\theta \in \Theta} \mu_\theta \sum_{a \in A} \Pr_{p \sim \gamma^\theta} \left\{ p\in \hat\pa^{\theta,a}  \right\} \left( \sum_{\omega \in \Omega} F_{\theta, a, \omega} r_\omega - \sum_{\omega \in \Omega} F_{\theta, a, \omega} \mathbb{E}_{p \sim \gamma^\theta} \left[  p_\omega  \mid  p \in \hat\pa^{\theta,a} \right] \right) \label{eq:lem_finite_support_2} \\
		& \hspace{-1cm} = \sum_{\theta \in \Theta} \mu_\theta \sum_{p \in \supp(\tilde\gamma^\theta)} \tilde\gamma^\theta_p \left( \sum_{\omega \in \Omega} F_{\theta, b^\theta(p), \omega} r_\omega - \sum_{\omega \in \Omega} F_{\theta, b^\theta(p), \omega} p_\omega \right) \label{eq:lem_finite_support_3} \\
		& \hspace{-1cm} = \sum_{\theta \in \Theta} \mu_\theta \mathbb{E}_{p \sim \tilde\gamma^\theta} \left[  \sum_{\omega \in \Omega} F_{\theta, b^\theta(p), \omega} r_\omega - \sum_{\omega \in \Omega} F_{\theta, b^\theta(p), \omega} p_\omega  \right], \label{eq:lem_finite_support_4}
		\end{align}
	\end{subequations}
	where Equation~\eqref{eq:lem_finite_support_1} holds since the sets $\hat\pa_{\theta, a}$ define a partition of $\mathbb{R}_+^m$, Equation~\eqref{eq:lem_finite_support_2} holds since for contracts $p \in \hat\pa^{\theta, a}$ it is the case that $b^\theta (p) = a$, while Equation~\eqref{eq:lem_finite_support_3} holds by definition of $\tilde\Gamma = \{ \tilde\gamma^\theta \}_{\theta \in \Theta}$.

	Finally, we need to prove that $\tilde\Gamma = \{ \tilde\gamma^\theta \}_{\theta \in \Theta}$ satisfies Constraints~\eqref{eq:opt_rand_ic} of Problem~\ref{prob:opt_rand}.
	Formally, we have that for every $\theta \neq \hat\theta\in \Theta$ it holds:
	\begingroup
	\allowdisplaybreaks
	\begin{subequations}\label{eq:lem_finite_support_ic}
		\begin{align}
		\mathbb{E}_{p \sim \tilde\gamma^\theta} & \left[ \sum_{\omega \in \Omega} F_{\theta, b^\theta(p), \omega} p_\omega - \right.  \left. c_{\theta, b^\theta(p)}  \right]  = \sum_{p \in \supp(\tilde\gamma^\theta)} \tilde\gamma_p^\theta \left( \sum_{\omega \in \Omega} F_{\theta, b^\theta(p), \omega} p_\omega - c_{\theta, b^\theta(p)}  \right) \label{eq:lem_finite_support_ic_1}\\
		& \hspace{-1.1cm} = \sum_{a \in A}  \Pr_{p \sim \gamma^\theta} \left\{ p\in \hat\pa^{\theta,a}  \right\}  \left( \sum_{\omega \in \Omega} F_{\theta, a, \omega} \mathbb{E}_{p \sim \gamma^\theta} \left[  p_\omega  \mid  p \in \hat\pa^{\theta,a} \right] - c_{\theta, a}  \right) \label{eq:lem_finite_support_ic_2}\\
		& \hspace{-1.1cm}  = \sum_{a \in A}  \Pr_{p \sim \gamma^\theta} \left\{ p\in \hat\pa^{\theta,a}  \right\} \mathbb{E}_{p \sim \gamma^\theta} \left[ \sum_{\omega \in \Omega} F_{\theta, b^\theta(p), \omega}  p_\omega  - c_{\theta, b^\theta(p)} \mid  p \in \hat\pa^{\theta,a} \right] \label{eq:lem_finite_support_ic_3}\\
		& \hspace{-1.1cm}  = \mathbb{E}_{p \sim \gamma^\theta} \left[ \sum_{\omega \in \Omega} F_{\theta, b^\theta(p), \omega} p_\omega - c_{\theta, b^\theta(p)}  \right] \label{eq:lem_finite_support_ic_4} \\
		& \hspace{-1.1cm}  \geq \mathbb{E}_{p \sim \gamma^{\hat \theta}} \left[ \sum_{\omega \in \Omega} F_{\theta, b^\theta(p), \omega} p_\omega - c_{\theta, b^\theta(p)}  \right] \label{eq:lem_finite_support_ic_5}\\
		& \hspace{-1.1cm}  = \mathbb{E}_{p \sim \gamma^{\hat \theta}} \left[ \max_{a \in A} \left\{  \sum_{\omega \in \Omega} F_{\theta, a, \omega} p_\omega - c_{\theta, a}  \right\} \right] \label{eq:lem_finite_support_ic_6}\\
		& \hspace{-1.1cm} = \sum_{a' \in A}  \Pr_{p \sim \gamma^{\hat \theta}} \left\{ p\in \hat\pa^{\hat\theta,a'}  \right\} \mathbb{E}_{p \sim \gamma^{\hat \theta}} \left[ \max_{a \in A} \left\{ \sum_{\omega \in \Omega} F_{ \theta, a, \omega}    p_\omega - c_{ \theta, a}   \right\} \mid  p \in \hat\pa^{\hat \theta,a'} \right] \label{eq:lem_finite_support_ic_7}\\
		& \hspace{-1.1cm}  \geq \sum_{a' \in A}  \Pr_{p \sim \gamma^{\hat \theta}} \left\{ p\in \hat\pa^{\hat\theta,a'}  \right\} \max_{a \in A} \mathbb{E}_{p \sim \gamma^{\hat \theta}} \left[  \sum_{\omega \in \Omega} F_{ \theta, a, \omega}    p_\omega - c_{ \theta, a}  \mid  p \in \hat\pa^{\hat \theta,a'} \right] \label{eq:lem_finite_support_ic_8}\\
		& \hspace{-1.1cm}  = \sum_{a' \in A}  \Pr_{p \sim \gamma^{\hat \theta}} \left\{ p\in \hat\pa^{\hat\theta,a'}  \right\} \max_{a \in A} \left\{  \sum_{\omega \in \Omega} F_{ \theta, a, \omega}  \mathbb{E}_{p \sim \gamma^{\hat \theta}}  \left[  p_\omega \mid  p \in \hat\pa^{\hat \theta,a'} \right]- c_{ \theta, a}  \right\} \label{eq:lem_finite_support_ic_9}\\
		& \hspace{-1.1cm}  = \sum_{p \in \supp(\tilde\gamma^{\hat \theta})} \tilde\gamma_p^{\hat \theta} \max_{a \in A} \left\{  \sum_{\omega \in \Omega} F_{ \theta, a, \omega}  p_\omega - c_{ \theta, a}  \right\} \label{eq:lem_finite_support_ic_10}\\
		& \hspace{-1.1cm}  = \sum_{p \in \supp(\tilde\gamma^{\hat \theta})} \tilde\gamma_p^{\hat \theta}  \left( \sum_{\omega \in \Omega} F_{ \theta, b^\theta(p), \omega}  p_\omega - c_{ \theta, b^\theta(p)} \right) \label{eq:lem_finite_support_ic_11} \\
		&  \hspace{-1.1cm}  =  \mathbb{E}_{p \sim \tilde\gamma^{\hat \theta}} \left[ \sum_{\omega \in \Omega} F_{\theta, b^\theta(p), \omega} p_\omega - c_{\theta, b^\theta(p)}  \right] , \label{eq:lem_finite_support_ic_12}
		\end{align}
	\end{subequations}
\endgroup
	where Equation~\eqref{eq:lem_finite_support_ic_2} holds by definition of $\tilde\Gamma = \{ \tilde\gamma^\theta \}_{\theta \in \Theta}$, Equation~\eqref{eq:lem_finite_support_ic_3} holds by the fact that $b^\theta(p) = a$ for contracts $p \in \hat\pa^{\theta, a}$, Equation~\eqref{eq:lem_finite_support_ic_4} holds since the sets $\hat\pa_{\theta, a}$ define a partition of $\mathbb{R}_+^m$, Equation~\eqref{eq:lem_finite_support_ic_5} holds by the fact that $\Gamma = \{ \gamma^\theta \}_{\theta \in \Theta}$ satisfies Constraints~\eqref{eq:opt_rand_ic} in Problem~\ref{prob:opt_rand}, Equation~\eqref{eq:lem_finite_support_ic_6} holds since action $a^\theta(p)$ is IC given contract $p$, Equation~\eqref{eq:lem_finite_support_ic_7} holds since sets $\hat\pa_{\hat \theta, a}$ define a partition of $\mathbb{R}_+^m$, Equation~\eqref{eq:lem_finite_support_ic_8} holds by Jensen's inequality and the fact that $\max$ is convex, Equation~\eqref{eq:lem_finite_support_ic_9} holds since $F_{\theta, a, \omega}$ and $c_{\theta,a}$ do not depend on $p$, Equation~\eqref{eq:lem_finite_support_ic_10} holds by definition of $\tilde\Gamma = \{ \tilde\gamma^\theta \}_{\theta \in \Theta}$, while Equation~\eqref{eq:lem_finite_support_ic_11} holds by definition of $b^\theta(p)$.
	This concludes the proof.
\end{proof}

\lemmaRandTwo*

\begin{proof}
	We show that for each $\epsilon>0$, there exists a menu $ \Gamma=\{  \gamma^\theta \}_{\theta \in \Theta}$ such that the required condition are satisfied.
	By the definition of supremum for each $\eta>0$, there exists a menu $\bar \Gamma=\{ \bar \gamma^\theta \}_{\theta \in \Theta}$ that provides  expected principal utility at least $1-\eta$. 
	Then, exploiting Lemma~\ref{lem:finite_support}, we can derive a menu $\tilde \Gamma=\{ \tilde \gamma^\theta \}_{\theta \in \Theta}$ with at least the same utility and such that for every $\theta \in \Theta$, it holds $\left| \supp (\tilde \gamma^\theta) \cap \hat\pa^{\theta, a} \right| \leq 1$ for all $a \in A$.  	Let $\tilde \pa^\theta \coloneqq \supp(\tilde \gamma^\theta) $ for each $\theta \in \Theta$. Moreover, for each $\theta$ and $a$ such that $ \supp (\tilde \gamma^\theta) \cap \hat\pa^{\theta, a}  \neq \emptyset$, let  $\tilde p^{\theta,a} =\supp (\tilde \gamma^\theta) \cap \hat\pa^{\theta, a}  \neq \emptyset$. Notice that, for each $\theta$, $\tilde\gamma^\theta$ is supported on $\{\tilde p^{\theta,a}\}_{a \in A:\supp (\tilde \gamma^\theta) \cap \hat\pa^{\theta, a}  \neq \emptyset}$. 
	
	Let $F \coloneqq \min_{\theta \in \Theta, a \in A, \omega \in \Omega: F_{\theta,a,\omega}>0} F_{\theta,a,\omega}$ be the minimum value of the probabilities $F_{\theta,a,\omega}$. \footnote{We assume that each outcome $\omega\in \Omega$ occurs with positive probability at least for a type $\theta \in \Theta$ and an action $a \in A$. If it is not the case, we can safely remove outcome $\omega$ since it never occurs.}  Moreover, let $ Y= \min_{\theta \in \Theta} \mu_\theta$. \footnote{We assume that for each type $\theta \in \Theta$ it holds $\mu_\theta>0$. Otherwise, we can safely remove the type $\theta$ since it never occurs.} Notice that $F \geq 2^{-\poly(I)}$ and $Y \geq 2^{-\poly(I)}$, where $I$ is the instance size.
	Then, we show that for each $\theta \in \Theta$, $p \in \tilde \pa^\theta$, and $\omega \in \Omega$, it holds $\tilde \gamma^{\theta}_p p_\omega \le 4/FY$. 
	By contradiction, assume that there exists  $\tilde \theta \in \Theta$, $\tilde p \in \tilde \pa^\theta$, and $\tilde \omega \in \Omega$ such that
	$ \tilde \gamma^{\tilde \theta}_{\tilde p}\tilde p_{\tilde \omega} > \frac{4}{F  Y } $.
	%
	Then, by the definition of $F$, there exists an agent's type $\theta'$ and an action $a'$, such that $F_{\theta',a',\tilde \omega}\ge F$. The agent of type $\theta'$ reporting type $\tilde \theta$ to the principal would incur an expected payment 
	\begin{align*}
	\sum_{p \in \tilde \pa^{ \tilde \theta}} \tilde \gamma_p^{\tilde \theta} \left[  \sum_{\omega \in \Omega} F_{\theta', b^{\theta'}(p), \omega} p_\omega \right] &  \geq  \sum_{p \in \tilde \pa^{\tilde \theta}} \tilde \gamma_p^{\tilde \theta} \left[  \sum_{\omega \in \Omega} F_{\theta', a', \omega} p_\omega -c_{\theta',a'}\right] + \sum_{p \in \tilde \pa^{\tilde \theta}} \tilde \gamma_p^{\tilde  \theta}  c_{\theta', b^{\theta'}(p)} \\
	&  \geq  \sum_{p \in \tilde \pa^{\tilde \theta}} \tilde \gamma_p^{\tilde \theta}  \sum_{\omega \in \Omega} F_{\theta', a', \omega} p_\omega  -1 \\
	& \ge  \tilde  \gamma_{\tilde p}^{\tilde \theta}  F_{\theta', a', \tilde \omega} \tilde p_{\tilde \omega} -1 \\
	& \ge \tilde \gamma_{\tilde p}^{\tilde \theta}\tilde p_{\tilde \omega} F -1> \frac{4}{F Y}  \cdot F-1  \ge 3/Y.
	\end{align*}
	Thus, by the DSIC Constraints of type $\theta'$, it must be the case that the expected payment to agent's type $\theta'$, \emph{i.e.},
	$\sum_{p \in \tilde \pa^{\theta'}} \tilde \gamma_p^{\theta'} \left[  \sum_{\omega \in \Omega} F_{\theta', b^{\theta'}(p), \omega} p_\omega \right]$, under menu $\tilde \Gamma = \{ \tilde \gamma^\theta \}_{\theta \in \Theta}$ is at least $3/Y-1\ge 2/Y$, otherwise the constraints would be violated (recall that costs are in the range $[0,1]$).
	Hence, the expected payment of the principal (over the agent's types) is at least \[\sum_{\theta \in \Theta} \mu_\theta \sum_{p \in \tilde \pa^\theta} \tilde \gamma_p^\theta   \sum_{\omega \in \Omega} F_{\theta, b^\theta(p), \omega} p_\omega  \ge \mu_{\theta'}   \sum_{p \in \tilde \pa^{\theta'}} \tilde \gamma_p^{\theta'} \left[  \sum_{\omega \in \Omega} F_{\theta', b^{\theta'}(p), \omega} p_\omega \right]\ge 2. \]
	As a result, since the principal's expected reward is at most $1$, the principal's expected utility must be less than $-1$, contradicting that the expected utility of the principal with menu $\tilde \Gamma = \{ \tilde\gamma^\theta \}_{\theta \in \Theta}$ is at least $\SUP-\eta\ge-\eta>-1$ for $\eta <1$.

	Let $\mathcal{Z}$ the set of couples $(\theta,a)\in \Theta \times A$ such that there exists a $p \in  \hat\pa^{\theta, a} $ with $\tilde \gamma^\theta_p>0$. Notice that  $|\mathcal{Z}|\le n \ell$.
	Then, for each couple $(\theta,a) \in \mathcal{Z}$ there exists a payment $p^{\theta,a}$ in $\hat\pa^{\theta, a}$ such that $p^{\theta,a} \le D$, where $D=2^{\poly(I)}$. This holds since $\hat \pa^{\theta,a}$ is non-empty by the definition of $\mathcal{Z}$, while $p\le D$ since $\hat \pa^{\theta,a}$ is defined by linear inequalities such that the number of variables, the number of constraints, and the size of the binary representation of the coefficients can be bounded by a polynomial function of the instance size $I$~\citep{bertsimas1997introduction}. 
	
	Now we show how to build a menu $\Gamma=\{\gamma^\theta\}_{\theta \in \Theta}$ with utility at least $\SUP-\epsilon$ such that conditions i) and ii) hold for each $\theta$.
	Consider the DSIC menu of contracts $\hat \Gamma=\{\hat \gamma^\theta\}_{\theta \in \Theta}$ such that for each $\theta'\in \Theta$ it holds $\hat \gamma^{\theta'}_{p^{\theta,a}}=1/|\mathcal{Z}|$ for each  $(\theta,a) \in \mathcal{Z}$.
	Then, consider the menu $\mathring{ \Gamma}=\{\mathring{ \gamma}^\theta\}_{\theta \in \Theta}$ such that for each $p\in \mathbb{R}_+^m$ it holds $\mathring{ \gamma}_p=\tilde \gamma_p(1-q)+\hat \gamma_p q$, where $q>0$ will be defined in the following. By the linearity of the DSIC Constraints $\mathring{ \Gamma}$ is a DSIC menu of randomized contracts with principal's utility at least $ (1-q)(\SUP-\eta) - q \sum_{\theta \in \Theta} \mu_\theta \mathbb{E}_{p \sim \hat \gamma^\theta} \left[  \sum_{\omega \in \Omega} F_{\theta, b^\theta(p), \omega} r_\omega - \sum_{\omega \in \Omega} F_{\theta, b^\theta(p), \omega} p_\omega  \right]\ge \SUP -\eta-q-q\eta-qD$, where the inequalities follows from  $\mathbb{E}_{p \sim \hat \gamma^\theta} \left[  \sum_{\omega \in \Omega} F_{\theta, b^\theta(p), \omega} r_\omega - \sum_{\omega \in \Omega} F_{\theta, b^\theta(p), \omega} p_\omega  \right]\ge -D$ since distributions $\hat \gamma^\theta$ are supported on contracts with payments at most $D$.
	Hence, setting $\eta=q=\frac{\epsilon}{4D}$, we have that $\mathring{\Gamma}=\{\mathring{ \gamma}^\theta\}_{\theta \in \Theta}$ provides principal's utility at least $\SUP-\epsilon$.
	
	We conclude the proof exploiting an argument equivalent to the one in Lemma~\ref{lem:finite_support} to build a menu  $\Gamma=\{\gamma^\theta\}_{\theta \in \Theta}$ that satisfies the required conditions.
	As shown in the proof of Lemma~\ref{lem:finite_support}, given the menu $\mathring{\Gamma}$, it is possible to build a menu of randomized contracts $ \Gamma $ with at least the same principal's utility such that for every $\theta \in \Theta$, it holds $\left| \supp (\gamma^\theta) \cap \hat\pa^{\theta, a} \right| \leq 1$ for all $a \in A$. We need to prove that also  (ii)  holds. For each $\theta\in \Theta$ and $a \in A$ such that $\supp (\mathring{\gamma}^\theta) \cap \hat\pa^{\theta, a} \neq \varnothing$,  let $\bar p^{\theta,a}\coloneqq \mathbb{E}_{p \sim \mathring{\gamma}^\theta} \left[ p \mid p \in \hat\pa^{\theta,a} \right]$.
	By the definition of $\Gamma$ in Lemma~\ref{lem:finite_support} for each $\theta \in \Theta$ it holds $ \supp (\gamma^\theta)= \{\bar p^{\theta,a}\}_{a \in A:\supp (\mathring{\gamma}^\theta) \cap \hat\pa^{\theta, a} \neq \varnothing}$.
	To conclude the proof, it is sufficient to see that for each $\theta \in \Theta$ and $a \in A$ such that $\supp (\mathring{\gamma}^\theta) \cap \hat\pa^{\theta, a} \neq \varnothing$, it holds $(\theta,a) \in \mathcal{Z}$. Hence $\hat{\gamma}^\theta_{ \bar p^{\theta,a}}= \frac{1}{|\mathcal{Z}|}$ and $\mathring{\gamma}^\theta_{\bar p^{\theta,a}}\ge \frac{q}{|\mathcal{Z}|}$.
	For each $\theta$ and $a$, let $\mathcal{Z}^{\theta,a}$ be the set of couples $(\theta',a') \in \mathcal{Z}$ such that $p^{\theta',a'} \in \pa^{\theta,a}$. Notice that $(\theta,a)\in \mathcal{Z}^{\theta,a}$.
	Thus, for each $\theta \in \Theta$ and $a \in A$ such that $\supp (\mathring{\gamma}^\theta) \cap \hat\pa^{\theta, a} \neq \varnothing$ and $\omega \in \Omega$, it holds 
	\begin{align*}
	\bar p^{\theta,a}_{\omega}=  \mathbb{E}_{p \sim \mathring{\gamma}^\theta} \left[ p \mid p \in \hat\pa^{\theta,a} \right]&= \frac{ (1-q) \tilde \gamma^{\theta}_{\tilde p^{\theta,a}} \tilde p^{\theta,a}+\sum_{(\theta',a') \in \mathcal{Z}^{\theta,a}} \frac{q}{|\mathcal{Z}|}p^{\theta',a'}_\omega}{(1-q) \tilde \gamma^{\theta}_{\tilde p^{\theta,a}} +\sum_{(\theta',a') \in \mathcal{Z}^{\theta,a}} \frac{q}{|\mathcal{Z}|}}\\
	&\le \frac{\frac{4}{FY}+ q D}{\frac{q}{|\mathcal{Z}|}} \\
	&\le \frac{4 n \ell D}{\epsilon}  \left(\frac{4}{FY}+ D\right) \in O (\frac{D^2n \ell}{FY\epsilon}),
	\end{align*}
	where we assume $\tilde p^{\theta,a}$ that is an arbitrary contract in $\hat\pa^{\theta, a}$  when $\supp (\tilde \gamma^\theta) \cap \hat\pa^{\theta, a}  = \emptyset$ and $\tilde p^{\theta,a}$ is not been defined previously. 
	Hence, for each $\theta \in \Theta$, $p \in \supp(\gamma^\theta)$ and $\omega \in \Omega$, it holds $p_\omega =O( \frac{1}{\epsilon} \cdot 2^{\poly(I)})$.
\end{proof}

\lemmaRandThree*

\begin{proof}
	By Lemma~\ref{lem:bounded_payments}, there exists a menu of randomized contracts $\Gamma = \{ \gamma^\theta \}_{\theta \in \Theta}$ with principal's utility at least $\SUP-\epsilon$ such that, for every $\theta \in \Theta$, it holds $\left| \supp (\gamma^\theta) \cap \hat\pa^{\theta, a} \right| \leq 1$ for all $a \in A$ and $\supp(\gamma^\theta) \subseteq \pa^{\epsilon}$.
	We show that it is possible to build another optimal menu of randomized contracts $\tilde\Gamma = \{ \tilde\gamma^\theta \}_{\theta \in \Theta}$ such that $\supp(\tilde\gamma^\theta) \subseteq \pa^{*,\epsilon}$ for every $\theta \in \Theta$.
	
	In order to define $\tilde\Gamma = \{ \tilde \gamma^\theta \}_{\theta \in \Theta}$, for every $\theta \in \Theta$ and contract $p \in \supp(\gamma^\theta)$, let $\avec^p\defeq(a^p_{\theta'} )_{\theta' \in \Theta}$, where $a^p_{\theta'}\defeq b^{\theta'}(p)$.
	Then, given a $\theta \in \Theta$ and a $p \in \supp(\gamma^\theta)$, since $p \in \hat \pa^{\avec^p,\epsilon}$, $\hat \pa^{\avec^p,\epsilon}\subseteq \pa^{\avec^p,\epsilon}$, and $\pa^{\avec^p,\epsilon}$ is a closed polytope, we can apply Carathéodory's theorem to conclude that there exists a probability distribution $\gamma^{\theta, p} \in \Delta_{\pa^{*,\epsilon}}$ that is supported on the vertexes of the polytope $\pa^{\avec^p,\epsilon}$, \emph{i.e.}, it holds $\supp(\gamma^{\theta,p}) \subseteq V(\pa^{\avec^p,\epsilon})$, and such that $\mathbb{E}_{p' \sim \gamma^{\theta, p}} \left[ p' \right] = \sum_{p' \in \supp(\gamma^{\theta, p})} \gamma^{\theta, p}_{p'} \, p' = p$.
	%
	Then, for every $\theta \in \Theta$, we define the distribution $\tilde{\gamma}^\theta \in \Delta_{\pa^{*,\epsilon}}$ so that, for every $p \in \pa^{*,\epsilon}$, it holds:
	\[
	\tilde\gamma^\theta_p \coloneqq \sum_{p'  \in \supp(\gamma^{\theta})}  \gamma^\theta_{p'} \, \gamma^{\theta,p'}_p.
	\]
	
	It remains to show that $\tilde\Gamma = \{ \gamma^\theta \}_{\theta \in \Theta}$ is indeed a menu of randomized contracts with principal's utility at least $\SUP-\epsilon$.
	First, it is immediate to see that the $\tilde\gamma^\theta$ are valid probability distributions over $\pa^{*,\epsilon}$. Formally, for each $\theta \in \Theta$:
	\begin{align*}
	\sum_{p \in \pa^{*,\epsilon}} \tilde\gamma^\theta_p & = \sum_{p \in \pa^{*,\epsilon}} \sum_{p'  \in \supp(\gamma^{\theta})} \gamma^\theta_{p'} \,  \gamma^{\theta,p'}_p= \sum_{p'  \in \supp(\gamma^{\theta})} \gamma^\theta_{p'} \sum_{p \in \pa^{*,\epsilon}} \gamma^{\theta,p'}_p \\
	& = \sum_{p'  \in \supp(\gamma^{\theta})} \gamma^\theta_{p'} = 1.
	\end{align*}
	Next, we prove that $\tilde\Gamma = \{ \gamma^\theta \}_{\theta \in \Theta}$ satisfies Constraints~\eqref{eq:opt_rand_ic} in Problem~\ref{prob:opt_rand}.
	First, notice that, for every $\theta \in \Theta$, $\hat\theta \in \Theta$, and $p \in \supp(\gamma^\theta)$, it holds:
	\begin{subequations}\label{eq:lem_finite_support_star_ic}
		\begin{align}
		\sum_{p' \in \supp(\gamma^{\theta,p})} \gamma^{\theta, p}_{p'} & \left( \sum_{\omega \in \Omega} F_{\hat \theta, b^{\hat \theta}(p'), \omega} \, {p}'_\omega - c_{\hat \theta, b^{\hat \theta}(p')} \right) \\
		&  = \sum_{p' \in \supp(\gamma^{\theta,p})} \gamma^{\theta, p}_{p'} \left( \sum_{\omega \in \Omega} F_{\hat \theta, a^{p}_{\hat \theta}, \omega} \, {p}'_\omega - c_{\hat \theta, a^{p}_{\hat \theta}} \right) \label{eq:lem_finite_support_star_ic_1}\\
		& =  \sum_{\omega \in \Omega} F_{\hat \theta, a^{p}_{\hat \theta}, \omega} \sum_{p' \in \supp(\gamma^{\theta,p})} \gamma^{\theta, p}_{p'} \, {p}'_\omega - c_{\hat \theta, a^p_{\hat \theta}} \label{eq:lem_finite_support_star_ic_2}\\
		& = \sum_{\omega \in \Omega} F_{\hat \theta, a^p_{\hat \theta}, \omega} \, p_\omega- c_{\hat \theta, a^p_{\hat \theta}},\label{eq:lem_finite_support_star_ic_3}
		\end{align}
	\end{subequations}
	where Equation~\eqref{eq:lem_finite_support_star_ic_1} holds since $ a^p_{\hat\theta} \in \mathcal{B}^{\hat\theta}_{p'}$ for contracts $p' \in \supp(\gamma^{\theta,p}) \subseteq V(\pa^{\avec^p})$ and Equation~\eqref{eq:lem_finite_support_star_ic_3} holds by definition of $\gamma^{\theta,p}$.
	Moreover, for every $\theta \in \Theta$ and $\hat \theta \in \Theta$, it is the case that:
	\begingroup
	\allowdisplaybreaks
	\begin{subequations}\label{eq:lem_finite_support_star}
		\begin{align}
		\sum_{p \in \pa^{*,\epsilon}} \tilde\gamma^{\hat \theta}_p& \left(  \sum_{\omega \in \Omega}  F_{\theta, b^\theta(p), \omega} p_\omega - \right.  \left. c_{\theta, b^\theta(p)}   \right)  \\
		&= \sum_{p \in \pa^{*,\epsilon}}  \sum_{p'  \in \supp(\gamma^{\hat \theta})} \gamma^{\hat \theta}_{p'} \,  \gamma^{\hat \theta,p'}_p\left(  \sum_{\omega \in \Omega} F_{\theta, b^\theta(p), \omega} p_\omega - c_{\theta, b^\theta(p)}   \right) \label{eq:lem_finite_support_star_ic_bis_1} \\
		& =  \sum_{p'  \in \supp(\gamma^{\hat \theta})} \gamma^{\hat \theta}_{p'} \sum_{p \in \pa^{*,\epsilon}} \gamma^{\hat \theta,p'}_p \left(  \sum_{\omega \in \Omega} F_{\theta, b^\theta(p), \omega} p_\omega - c_{\theta, b^\theta(p)}   \right) \label{eq:lem_finite_support_star_ic_bis_2} \\
		& =  \sum_{p'  \in \supp(\gamma^{\hat \theta})} \gamma^{\hat \theta}_{p'}\sum_{p \in \supp(\gamma^{\hat \theta,p'})} \gamma^{\hat \theta,p'}_p \left(  \sum_{\omega \in \Omega} F_{\theta, b^\theta(p), \omega} p_\omega - c_{\theta, b^\theta(p)}   \right) \label{eq:lem_finite_support_star_ic_bis_3} \\
		& =  \sum_{p'  \in \supp(\gamma^{\hat \theta})} \gamma^{\hat \theta}_{p'} \left( \sum_{\omega \in \Omega} F_{\theta, a^{p'}_\theta, \omega} \, p'_\omega - c_{\theta, a^{p'}_\theta} \right) \label{eq:lem_finite_support_star_ic_bis_4} \\
		& =  \sum_{p \in \supp(\gamma^{\hat \theta})}  \gamma^{\hat \theta}_{p} \left( \sum_{\omega \in \Omega} F_{\theta, b^\theta(p), \omega} \, p_\omega - c_{\theta, b^\theta(p)} \right) \label{eq:lem_finite_support_star_ic_bis_5} \\
		& = \mathbb{E}_{p \sim \gamma^{\hat \theta}} \left[  \sum_{\omega \in \Omega} F_{\theta, b^\theta(p), \omega} \, p_\omega - c_{\theta, b^\theta(p)} \right], \label{eq:lem_finite_support_star_ic_bis_6} 
		\end{align}
	\end{subequations}
\endgroup
	where Equation~\eqref{eq:lem_finite_support_star_ic_bis_1} holds by definition of $\tilde\gamma^{\hat \theta}$, Equation~\eqref{eq:lem_finite_support_star_ic_bis_3}  holds since $\supp(\gamma^{\hat\theta, {p'}}) \subseteq V(\pa^{\avec^{p'}}) \subseteq \pa^{*,\epsilon}$, Equation~\eqref{eq:lem_finite_support_star_ic_bis_4} holds by Equation~\eqref{eq:lem_finite_support_star_ic}, while Equation~\eqref{eq:lem_finite_support_star_ic_bis_5} holds since $a^{p}_\theta = b^{\theta}(p)$.
	Thus, we can conclude that for every $\theta \in \Theta$ and $\hat \theta \in \Theta : \hat \theta \neq \theta$, it holds:
	\begin{align*}
	\sum_{p \in \pa^{*,\epsilon}} \tilde\gamma^{\theta}_p \left(  \sum_{\omega \in \Omega} F_{\theta, b^\theta(p), \omega} p_\omega - c_{\theta, b^\theta(p)}   \right)  & = \mathbb{E}_{p \sim \gamma^{\theta}} \left[  \sum_{\omega \in \Omega} F_{\theta, b^\theta(p), \omega} \, p_\omega - c_{\theta, b^\theta(p)} \right] \\
	& \geq \mathbb{E}_{p \sim \gamma^{\hat \theta}} \left[  \sum_{\omega \in \Omega} F_{\theta, b^\theta(p), \omega} \, p_\omega - c_{\theta, b^\theta(p)} \right] \\
	& = \sum_{p \in \pa^{*,\epsilon}} \tilde\gamma^{\hat \theta} \left(  \sum_{\omega \in \Omega} F_{\theta, b^\theta(p), \omega} p_\omega - c_{\theta, b^\theta(p)}   \right),
	\end{align*}
	where the equalities comes from Equation \eqref{eq:lem_finite_support_star} and the inequality from DSIC of $\Gamma$. 
	Finally, we need to show that the principal's expected utility does \emph{not} decrease.
	Formally, we have:
	\begin{subequations}\label{eq:lem_finite_support_star_ic_tris}
		\begin{align}
		\sum_{\theta \in \Theta}& \mu_\theta  \sum_{p \in \pa^{*,\epsilon}} \tilde\gamma_p^\theta  \left(  \sum_{\omega \in \Omega} F_{\theta, b^\theta(p), \omega} r_\omega - \sum_{\omega \in \Omega} F_{\theta, b^\theta(p), \omega} p_\omega  \right) \nonumber \\
		& = \sum_{\theta \in \Theta} \mu_\theta \sum_{p \in \pa^{*,\epsilon}} \sum_{p'  \in \supp(\gamma^{\theta})}  \gamma^\theta_{p'} \, \gamma^{\theta,p'}_p \left(  \sum_{\omega \in \Omega} F_{\theta, b^\theta(p), \omega} r_\omega - \sum_{\omega \in \Omega} F_{\theta, b^\theta(p), \omega} p_\omega  \right) \label{eq:lem_finite_support_star_ic_tris_1}  \\
		& = \sum_{\theta \in \Theta} \mu_\theta\sum_{p'  \in \supp(\gamma^{\theta})}  \gamma^\theta_{p'} \sum_{p \in \supp(\gamma^{\theta,p'})} \gamma^{\theta, p'}_p \left( \sum_{\omega \in \Omega} F_{\theta, b^\theta(p), \omega} r_\omega - \sum_{\omega \in \Omega} F_{\theta, b^\theta(p), \omega} p_\omega  \right) \label{eq:lem_finite_support_star_ic_tris_2} \\
		& \geq  \sum_{\theta \in \Theta} \mu_\theta\sum_{p'  \in \supp(\gamma^{\theta})}  \gamma^\theta_{p'} \sum_{p \in \supp(\gamma^{\theta,p'})} \gamma^{\theta, p'}_p \left( \sum_{\omega \in \Omega} F_{\theta, a^{p'}_\theta, \omega} r_\omega - \sum_{\omega \in \Omega} F_{\theta, a^{p'}_\theta, \omega} p_\omega  \right)  \label{eq:lem_finite_support_star_ic_tris_3} \\
		& = \sum_{\theta \in \Theta} \mu_\theta \sum_{p'  \in \supp(\gamma^{\theta})}  \gamma^\theta_{p'}\left( \sum_{\omega \in \Omega} F_{\theta, a^{p'}_\theta, \omega} r_\omega - \sum_{\omega \in \Omega} F_{\theta, a^{p'}_\theta, \omega} \sum_{p \in \supp(\gamma^{\theta,p'})} \gamma^{\theta, p'}_p \, p_\omega  \right) \label{eq:lem_finite_support_star_ic_tris_4}  \\
		& = \sum_{\theta \in \Theta} \mu_\theta \sum_{p'  \in \supp(\gamma^{\theta})}  \gamma^\theta_{p'}  \left( \sum_{\omega \in \Omega} F_{\theta, a^{p'}_\theta, \omega} r_\omega -\sum_{\omega \in \Omega} F_{\theta, a^{p'}_\theta, \omega} \, p'_\omega \right) \label{eq:lem_finite_support_star_ic_tris_5} \\
		& = \sum_{\theta \in \Theta} \mu_\theta \sum_{p \in \supp(\gamma^\theta)} \gamma^\theta_{p} \left(\sum_{\omega \in \Omega} F_{\theta, b^\theta(p), \omega} r_\omega - \sum_{\omega \in \Omega} F_{\theta, b^\theta(p), \omega} \, p_\omega \right)\label{eq:lem_finite_support_star_ic_tris_6}  \\
		& = \sum_{\theta \in \Theta} \mu_\theta \mathbb{E}_{p \sim \gamma^\theta} \left[  \sum_{\omega \in \Omega} F_{\theta, b^\theta(p), \omega} r_\omega - \sum_{\omega \in \Omega} F_{\theta, b^\theta(p), \omega} p_\omega  \right], \label{eq:lem_finite_support_star_ic_tris_7} 
		\end{align}
	\end{subequations}
	where Equation~\eqref{eq:lem_finite_support_star_ic_tris_1} holds by definition of $\tilde{\gamma}^\theta$, Equation~\eqref{eq:lem_finite_support_star_ic_tris_2} holds by re-arranging the summations and by $\supp(\gamma^{\theta, p'}) \subseteq V(\pa^{\avec^{p'}}) \subseteq \pa^{*,\epsilon}$, Equation~\eqref{eq:lem_finite_support_star_ic_tris_3} holds since $a^{p'}_\theta \in \mathcal{B}_{p}^\theta$ for each $p \in \supp(\gamma^{\theta,p'})$ and $b^\theta(p)$ breaks ties in favor of the principal, Equation~\eqref{eq:lem_finite_support_star_ic_tris_5} holds by definition of $\gamma^{\theta, p'}$, while Equation~\eqref{eq:lem_finite_support_star_ic_tris_6} holds since $a^p_\theta \coloneqq b^\theta(p)$.
	This concludes the proof.
\end{proof}


\end{document}